\numberwithin{equation}{section}
\newtheorem{theorem}{Theorem}[section]
\newtheorem{lemma}[theorem]{Lemma}
\newtheorem{proposition}[theorem]{Proposition}
\newtheorem{corollary}[theorem]{Corollary}
\theoremstyle{definition}
\newtheorem{example}[theorem]{Example}
\newtheorem{definition}[theorem]{Definition}
\theoremstyle{remark}
\newtheorem{remark}[theorem]{Remark}
\let\astar=\star
\let\star=\astar
\newcommand{\wickk}[1]{:\!{#1}\!:}
\date{\today}
\begin{document}

\title{Conservation Law and Trace Anomaly for the Stress Energy Tensor of a Self-Interacting Scalar Field}

\author{
Beatrice Costeri$^{\dagger}$\thanks{$^{\dagger}$BC:
		Dipartimento di Fisica,
		Universit\`a degli Studi di Pavia \& INFN and INdAM, Sezione di Pavia,
		Via Bassi 6,
		I-27100 Pavia,
		Italia;
		\mbox{beatrice.costeri01@universitadipavia.it}}
	\and
\underline{Claudio Dappiaggi}$^{\ast}$ \thanks{$^{\ast}$CD: Dipartimento di Fisica,
	Universit\`a degli Studi di Pavia \& INFN and INdAM, Sezione di Pavia, 
	Via Bassi 6,
	I-27100 Pavia,
	Italia;
	\mbox{claudio.dappiaggi@unipv.it}, corresponding author}
\and
Michele Goi$^{\circ}$\thanks{$^{\circ}$MG: Dipartimento di Fisica,
		Universit\`a degli Studi di Pavia,
		Via Bassi 6,
		I-27100 Pavia,
		Italia;
	\mbox{michele.goi01@universitadipavia.it}
}}


\renewcommand\footnotemark{}


\maketitle

\begin{abstract}
We consider a self-interacting, massive, real scalar field on a four-dimensional globally hyperbolic spacetime and the associated stress-energy tensor. Using techniques proper of the algebraic approach to perturbative quantum field theory, we study the associated, Wick-ordered, quantum observable. In particular we generalize a construction, first developed in the free field theory scenario by Moretti in \cite{moretti}, aimed at exploiting the existing freedoms in the definition of the classical stress-energy tensor, in order to define a quantum counterpart which is divergence free. We focus on Minkowski spacetime proving that this procedure can be adapted also to cubic or quartic self-interactions, at least up to order $\mathcal{O}(\lambda^3)$ in perturbation theory. We remark that this result can be extended to arbitrary globally hyperbolic spacetimes, although, in this case one needs to exploit the existing regularization freedom in the construction of the Wick ordered stress-energy tensor.
\end{abstract}

\allowdisplaybreaks
\section{Introduction}

In the realm of classical and quantum field theory both on flat and on curved, globally hyperbolic, spacetimes, one of most relevant observables is the stress-energy tensor. On the one hand, it encodes the information of the energy content of the underlying physical system, while, on the other hand, it is the source term for the Einstein equations, hence capturing how matter fields influence the spacetime geometry. At a classical level, among its properties, the most notable one is that, if one considers on-shell field configurations, it is divergence free. This is a feature which encodes energy-momentum conservation and it is essential for Einstein's equations to be well-posed since also the Einstein tensor is per construction divergence free.

When one switches from the classical to the quantum realm, the scenario changes drastically. As a matter of fact, even considering for simplicity a non-interacting real scalar field, the stress-energy tensor is quadratic in the field configuration and therefore its quantum counterpart cannot be defined directly since it would involve ill-defined products of distributions. The way out from this obstruction lies in considering Wick ordered products of the underlying fields. While, on Minkowski spacetime, this procedure is usually implemented in terms of normal ordering of the annihilation and creation operators, on a generic curved background this approach is no longer feasible. 

The algebraic approach to quantum field theory, see {\it e.g.} \cite{brunetti2}, is a framework which is best suited to discuss the quantization of a field theory on a globally hyperbolic spacetime. One can summarize it as a two-step procedure. In the first one, one identifies a suitable $*$-algebra of observables $\mathcal{A}$ which encodes all structural properties, such as covariance, dynamics, locality and the canonical commutation or anti-commutation relations. In the second one, one selects an algebraic state, that is a positive and normalized functional on $\mathcal{A}$, which allows to recover the standard probabilistic interpretation proper of quantum theories via the renown GNS theorem. Among the plethora of possible states, only a handful can be recognized as being physically sensible. They are characterized by the {\em Hadamard property} which is a constraint on the singular structure of the underlying two-point correlation function \cite{khav}. In turn, this guarantees that the quantum fluctuations of all observables are finite and that it is possible to identify a local and covariant algebra of Wick polynomials, see also \cite{hollands, hollands1}. This last feature is particularly noteworthy since it entails the possibility of giving a local and covariant definition of a Wick-ordered, quantum stress energy tensor. Yet, in order to ensure covariance, the procedure of constructing a Wick polynomial consists of the subtraction of the singular, covariant component of the two-point function of a Hadamard state. This is known as Hadamard parametrix and it has two notable properties. On the one hand, it is a bi-solution of the underlying equation of motion only up to a smooth remainder, while, on the other hand, its definition is not unique and this leads to the existence of regularization freedoms in the construction of Wick polynomials \cite{khav}. The first property has notable consequences in the definition of the quantum stress-energy tensor since being divergence free relied at a classical level on choosing on-shell configurations. Therefore this feature is no longer automatically assured when considering the Wick-ordered, quantum stress-energy tensor. Yet, at the same time, consistence with Einstein's equations entails that one cannot give up on it with a light heart. Still considering a free scalar field, this problem has been thoroughly investigated and solved in the literature, for example in \cite{hollands2} using the principle of general local covariance and exploiting the freedoms in the definition of the Hadamard parametrix. 

In this work, we consider instead a complementary approach discussed in \cite{moretti}. The starting point is the observation that, in the definition of the classical stress-energy tensor, there is an underlying freedom. Calling $P_0$ the Klein-Gordon operator and $\phi$ the underlying scalar field, one can always add a contribution of the form $\eta g_{\mu\nu}\phi P_0\phi$, where $\eta\in\mathbb{R}$, while $g$ is the background metric. If one considers at a classical level on-shell configurations, this term is vanishing, conserved and traceless. Yet, when promoting it at a quantum level, these features are lost due to Wick ordering. In \cite{moretti} it is shown that, if and only if one sets $\eta=\frac{1}{3}$, then the Wick-ordered quantum stress-energy tensor is divergence free. The price to pay for this result is that, at the level of trace, an additional, state independent contribution appears. Since this depends only on the mass of the field and on the geometry of the underlying background, this is known as {\em trace (or conformal) anomaly}, see \cite{wald} and see also \cite{Ferrero:2023unz} for a generalized, universal notion of the trace anomaly for theories which are not conformally invariant at the classical level. This feature is not proper only of a scalar field theory but it appears also in the analysis of other models, see {\it e.g.} \cite{Dappiaggi:2009xj,Frob:2019dgf} for the investigation of spinor fields.

As far as free field theories are concerned, these results lead to a satisfactory scenario, but, the situation is less crystal clear if one considers instead interacting models. In this paper we analyze a self-interacting, real and massive scalar field on a four-dimensional globally hyperbolic spacetime and we investigate the structural properties of the associated stress-energy tensor. While, at a classical level nothing changes, the non linear nature of the dynamics alters drastically the behavior of the system at a quantum level. We shall work within a framework known as perturbative algebraic quantum field theory (pAQFT). This has been developed mainly in the past fifteen years, see {\it e.g.} \cite{brunetti2,deutsch,rejzner}, and it represents a mathematical formalization of the perturbative approach to interacting field theories which has the notable property of being directly applicable also to models on curved backgrounds. 

It is worth mentioning that pAQFT has been very successful in making precise several aspects of perturbation theory, especially in connection to Epstein-Glaser renormalization, see also \cite{brunetti1} and in developing an algebraic approach to gauge theories \cite{Fredenhagen:2011an,Fredenhagen:2011mq}. It is also versatile enough to allow proving the convergence of the perturbative series in certain instances such as the sine-Gordon model, see \cite{Bahns:2016zqj,Bahns:2021klc}, but also \cite{Bonicelli:2021uxu, bonicelli} for a recent application of pAQFT to the realm of complex systems. Yet, if one restricts the attention to Minkowski spacetime, pAQFT is still seen by several groups working in theoretical physics as a mathematically complicated framework, which is furthermore intrinsically developed in position space and, therefore, it loses at a computational level several of the advantages brought by Feynman diagrams. We are strongly convinced that this viewpoint does not make justice of pAQFT, in particular of its potentialities in better clarifying some conceptual aspects of quantum field theory. In particular, in this work we will be interested in studying the stress-energy tensor at a quantum level for a real and massive scalar field with a cubic or a quartic self-interaction. Exactly as in the free field scenario, the necessity of replacing classical products of fields with suitable Wick polynomials leads to a loss of the conservation law codified by the stress-energy tensor being divergence free. Using the framework of pAQFT and adapting the approach advocated in \cite{moretti}, we will show that it is still possible to add at a classical level a term which is on-shell vanishing, conserved and traceless. Yet, at a quantum level, it allows us to codify that the quantum stress-energy tensor is divergence free. More precisely we shall prove this statement in Minkowski spacetime up to order $\mathcal{O}(\lambda^3)$ in the coupling constant, although there are no structural hurdles to go at higher order, except for an exponential increase in the computational complexity due to the insurgence of new contributions. The reason for focusing our attention on this background is our desire to make contact with models of interest in hadronic physics as explained below. Yet our approach and all the formulae we obtain can be used on a generic globally hyperbolic spacetime. Yet, in this case, it turns out that, in order to cancel the additional terms in the divergence of the stress-energy tensor due to the interacting potential, we are left with a remaining contribution. This is constructed out of geometric tensors, and, therefore, it is both local and covariant. In Minkowski spacetime, as well as in every maximally symmetric globally hyperbolic background, it is automatically zero, but, in general, one can cancel it by exploiting that the Wick-ordered stress-energy tensor admits a large class of regularization ambiguities, as already discussed in \cite{hollands2}. We feel that this is rather remarkable since, contrary to expectations, it points in the direction that the approach advocated in \cite{moretti} also works for interacting models, provided one takes into account all the ambiguities discussed in \cite{hollands2}. One of the reasons for this unexpected outcome lies in the fact that, after adding an on-shell vanishing, conserved and traceless contribution following \cite{moretti}, the stress-energy tensor becomes quadratic in the underlying fields. However, it is unclear whether this property would survive at higher orders in perturbation theory.

From a physical viewpoint the problem under investigation is of interest since the trace anomaly plays a crucial r\^ole in the computation of a property of the energy momentum tensor (EMT) known as \emph{D-term} -- see \cite{maynard, poly}. Unlike other EMT form factors, the D-term is not framed by fundamental properties of particles, like mass or spin, but, being related to the variation of the spatial components of the spacetime metric, it can be expressed in terms of the stress-energy tensor.  
However, in view of the complexity of hadronic structures and of computations in quantum chromodynamics, it is convenient to study the D-term in the case of simpler theories. For instance, the D-term of a free, scalar field theory on Minkowski spacetime has been computed in the literature -- see \cite{hudson}. Therefore, the next sensible step is to investigate how the D-term associated to the scalar theory changes under an infinitesimally small interaction. This question has been addressed up to one-loop order in \cite{maynard}, where it was shown that the D-term for the $\Phi^4$ theory is strongly affected by interactions. The same finding was obtained for the $\Phi^3$ theory, suggesting that this behavior should be independent from the type of interaction. This discussion entails that the D-term must be the particle property most sensitive to the variations of the dynamics in a system. As a consequence, we expect that it can be retrieved from the trace anomaly of the interacting stress-energy tensor and this represents a key motivation for the realization of this work.


\vskip .2cm

The paper is organized as follows: in Section \ref{Sec: Functional Spaces} we recollect some basic notions concerning the functional spaces which are mainly used in the algebraic approach to perturbative quantum field theories, while in Section \ref{Sec: Hadamard States}, we define Hadamard states. We highlight their microlocal properties and we introduce the notions of Hadamard and of Feynman parametrices, the latter necessary to define a time-ordered product. In Section \ref{Sec: Introduction to pAQFT} we introduce the key ingredients of pAQFT. In Section \ref{Sec: Deformation Quantization} we explain how quantization can be seen as a deformation of a classical algebra of observables. This allows us to discuss interacting theories, see Section \ref{Sec: Interacting Theories in pAQFT}, by means of the S-matrix and of the Bogoliubov map introduced in Section \ref{Sec: Bogoliubov map}. We present an explicit example in Section \ref{Sec: Examples}. In Section \ref{Sec: Free Stress-Energy Tensor} we revert our attention to free field theories and we rewrite the approach of \cite{moretti} in the language of functionals. The main results of this work are discussed instead in Section \ref{Sec: Interacting Stress-Energy Tensor} where we consider a massive, real scalar field with a cubic or a quartic self-interaction. We prove that a generalization of the work of \cite{moretti} to these cases is possible and we show that the Wick-ordered, the quantum stress-energy tensor is divergence free up to order $\mathcal{O}(\lambda^3)$ in perturbation theory. We establish this result under the specific assumption that $v_1(z,z)$ as per Equation \eqref{Eq: v1} is a constant function. This property holds true on maximally symmetric backgrounds, such as Minkowski or de Sitter spacetimes.

In addition we compute for a quartic self-interaction the corrections to the trace anomaly. In the appendix, we survey the key concepts at the heart of microlocal analysis, which represents one of the main mathematical tools that we use in this work.

\subsection{Functional spaces}\label{Sec: Functional Spaces}
This section is devoted to an outline of the preliminary tools necessary for the analysis of a free as well as of an interacting quantum field theory within the formalism of perturbative algebraic quantum field theory -- (p)AQFT. We begin by defining the main objects of such approach, which are \emph{functional spaces}.\\
Let $\mathcal{M}\equiv(\mathcal{M},g)$ be a $d$-dimensional, $d\geq 2$,  globally hyperbolic spacetime and we call {\em configuration space}, the set $\mathcal{E}(\mathcal{M})\equiv C^\infty(\mathcal{M})\equiv C^\infty(\mathcal{M};\mathbb{R})$ endowed with the Fréchet topology induced by the family of seminorms.
\begin{equation*}
    ||f||_{k, C} := \sup \{|f^{(k)}(x)|, x \in C\}. 
\end{equation*}
Adopting a terminology typical of pAQFT, we call \emph{space of functionals} on $\mathcal{E}(\mathcal{M})$, $\mathcal{F}(\mathcal{M})$, the space of smooth functionals from $\mathcal{E}(\mathcal{M})$ to $\mathbb{C}$ in the Bastiani sense -- see \cite[Def. 3.4]{rejzner}. This concept is based mainly on a natural notion of Fréchet derivative, whose definition we recall for the reader's convenience.

\begin{definition}[Functional derivative]
\label{funder}
    Let $\mathcal{M}$ be a globally hyperbolic spacetime and let $\mathcal{F}(\mathcal{M})$ be the space of functionals defined above. Given $F \in \mathcal{F}(\mathcal{M})$, we call \emph{$k-$th order functional derivative of $F$} the functional-valued distribution $F^{(k)} \in \mathcal{E}'(\underbrace{\mathcal{M} \times ... \times \mathcal{M}}_k; \mathcal{F}(\mathcal{M})), k \ge 1$ such that 
    \begin{equation*}
        F^{(k)}(\eta_1 \otimes ... \otimes \eta_k; \phi) := \frac{\partial^{(k)}}{\partial s_1 ... \partial s_k} F(\phi + s_1 \eta_1 + ... + s_k \eta_k) \big \vert_{s_1 = ... = s_k = 0}, 
    \end{equation*}
    for all $\phi, \eta_j \in \mathcal{E}(\mathcal{M})$, $j = 1, ..., k$. Furthermore, we call directional derivative along $\eta \in \mathcal{E}(\mathcal{M})$ as
    \begin{equation}
        \delta_{\eta}: \mathcal{F}(\mathcal{M}) \rightarrow \mathcal{F}(\mathcal{M}), \, \, \, \, [\delta_{\eta} F](\phi) := F^{(1)}(\eta; \phi). 
    \end{equation}
\end{definition}
The structures introduce are tailored to the study of a real scalar field, which is the key objection under investigation in this paper. We remark that it is possible to generalize the whole construction to the case of an arbitrary, finite rank, real or complex vector bundle, but we avoid delving in this topic since it is too far from the scopes of this work.\\

Although the space of functionals that we consider can be endowed with an algebra structure, usually called pointwise product, see Definition \ref{microcaualg} below, one of the pillars of our framework consists of exploiting the possibility of deforming this structure in such a way to encode suitable information on the specific physical model under scrutiny. Yet, this leads to considering the product among the derivatives of the underlying functionals and a suitable class of distributional kernels. Due to H\"ormander's criterion -- see Appendix \ref{First appendix} --, this operation is not always well-defined. Therefore, in order to bypass this hurdle, we need to impose constraints on the singular structure of the functionals. In the following we shall make extensive use of concepts and tools proper of microlocal analysis. A succinct summary is available in Appendix \ref{First appendix}, while an extensive analysis can be found in \cite{horm1}.

\begin{definition}[Microcausal functionals]
\label{microfun}
    Let $\mathcal{M}$ be a globally hyperbolic spacetime and let $\mathcal{F}(\mathcal{M})$ be the space of functionals of a real scalar field thereon. We call \emph{microcausal functionals} the elements of 
    \begin{equation}
        \mathcal{F}_{\mu c}(\mathcal{M}) := \left \{ F \in \mathcal{F}(\mathcal{M}) \, \vert \, F^{(k)} \in \mathcal{E}'(\mathcal{M}^k), WF(F^{(k)}) \subset G_k (\mathcal{M}), \forall k \in \mathbb{N} \right\},
    \end{equation}
    where $F^{(k)}$ denotes the $k-$th functional derivative of $F$ as per Definition \ref{funder} while 
    \begin{equation*}
        G_k(\mathcal{M}) := T^*(\underbrace{\mathcal{M} \times ... \times \mathcal{M}}_k \setminus \left( \bigcup_{x \in \mathcal{M}} (V^+_x)^k\cup \bigcup_{x \in \mathcal{M}} (V^-_x)^k \right) ), 
    \end{equation*} 
    $(V^{\pm}_x)^k$ being the counterpart in $T^*_x (\underbrace{\mathcal{M} \times ... \times \mathcal{M}}_k)$ of the future ($+$) and past ($-$) light cones at $x \in \mathcal{M}$. In addition we denote by $\mathcal{P}_{\mu c}(\mathcal{M})$ the collection of all {\em polynomial} functionals, namely $F\in\mathcal{P}_{\mu c}(\mathcal{M})$ if there exists $\overline{k}\in\mathbb{N}$ such that $F^{(k)}=0$ for all $k\geq\overline{k}$. 
\end{definition}
\noindent Among these functionals, two notable subclasses can be identified. 
\begin{definition}[Regular functionals]
     Let $\mathcal{M}$ be a globally hyperbolic spacetime and let $\mathcal{F}_{\mu c}(\mathcal{M})$ be as per Definition \ref{microfun}. A \emph{regular functional} is an element of 
     \begin{equation}
         \mathcal{F}_{reg}(\mathcal{M}) := \left\{F \in \mathcal{F}_{\mu c}(\mathcal{M}) \; \vert \; F^{(k)} \in \mathcal{D}(\mathcal{M}^k) \hookrightarrow \mathcal{E}'(\mathcal{M}^k), \forall k \in \mathbb{N} \right\}.
     \end{equation}
\end{definition}

\begin{definition}[Local functionals]\label{Def: local functionals}
     Let $\mathcal{M}$ be a globally hyperbolic spacetime and let $\mathcal{F}_{\mu c}(\mathcal{M})$ be the space of microcausal functionals as per Definition \ref{microfun}. Let us denote the spacetime support of a functional $F \in \mathcal{F}_{\mu c}(\mathcal{M})$ by
     \begin{equation}\label{Eq: support of a functional}
         supp(F) :=\{ x \in \mathcal{M} \, \vert \, \forall U \in \mathcal{N}_x, \exists \eta, \theta \in \mathcal{E}(\mathcal{M}) : supp(\eta) \subset U, F(\eta + \theta) \ne F({\theta})\}, 
     \end{equation}
     $\mathcal{N}_x$ being the set of all open neighborhoods of $x \in \mathcal{M}$. We call \emph{local functional} an element of 
     \begin{equation}\label{Eq: local functionals}
         \mathcal{F}_{loc}(\mathcal{M}) := \left\{F \in \mathcal{F}_{\mu c}(\mathcal{M}) \, \vert \, supp(F^{(k)}) \subset Diag_k(\mathcal{M}), \forall k \in \mathbb{N} \right\},
     \end{equation}
     where 
     \begin{equation*}
         Diag_k(\mathcal{M}) := \left\{ \underbrace{(x, ..., x)}_k \in \mathcal{M}^k \, \vert \, x \in \mathcal{M} \right\}.
     \end{equation*}
    Similarly to Definition \ref{microfun}, we denote by $\mathcal{P}_{loc}(\mathcal{M})$ the collection of all local and polynomial microcausal functionals. 
\end{definition}
\noindent Let us discuss an informative example to illustrate the definitions given so far. 
\begin{example}
    Let $\mathcal{M}$ be a globally hyperbolic spacetime and choose $f \in \mathcal{D}(\mathcal{M})$. We can construct the \emph{smeared real scalar field} as
    \begin{flalign}
        \Phi_f &: \mathcal{E}(\mathcal{M}) \rightarrow \mathbb{C}, \notag\\
        \phi &\mapsto \Phi_f(\phi) := \int_{\mathcal{M}} d\mu_x\, \phi(x) f(x),\label{Eq: Field Generator}
    \end{flalign}
    where $d\mu_x$ is the metric-induced volume measure. A direct computation of the functional derivatives of $\Phi_f$ yields for all $\phi,\eta,\eta_1,\dots\eta_k\in\mathcal{E}(M)$
    \begin{flalign*}
        \Phi_f^{(1)} (\eta; \phi) &:= [\delta_{\eta} \Phi_f](\phi) = \int_{\mathcal{M}^2}d\mu_x \, d\mu_y\, \delta_2(x,y) f(x) \eta(y), \\
        \Phi_f^{(k)} (\eta_1 \otimes ... \otimes \eta_k; \phi) &= 0, \, \forall k \ge 2,
    \end{flalign*}
    where $\delta_2(x,y)$ is the formal integral kernel of the Dirac delta $\delta_2 \in \mathcal{D}'(\mathcal{M} \times \mathcal{M})$. Since $supp(\delta_2) \subset Diag_2(\mathcal{M})$, we can infer that $\Phi_f \in \mathcal{F}_{loc}(\mathcal{M}), \forall f \in \mathcal{D}(\mathcal{M})$.
\end{example}

\begin{remark}\label{Rem: Derivatives applied to a functional}
It is possible to generalize Definition \ref{Def: local functionals} as highlighted in \cite[Def. 3.14]{rejzner}. More precisely one could call local any functional $F$ such that, for every $\phi_0\in\mathcal{E}(\mathcal{M})$ there exists an open neighborhood $U_{\phi_0}\in\mathcal{E}(\mathcal{M})$ as well as $k\in\mathbb{N}$ such that for all $\phi\in U_{\phi_0}$
$$F(\phi)=\int\limits_{M}\alpha_{j^k_x(\phi)},$$
where $j^k_x(\phi)$ is the k-th jet prolongation of $\phi$ while $\alpha$ is a density-valued function on the underlying jet bundle. This allows to define precisely functionals which appear in the definition of quantities of physical interest such as the stress-energy tensor. As an example consider 
$$(\partial_\mu\Phi\partial_\nu\Phi)[\phi,h]=\int\limits_{\mathcal{M}}d\mu_x\,\partial_\mu\phi(x)\partial_\nu\phi(x)h^{\mu\nu}(x),$$
where $h\in\Gamma_0(TM^{\otimes_s 2})$. It is worth stressing that, with a slight abuse of notation, we do not suppress the dependence on the indices in the above equation, even though the functional is tested against a test function. In addition, since denoting every single time the test function becomes a burden at the level of notation, we shall privilege working at the level of integral kernels.
\end{remark}

\noindent Henceforth, we will work mainly with polynomial functionals since they suffice to analyze the models of interest for this paper. Starting from Definition \ref{microfun} and thanks to Theorem \ref{Thm: Hormander criterion}, we can introduce a *-algebra structure as follows. 
\begin{definition}[Classical Microcausal Algebra]
\label{microcaualg}
    Let $\mathcal{M}$ be a globally hyperbolic spacetime. We call \emph{classical microcausal algebra} the commutative, unital, *-algebra $\mathcal{A}_{cl}(\mathcal{M}) := (\mathcal{F}_{\mu c}(\mathcal{M}), \cdot, \ast)$, where 
    \begin{itemize}
        \item $\mathcal{F}_{\mu c}(\mathcal{M})$ is the space of microcausal functionals associated to a real scalar field as per Definition \ref{microfun}, 
        \item the composition map 
        \begin{flalign*}
            \cdot: &\mathcal{F}_{\mu c}(\mathcal{M}) \times \mathcal{F}_{\mu c}(\mathcal{M}) \rightarrow \mathcal{F}_{\mu c}(\mathcal{M}) \\
            (F,G) & \mapsto F \cdot G := \iota^*(F \otimes G),
        \end{flalign*}
        $\iota^*$ is the pullback on $\mathcal{F}_{\mu c}(\mathcal{M}) \times \mathcal{F}_{\mu c}(\mathcal{M}) $ along the diagonal map 
        \begin{flalign*}
           \iota:  \, &\mathcal{E}(\mathcal{M}) \rightarrow \mathcal{E}(\mathcal{M}) \times \mathcal{E}(\mathcal{M})\\
           \phi & \mapsto \iota(\phi) := (\phi, \phi). 
        \end{flalign*}
        The map $\cdot$ is commutative and it is called pointwise product, \textit{i.e.}, $\forall F,G \in \mathcal{F}_{\mu c}(\mathcal{M})$, $\forall \phi \in \mathcal{E}(\mathcal{M})$, 
        \begin{equation}\label{Eq: Pointwise product}
            (F \cdot G)(\phi) = \iota^{\ast} \left[F\otimes G\right] (\phi),
        \end{equation}
        \item the involution is given by 
        \begin{flalign*}
            \ast: \mathcal{F}_{\mu c}(\mathcal{M}) &\rightarrow  \mathcal{F}_{\mu c}(\mathcal{M}) \\
            F^*(\phi) &:= \overline{F(\phi)},
        \end{flalign*}
        where $F \in \mathcal{F}_{\mu c}(\mathcal{M})$, $\phi \in \mathcal{E}(\mathcal{M})$.
    \end{itemize}
    In addition we denote by $\mathrm{Pol}_{cl}(\mathcal{M})$ the $*$-subalgebra of $\mathcal{A}_{cl}(\mathcal{M})$ built out polynomial, microcausal functionals.
\end{definition}

\noindent We observe that Equation \eqref{Eq: Pointwise product} might be ill-defined since it involves the pull-back of the tensor product of two distributions along the diagonal. Yet, all the assumptions that we make on the structural properties of the underlying functionals entail that this is not case, as one can see using microlocal techniques \cite{horm1}.

\begin{remark}\label{Rem: Generators}
    For later convenience, observe that the algebra of polynomial microcausal functionals includes
    \begin{enumerate}
        \item the identity functionals $\textbf{I}_f$ such that 
    $$\textbf{I}_f(\phi)\doteq\int\limits_{\mathcal{M}}d\mu_x\, f(x),$$
    where $d\mu_x$ is the metric induced volume form.
    \item the powers of the smeared field, namely, for $k\in\mathbb{N}$ and $f\in\mathcal{D}(\mathcal{M})$
     \begin{flalign*}
        \Phi^k_f &: \mathcal{E}(\mathcal{M}) \rightarrow \mathbb{C}, \\
        \phi &\mapsto \Phi^k_f(\phi) := \int_{\mathcal{M}}d\mu_x\, \phi^k(x) f(x).
    \end{flalign*}
    \end{enumerate}
\end{remark}

\subsection{Hadamard states}\label{Sec: Hadamard States}

In this section we review the second key ingredient used in the formulation of pAQFT, namely quantum states. We recall that, given a unital $*$-algebra $\mathcal{A}$, an (algebraic) state is a linear, normalized and positive functional $\omega:\mathcal{A}\to\mathbb{C}$, namely
$$\omega(\mathbb{I})=1,\quad\omega(a^*a)\geq 0,\;\forall a\in\mathcal{A},$$
where $\mathbb{I}$ is the identity element lying in $\mathcal{A}$. On the one hand, the pair $(\mathcal{A},\omega)$ allows to recover the probabilistic interpretation proper of a quantum theory by means of the GNS theorem. On the other hand, among the plethora of existing states, not all are physically sensible and this conundrum can be avoided by imposing the Hadamard condition. We review its definition and consequences with a particular attention to the structures relevant in pAQFT. For definitiveness we restrict our attention to the scenario of interest, namely a self-interacting, real scalar field.  

In particular, the free dynamics is ruled by the Klein-Gordon operator
\begin{equation}\label{Eq: KG}
P_0 =  -\Box_g + m_{\phi}^2+\xi R,
\end{equation}
where $\Box_g$ is the d'Alembert wave operator constructed out of the underlying Lorentzian metric assumed to have signature $(-,+,\dots,+)$ and locally given by $\nabla^{\mu} \nabla_{\mu}$, $m_{\phi} \geq 0$ is a mass parameter, $\xi\in\mathbb{R}$, while $R$ is the scalar curvature built out of $g$. Since the background $\mathcal{M}$ is a globally hyperbolic spacetime, associated to $P_0$, there exist unique advanced and retarded fundamental solutions $\Delta_{A/R}\in\mathcal{D}^\prime(\mathcal{M}\times \mathcal{M})$ such that $P_0\circ\Delta_{A/R}=\Delta_{A/R}\circ P_0=\delta_{\mathrm{Diag}_2}$ and $\textrm{supp}(\Delta_{A/R}(f))\subseteq J^\mp(\textrm{supp}(f))$ for all $f\in\mathcal{D}(\mathcal{M})$ \cite{bar}. Here $J^\mp$ denote the causal future ($+$) and past ($-$) while $\delta_{\mathrm{Diag}_2}$ is the Dirac delta supported on the diagonal of $\mathcal{M}\times\mathcal{M}$. Associated to $\Delta_{A/R}$ we can define two notable distributions, the {\em causal (Pauli-Jordan) propagator}
\begin{equation}\label{Eq: Causal Propagator}
    \Delta\doteq\Delta^R-\Delta^A,
\end{equation}
and the {\bf Dirac propagator}
\begin{equation}\label{Eq: Dirac Propagator}
\Delta_D=\frac{1}{2}\left(\Delta_A+\Delta_R\right).
\end{equation}
Observe that both $\Delta$ and $\Delta_D$ inherit from the $\Delta_{A/R}$ the property of being local and covariant. With these ingredients and recalling the characterization of wavefront set reported in Appendix \ref{First appendix}, we can give the following definition.

\begin{definition}\label{Def: Hadamard 2-pt}
Given a $d$-dimensional, $d\geq 2$, globally hyperbolic spacetime $\mathcal{M}$, we say that $\Delta_+\in\mathcal{D}^\prime(\mathcal{M}\times\mathcal{M})$ is a {\bf (global) Hadamard two-point function} if 
\begin{enumerate}
    \item $\left(P_0\otimes\mathbb{I}\right)\Delta_+=\left(\mathbb{I}\otimes P_0\right)\Delta_+=0$, 
    \item for any pair of test-functions $f,f^\prime\in\mathcal{D}(\mathcal{M})$
    $$\Delta_+(\overline{f},f)\geq 0\;\textrm{and}\;\Delta_+(f,f^\prime)-\Delta_+(f^\prime,f)=i\Delta(f,f^\prime),$$
    where $\Delta\doteq\Delta^R-\Delta^A$ is the causal (Pauli-Jordan) propagator,
    \item for any pair of test-functions $f,f^\prime\in\mathcal{D}(\mathcal{M})$
    $$|\Delta(\overline{f},f^\prime)|^2\leq 4\Delta_+(\overline{f},f)\Delta_+(\overline{f}^\prime,f^\prime),$$
    \item the wavefront set of $\Delta_+$ is
    \begin{equation}\label{Eq: WF Delta+}
        \mathrm{WF}(\Delta_+)=\{(x,k_x,y,-k_y)\in T^*(\mathcal{M}\times\mathcal{M})\setminus\{0\}\;|\;(x,k_x)\sim (y,k_y)\;\textrm{and}\; k_x\triangleright 0\}
    \end{equation}
\end{enumerate}
where the symbol $\sim$ implies that $x$ and $y$ are connected by a lightlike geodesic $\gamma$ to which $k_x$ is cotangent, while $k_y$ is obtained by parallel transport along $\gamma$. Furthermore $k_x\triangleright 0$ entails that $k_x$ is future-pointing.
\end{definition}

\begin{remark}
    It is worth emphasizing that, in Definition \ref{Def: Hadamard 2-pt}, only condition $4$ establishes a constraint on the singular structure of the underlying two-point function and, in some instances, it is considered as the lone defining condition for a Hadamard two-point function. Condition 1-3 are, instead, necessary to guarantee that $\Delta_+$ identifies unambiguously a quasi-free/Gaussian state on the algebra of observables for a free scalar field theory, whose dynamics is ruled by $P_0$, see for example \cite[Chap. 5]{brunetti2}. Although we shall not enter in the details of the construction of such algebra, we will always require condition 1-3 to hold true, since this is a necessary prerequisite for the consistency of the analysis, discussed in the following sections.
\end{remark}

Working with Hadamard states is necessary for several reasons \cite{Fewster:2013lqa}, among which noteworthy are the finiteness of the quantum fluctuations of all observables and the possibility of constructing a local and covariant algebra of Wick polynomials, see \cite{hollands1, hollands2} and \cite[Chap. 5]{brunetti2}. In particular, this last result is based on the following local characterization of Hadamard states.

\begin{definition}\label{Def: local Hadamard 2-pt}
Given an $d$-dimensional, $d\geq 2$, globally hyperbolic spacetime $\mathcal{M}$, we say that $\Delta_+\in\mathcal{D}^\prime(\mathcal{M}\times\mathcal{M})$ is a {\bf (local) Hadamard two-point function} if it abides by the condition 1-3 in Definition \ref{Def: Hadamard 2-pt} and if, working at the level of integral kernels, for all geodesically convex neighborhood $\mathcal{O}\subseteq\mathcal{M}$ and for all $x,y\in\mathcal{O}$
\begin{equation}\label{Eq: Hadamard two-point function}
\Delta_+(x,y)=\lim\limits_{\epsilon\to 0^+}\frac{U_d(x,y)}{\sigma^{\frac{d-2}{2}}_\epsilon(x,y)}+V_d(x,y)\ln \frac{\sigma(x,y)}{\zeta_H}+W(x,y),
\end{equation}
where $\sigma(x,y)$ is the halved, squared geodesic distance between $x$ and $y$ while, denoting by $t:\mathcal{M}\to\mathbb{R}$ a time function, $\sigma_\epsilon(x,y)\doteq\sigma(x,y)+i\epsilon(t(x)-t(y))+\epsilon^2$. In addition $\zeta_H \in\mathbb{R}^+$ is a reference squared length, while $U_d,V_d,W\in C^\infty(\mathcal{O}\times\mathcal{O})$. The singular terms in Equation \eqref{Eq: Hadamard two-point function} identify the {\bf Hadamard parametrix} $H\in\mathcal{D}^\prime(\mathcal{O}\times\mathcal{O})$, whose integral kernel is such that
\begin{equation}\label{Eq: Hadamard parametrix}
    H(x,y)=\lim\limits_{\epsilon\to 0^+}\frac{U_d(x,y)}{\sigma^{\frac{d-2}{2}}_\epsilon(x,y)}+V_d(x,y)\ln \frac{\sigma(x,y)}{\zeta_H}.
\end{equation}
\end{definition}

The Hadamard parametrix depends critically on $\zeta_H$. This plays a distinguished r\^{o}le in individuating the freedom in the local and covariant construction of Wick polynomials, but, with a slight abuse of notation, we shall leave this dependence implicit to avoid a heavy notation. As a matter of fact such freedom will not play a crucial r\^{o}le in the derivation of our results and therefore, highlighting it when unnecessary, might jeopardize a smooth reading of this work.

It is most notable that the smooth functions $U_d$ and $V_d$ can be constructed from the free dynamics codified in condition $1$ in Definition \ref{Def: Hadamard 2-pt}. More precisely they can be both expanded in power series with respect to the geodesic distance. This is known as the {\em Hadamard expansion}: Splitting between even and odd dimensions,
\begin{subequations}\label{Eq: Hadamard expansion}
\begin{equation}\label{Eq: Hadamard expansion a}
U_{2n}(x,y)=\Theta_{2n}\sum\limits_{k=0}^{n-2}u_k(x,y)\sigma^k(x,y)\quad \textrm{and}\quad V_{2n}(x,y)=\sum\limits_{k=0}^\infty v_k(x,y)\sigma^k(x,y),
\end{equation}
\begin{equation}\label{Eq: Hadamard expansion b}
U_{2n+1}(x,y)=\sum\limits_{k=0}^{\infty}u_k(x,y)\sigma^k(x,y)\quad \textrm{and}\quad V_{2n+1}(x,y)=0,
\end{equation}
\end{subequations}
where $\Theta_2=0$ and $\Theta_{2n}=1$ if $n>1$. The coefficients $u_k$ and $v_k$ can be determined by solving a family of transport equations, known as {\em Hadamard recursion relations}. Yet, unless the spacetime is analytic, the series in Equation \eqref{Eq: Hadamard expansion} is only asymptotic \cite{Friedlander:2010eqa}. We shall not delve into the details of this construction, leaving an interested reader to \cite{moretti}. We content ourselves with remarking that, in view of this result, the Hadamard parametrix $H$ codifies the universal, {\em local and covariant component} of any Hadamard state, while the function $W$ in Equation \eqref{Eq: Hadamard two-point function} encompasses our freedom in the choice of a state falling in this class. To conclude this succinct excursus, we report a renown result first proven by Radzikowski in \cite{radzi, radzi2}. It shows the equivalence between the local and the global form of a Hadamard state.

\begin{theorem}[Radzikowsky]\label{Thm: Radzikowski}
    Let $\mathcal{M}$ be a globally hyperbolic spacetime and let $\Delta_+\in\mathcal{D}^\prime(\mathcal{M}\times\mathcal{M})$. It abides by Definition \ref{Def: Hadamard 2-pt} if and only if it also satisfies the hypotheses of Definition \ref{Def: local Hadamard 2-pt}.
\end{theorem}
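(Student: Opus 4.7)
The plan is to prove the two implications separately, since one direction is a direct computation of a wavefront set while the other requires a genuine microlocal argument.

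For the implication (local $\Rightarrow$ global), I would argue that the wavefront set of $\Delta_+$ coincides with that of the Hadamard parametrix $H$ in Equation \eqref{Eq: Hadamard parametrix}, since the remainder $W$ appearing in Equation \eqref{Eq: Hadamard two-point function} is smooth and therefore contributes nothing to $\mathrm{WF}(\Delta_+)$. The task then reduces to computing $\mathrm{WF}(H)$ in any geodesically convex neighborhood $\mathcal{O}$. In such a neighborhood $\sigma$ is smooth, its zero set is precisely the set of null-related pairs, and $d_x\sigma$ at such points is cotangent to the unique null geodesic $\gamma$ joining $x$ to $y$. Using the standard formula for the wavefront set of distributions of the form $(f+i0\cdot h)^{-s}$ and $\log(f+i0\cdot h)$, the $i\epsilon(t(x)-t(y))$ prescription selects exactly the covectors $k_x$ cotangent to $\gamma$ with $k_x\triangleright 0$, while the parallel-transport relation $k_y$ at the other endpoint arises automatically because $d_x\sigma$ and $-d_y\sigma$ are parallel-transported along $\gamma$. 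This reproduces the set in Equation \eqref{Eq: WF Delta+}.

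For the converse (global $\Rightarrow$ local), I would fix a geodesically convex $\mathcal{O}\subseteq\mathcal{M}$ and build on $\mathcal{O}\times\mathcal{O}$ the Hadamard parametrix $H$ by solving the Hadamard recursion relations, so that $U_d$ and $V_d$ are uniquely determined by $P_0$ as in Equations \eqref{Eq: Hadamard expansion a}-\eqref{Eq: Hadamard expansion b}. By the direct implication already established, $H$ has the prescribed Hadamard wavefront set, and by construction $P_0 H$ and $H P_0$ are smooth. The candidate smooth remainder is $W\doteq \Delta_+-H$; the goal is to show $\mathrm{WF}(W)=\emptyset$. By hypothesis $\mathrm{WF}(\Delta_+)$ is contained in the Hadamard set, so $\mathrm{WF}(W)$ is as well. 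Now using condition $2$ of Definition \ref{Def: Hadamard 2-pt}, the antisymmetric part of $\Delta_+$ is $\frac{i}{2}\Delta$, whose wavefront set is the full set of null-related covectors (both future and past pointing); the same identity holds for $H$ up to smooth terms. Hence $W$ is symmetric modulo smooth, but its wavefront set can only contain future-pointing $k_x$, which forces $\mathrm{WF}(W)=\emptyset$. To make this last step rigorous, I would invoke the propagation of singularities theorem (Duistermaat--Hörmander) for the Klein-Gordon operator: since $P_0 W$ is smooth, $\mathrm{WF}(W)$ is invariant under the bicharacteristic flow, so a singularity on a null bicharacteristic in $\mathrm{WF}(W)$ would propagate to its antipodal covector, contradicting the one-sided support condition inherited from the Hadamard form.

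The main obstacle is precisely this last microlocal step: one must carefully combine the positive-frequency constraint in Equation \eqref{Eq: WF Delta+} with the canonical commutation relation from condition $2$ to exclude past-pointing singularities, and then use propagation of singularities to rule out future-pointing ones as well. A secondary subtlety is that the Hadamard series in Equations \eqref{Eq: Hadamard expansion a}-\eqref{Eq: Hadamard expansion b} is generally only asymptotic, so one should either truncate at sufficiently high order (exploiting that the tail contributes a $C^N$ remainder with $N$ arbitrarily large) or invoke a Borel-type summation, as discussed in \cite{moretti, Friedlander:2010eqa}, to obtain a genuine distribution $H$ in a neighborhood of the diagonal.
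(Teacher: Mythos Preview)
The paper does not actually prove Theorem \ref{Thm: Radzikowski}; it merely states it and cites the original references \cite{radzi, radzi2}, so there is nothing to compare your proposal against in this text.

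That said, your sketch is a reasonable outline of one standard route to the result, but the final paragraph is muddled. Once you have established that $W=\Delta_+-H$ is symmetric modulo smooth and that $\mathrm{WF}(W)$ is contained in the Hadamard set of Equation \eqref{Eq: WF Delta+}, the argument is already complete: if $(x,k_x,y,-k_y)\in\mathrm{WF}(W)$ with $k_x\triangleright 0$, then symmetry forces $(y,-k_y,x,k_x)\in\mathrm{WF}(W)$, which would require $-k_y\triangleright 0$; but $k_y$ is the parallel transport of the future-pointing $k_x$ along a null geodesic and is therefore itself future-pointing, a contradiction. Propagation of singularities is not needed at this stage and the sentence invoking it to ``propagate to its antipodal covector'' does not describe what that theorem actually gives you. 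The genuine technical point you should flag instead is verifying that the locally constructed parametrix $H$ has antisymmetric part equal to $\frac{i}{2}\Delta$ up to smooth terms; this is where the $i\epsilon$ prescription and the Duistermaat--H\"ormander classification of distinguished parametrices enter in Radzikowski's original argument, and it is not entirely trivial.
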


In addition to Hadamard states, another primary building block of our framework is the {\em Feynman propagator}
\begin{equation}\label{Eq: Feynman propagator}
\Delta_F=\Delta_++i\Delta_A\in\mathcal{D}^\prime(\mathcal{M}\times\mathcal{M}),
\end{equation}
where $\Delta_+$ is a Hadamard two-point function as per Definition \ref{Def: Hadamard 2-pt}. Observe that, on the one hand, we call $\Delta_F$ a propagator to keep the convention used in the literature although $P_0\Delta_F=i\delta_{\mathrm{Diag}_2}$, while, on the other hand, contrary to the Dirac counterpart in Equation \eqref{Eq: Dirac Propagator}, $\Delta_F$ is non unique since it depends on the choice of $\Delta_+$. Using the theorem of propagation of singularities, one can show \cite{radzi, radzi2} that, denoting $\mathring{T}^*(\mathcal{M}\times\mathcal{M})\doteq T^*(\mathcal{M}\times\mathcal{M})\setminus\{0\}$,
\begin{equation}\label{Eq: WF of HF}
\mathrm{WF}(\Delta_F)=\{(x,k_x,y,-k_y)\in\mathring{T}^*(\mathcal{M}\times\mathcal{M}) \;|\;(x,k_x)\sim_F (y,k_y)\}\cup\mathrm{WF}(\delta_{\mathrm{Diag}_2}),
\end{equation}
where $\mathrm{WF}(\delta_{\mathrm{Diag}_2})=\{(x,k_x,x,-k_x)\in\mathring{T}^*(\mathcal{M}\times\mathcal{M})\}$, while $\sim_F$ entails that there exists a lightlike geodesic $\gamma$, connecting $x$ to $y$, such that $k_y$
is obtained as the parallel transport of $k_x$ along $\gamma$. In addition $k_x$ is co-tangent to $\gamma$ at $x$ and if $y\in J^-(x)$
then $k_x$ is past directed. Conversely, if $y\in J^+(x)$, then $k_y$ is future pointing.

To avoid this arbitrariness related to the choice of $\Delta_+$, one can work with the {\bf Feynman parametrix} $H_F\in\mathcal{D}^\prime(\mathcal{M}\times\mathcal{M})$, such that $\Delta_F-H_F\in C^\infty(\mathcal{M}\times\mathcal{M})$. Its existence is guaranteed by Duistermaat-H\"ormander theory of distinguished parametrices \cite{duist}. Similarly to Equation \eqref{Eq: Hadamard parametrix}, we can characterize explicitly the integral kernel of $H_F$, namely, for any pair of points $x,y$ lying in a convex geodesic neighborhood $\mathcal{O}$,
\begin{equation}\label{Eq: Feynman parametrix}
    H_F(x,y)=H+i\Delta_A=\lim\limits_{\epsilon\to 0^+}\frac{U_d(x,y)}{\sigma^{\frac{d-2}{2}}_{F,\epsilon}(x,y)}+V_d(x,y)\ln\frac{\sigma_{F,\epsilon}(x,y)}{\zeta},
\end{equation}
where $\sigma_{F,\epsilon}(x,y)\doteq\sigma(x,y)+i\epsilon$. Here the coefficients $U_d$ and $V_d$ are the same as in Equation \eqref{Eq: Hadamard parametrix}.

\begin{remark}\label{Rem: Powers of HF}
    In the context of perturbative algebraic quantum field theory, we shall consider also powers of the Feynman parametrix $H_F$. Yet a close inspection of Equation \eqref{Eq: WF of HF} shows that H\"ormander criterion for the multiplication of two distributions in not abided by, see Theorem \ref{Thm: Hormander criterion}. Hence, regardless of the spacetime dimension, we can only conclude that, for all $k\geq 2$ $H_F^k\in\mathcal{D}^\prime(\mathcal{M}\times\mathcal{M}\setminus X)$ where $X$ is the collection of points $(x,y)\in\mathcal{M}\times\mathcal{M}$ connected by a lightlike geodesic. 

    This prompts the question whether $H_F^k$ admits an extension to the whole manifold $\mathcal{M}\times\mathcal{M}$. Using Theorem \ref{Thm: Degree of divergence}, it turns out that an answer can be found estimating the scaling degree of $H_F$, see Definition \ref{scaldeg} and \ref{Def: Microlocal scaling degree}. To this end, it is convenient and sufficient to consider Equation \eqref{Eq: Feynman parametrix} according to which the leading singularity is ruled by $\sigma^{\frac{2-d}{2}}$, a power of the geodesic distance, provided that $\dim\mathcal{M}=d>2$. If $d=2$ it turns out that the only singularity occurring in $H_F$ is logarithmic in $\sigma$. Hence, in this case, all powers of $H_F$ are automatically well-defined on the whole $\mathcal{M}\times\mathcal{M}$. Focusing instead on $d>2$, a direct computation yields that, for $k\geq 1$, the scaling degree $\mathrm{sd}(H^k_F)=\mathrm{sd}(\sigma^{\frac{k(2-d)}{2}})=(d-2)k$. Using the language of Theorem \ref{Thm: Degree of divergence}, this entails that the degree of divergence $\rho_{k,F}$ of $H^k_F$ is always finite and therefore one or more extensions to the whole manifold $\mathcal{M}\times\mathcal{M}$ exist. In the physically relevant scenario where $d=4$, it turns out that the $\rho_{k,F}=2k-4$ which is always greater or equal to zero if $k\geq 2$. Hence there is a freedom in extending the powers of the Feynman propagator to $\mathcal{M}\times\mathcal{M}$, namely, as well known, a renormalization procedure is necessary.
\end{remark}

\section{Introduction to perturbative Algebraic Quantum Field Theory}\label{Sec: Introduction to pAQFT}

In this section we give a succinct introduction to perturbative algebraic quantum field theory. Although the framework has a much wider range of applicability, for definitiveness, we shall always have in mind a self-interacting, real, scalar field $\phi:\mathcal{M}\to\mathbb{R}$ with $\dim\mathcal{M}=4$ and whose dynamics is ruled by 
\begin{equation}\label{Eq: non-linear dynamics}
P_0\phi=\frac{\lambda}{(n-1)!}\phi^{n-1},
\end{equation}
where $n = 3, 4$, $\lambda\in\mathbb{R}$ is a coupling constant, while $P_0$ is as per Equation \eqref{Eq: KG}.

\subsection{Deformation quantization}\label{Sec: Deformation Quantization} 

\noindent In the following we outline a deformation quantization scheme which is the procedure at the heart of pAQFT to encode information both about the underlying dynamics and about the building blocks of the quantum theory. This scheme consists of a modification of the product of the algebra $\mathcal{A}_{cl}(\mathcal{M})$ as in Definition \ref{microcaualg} by means of a deformation parameter $\hbar$ and of a bi-distribution $K \in \mathcal{D}^\prime(\mathcal{M} \times \mathcal{M})$, while the underlying space of microcausal functionals is left untouched. As a result we obtain a new algebra $\mathcal{A}_K(\mathcal{M}) := (\mathcal{F}_{\mu c} (\mathcal{M}), \star_{K}, \ast)$, where $\forall F,G \in \mathcal{F}_{\mu c}(\mathcal{M})$, the deformed product $\star_K$ is defined as follows  
\begin{flalign}
\label{starprod}
    F &\star_{K} G = \iota^* \circ e^{D_{\hbar K}} [F \otimes G], \\ \notag 
    D_{\hbar K} &:= \langle \hbar K, \frac{\delta}{\delta \phi} \otimes \frac{\delta}{\delta \phi} \rangle := \int_{\mathcal{M}^2} d\mu_x \, d\mu_y \, \hbar K(x,y) \frac{\delta}{\delta \phi(x)} \otimes \frac{\delta}{\delta \phi(y)},
\end{flalign}
where $K(x,y)$ denotes the integral kernel of the bi-distribution $K$. Let us stress that the exponential map of the differential operator $D_{\hbar K}$ has to be intended as a formal power series in the parameter $\hbar$. In addition each term on the exponential series might involve a priori ill-defined operations among distribution. In order to avoid this last hurdle a specific and physically motivated choice for the bi-distribution $K$ has to be made. Furthermore, to avoid unnecessary technical hurdles, we shall consider only polynomial functionals.
\begin{definition}
\label{quanmicroal}
    Let $\mathcal{M}$ be a globally hyperbolic spacetime and let $\mathrm{Pol}_{cl}(\mathcal{M})$ be the classical microcausal algebra of polynomial functionals associated to a real scalar field theory as per Definition \ref{microcaualg}. Let $\Delta_+ \in \mathcal{D}^\prime(\mathcal{M} \times \mathcal{M})$ be a Hadamard bi-distribution as per Definition \ref{Def: Hadamard 2-pt}. We call \emph{quantum algebra of microcausal functionals} $\mathrm{Pol}_{\Delta_+} (\mathcal{M}) = (\mathrm{Pol}_{\mu c}(\mathcal{M}), \star_{\Delta_+}, \ast)$, whose associative product is obtained via Equation \eqref{starprod}, setting $K = \Delta_+$. 
\end{definition}

\begin{remark}
    Observe that, on account of Equation \eqref{Eq: WF Delta+} and of Definition \ref{microfun}, each term of the exponential series in Equation \eqref{starprod} is well-defined from a microlocal viewpoint, see Appendix \ref{First appendix}. Furthermore, since we consider only polynomial functionals, the action of the exponential series yields only a finite number of non vanishing contributions and hence convergence is not an issue.
\end{remark}

\begin{remark}\label{Rem: Expectation value}
Since in the algebra $\mathrm{Pol}_{\Delta_+} (\mathcal{M})$ the information concerning the choice of an underlying quantum state has already been encoded in the product $\star_{\Delta_+}$, the expectation value of a functional $A\in\mathrm{Pol}_{\Delta_+} (\mathcal{M})$ is codified by the evaluation at the vanishing configuration, namely
\begin{equation*}
    \langle A \rangle = A \vert_{\phi = 0}.
\end{equation*}
\end{remark}

\begin{example}\label{Exam: Wick polynomials}
    Let $\mathcal{M}$ be a globally hyperbolic spacetime and let $\mathrm{Pol}_{\Delta_+}(\mathcal{M})$ be the quantum algebra of microcausal, polynomial functionals associated to a real scalar field thereon as per Definition \ref{quanmicroal}. Recalling Remark \ref{Rem: Generators}, let us consider the local functionals,  
    \begin{equation}
        \Phi^2_f(\phi) := \int_{\mathcal{M}} d\mu_x \, \phi^2(x) f(x), \, \,  \Phi^2_{f'}(\phi) := \int_{\mathcal{M}} d\mu_y \, \phi^2(y) f'(y).\quad \phi \in \mathcal{E}(\mathcal{M}),\;\textrm{and}\; f, f' \in \mathcal{D}(\mathcal{M}) 
    \end{equation}
    A direct computation yields 
    \begin{flalign*}
        \Phi^2_f(\phi) \star_{\Delta_+} \Phi^2_{f'}(\phi) & = \int_{\mathcal{M}} d\mu_x \, \phi^2(x) f(x) \int_{\mathcal{M}} d\mu_y \, \phi^2(y) f'(y) \\ &+ 4\hbar \int_{\mathcal{M}^2} d\mu_x \, d\mu_y \phi(x) \phi(y) \Delta_+(x,y) f(x) f'(y) \\ &+ 2\hbar^2 \int_{\mathcal{M}^2} d\mu_x \, d\mu_y \Delta^2_+(x,y) f(x) f'(y). 
    \end{flalign*}
Observe that $\Delta^2_+$ is a well-defined element of $\mathcal{D}^\prime(\mathcal{M}\times\mathcal{M})$ on account of Equation \eqref{Eq: WF Delta+} and of H\"ormander criterion for the product of two distributions, see Theorem \ref{Thm: Hormander criterion}. 
\end{example}

\noindent
From the viewpoint of microlocal analysis, in order to make sense of Definition \ref{quanmicroal} we are allowed to consider in place of $\Delta_+$ any bi-distribution $H\in\mathcal{D}^\prime(\mathcal{M}\times\mathcal{M})$ such that $\Delta_+-H\in C^\infty(\mathcal{M}\times\mathcal{M})$. One might protest that we are using the same symbol as in Equation \eqref{Eq: Hadamard parametrix} and this is, indeed, a slight abuse of notation motivated by the following remark.

\begin{remark}
    As one can infer from Definition \ref{Def: local Hadamard 2-pt} and from Theorem \ref{Thm: Radzikowski}, if one seeks a local and covariant structure, when restricting the attention to geodesically convex neighborhoods, it is more appropriate to work with the Hadamard parametrix $H\in\mathcal{D}^\prime(\mathcal{O}\times\mathcal{O})$. This allows for example the identification of a local and covariant notion of Wick ordering, see \cite{hollands}. Yet, as pointed out in Definition \eqref{Def: local Hadamard 2-pt}, the restriction to $\mathcal{O}\times\mathcal{O}$ of $\Delta_+$ and therefore also of any $H\in\mathcal{D}^\prime(\mathcal{M}\times\mathcal{M})$ such that $\Delta_+-H\in C^\infty(\mathcal{M}\times\mathcal{M})$, coincides with Equation \eqref{Eq: Hadamard parametrix} up to a smooth remainder. 
\end{remark}

\begin{definition}[Local covariant algebra of functionals]\label{Def: Local and Covariant algebra}
Given a globally hyperbolic spacetime $\mathcal{M}$, we call \emph{local and covariant algebra of microcausal polynomial functionals} $\mathrm{Pol}(\mathcal{M})$, the algebraic closure of $\mathrm{Pol}_{loc}(\mathcal{M})$ as per Definition \ref{Def: local functionals} under the product $\star_H$. This is the deformed product as per Equation \eqref{starprod} with respect to the $H\in\mathcal{D}^\prime(\mathcal{M}\times\mathcal{M})$ such that $\Delta_+-H\in C^\infty(\mathcal{M}\times\mathcal{M})$.
\end{definition}

\begin{definition}\label{Def: Wick ordering}
    Let $\mathrm{Pol}(\mathcal{M})$ be the polynomial, microcausal functionals as per Definition \ref{Def: Local and Covariant algebra}. We call {\em (abstract) Wick-ordering} of $A\in\mathrm{Pol}(\mathcal{M})$
    \begin{equation}\label{Eq: Wick ordering}
    :A:_H := \alpha^{-1}_H(A)=\alpha_{-H}(A) := e^{- \frac{1}{2}\langle H, \frac{\delta^2}{\delta \phi(x)\delta \phi(y)}\rangle}  A, 
    \end{equation}
    where $H$ is chosen as in Definition \ref{Def: Local and Covariant algebra}. 
\end{definition}

\noindent Observe that $\mathrm{Pol}(\mathcal{M})$ is $*$-isomorphic to $\mathrm{Pol}_{\Delta_+}(\mathcal{M})$ via the map $\alpha_W:\mathrm{Pol}_{\mu c} (\mathcal{M})\to\mathrm{Pol}_{\mu c} (\mathcal{M})$ with $W\doteq \Delta_+-H\in C^\infty(\mathcal{M}\times\mathcal{M})$, see \cite{rejzner}. In other words, for all $F,G\in\mathrm{Pol} (\mathcal{M})$
$$F\star_H G=\alpha_W\left(\alpha^{-1}_W(F)\star_{\Delta_+}\alpha^{-1}_W(G)\right),$$
where $\alpha_W$ is defined as in Equation \eqref{Eq: Wick ordering} with $-H$ replaced by $W$.
We have all the ingredients to give a definition of Wick-ordering. Once more, for simplicity, we focus on polynomial functionals. We proceed in two steps.

\noindent We stress once more that, if we restrict the attention to an arbitrary but fixed geodesically convex open neighborhood, we recover the results of \cite{hollands}. This novel algebraic structure allows to interpret the elements of $\mathrm{Pol}(\mathcal{M})$ in terms of Wick powers and their composition in terms of a Wick-ordered product as discussed thoroughly in \cite{rejzner}. Yet, to this end, a second step is needed.

\begin{definition}\label{Def: Concrete Wick ordering}
     Let $\mathrm{Pol}(\mathcal{M})$ be the polynomial, local, microcausal functionals as per Definition \ref{Def: Local and Covariant algebra}. Given a Hadamard state $\Delta_+\in\mathcal{D}^\prime(\mathcal{M}\times\mathcal{M})$ we call operator ordering map of $\mathrm{Pol}(\mathcal{M})$ in $\mathrm{Pol}_{\Delta_+}(\mathcal{M})$ the application 
    $$\alpha_{\Delta_+}:\mathrm{Pol}(\mathcal{M})\to\mathrm{Pol}_{\Delta_+}(\mathcal{M})$$
    where $\alpha_{\Delta_+}$ is defined as per Equation \eqref{Eq: Wick ordering} with $-H$ replaced by $\hbar\Delta_+$. This is a $*$-isomorphism, namely, for all $F,G\in\mathrm{Pol}_{\Delta_+}(\mathcal{M})$,
    \begin{equation}\label{Eq: Wick products}
    F\star_{\Delta_+}G=\alpha_{\Delta_+}(\alpha^{-1}_{\Delta_+}(F) \, \alpha^{-1}_{\Delta_+}(G)).
    \end{equation}
\end{definition}

\noindent As a consequence of this last definition, we can work with abstract Wick polynomials $:A:_H$ as per Equation \eqref{Eq: Wick ordering} using the map $\alpha_{\Delta_+}$ whenever we wish to recover the standard notion that we are used to in quantum field theory.

\begin{example}\label{Exam: Wick ordering}
In order to clarify this last statement, we discuss an explicit example, namely, given $f\in\mathcal{D}(\mathcal{M})$, we consider the functional $\Phi^2_f$ as per Example \ref{Exam: Wick polynomials} and its abstract Wick ordered counterpart $:\Phi^2_f:_H=\alpha_{-H}(\Phi^2_f)$, see Definition \ref{Def: Wick ordering}. Applying in turn Definition \ref{Def: Concrete Wick ordering} we end up that, for all $\phi\in\mathcal{E}(\mathcal{M})$,
$$\alpha_{\Delta_+}(:\Phi^2_f:_H)(\phi)=\alpha_{\Delta_+-H}(\Phi^2_f)(\phi)=\Phi^2_f(\phi)+\int\limits_{\mathcal{M}}d\mu_x W(x,x)f(x),$$
where $W\in C^\infty(\mathcal{M}\times\mathcal{M})$ is the smooth function whose expression in any geodesically convex neighborhood abides by Equation \eqref{Eq: Hadamard two-point function}. Evaluating this expression at the configuration $\phi=0$ as per Remark \ref{Rem: Expectation value} yields the standard expression for the expectation value of the Wick ordered squared scalar field on a Hadamard state. Furthermore, using Equation \eqref{Eq: Wick products}, we can also recover the standard structure of the product of two Wick powers, namely
    \begin{equation*}
\Phi^2_f(\phi) \star_{\Delta_+} \Phi^2_{f'}(\phi) = \Phi^2_f(\phi) \Phi^2_{f'}(\phi)+4\hbar\Delta_+(f,f^\prime)\Phi_f(\phi) \Phi_{f'}(\phi)+2\hbar^2\Delta_+^2(f,f^\prime)
\end{equation*}
In the standard formulation of QFT, the product between Wick polynomials is presented as the implementation of contractions between fields. This pictorial interpretation is particularly useful when dealing with long computations and, hence, we remark for future convenience that the same convention can be adopted in the algebraic formulation:
\begin{equation*}
 \Phi^2_f(\phi) \star_{\Delta_+} \Phi^2_{f'}(\phi) = \Phi^2_f(\phi) \Phi^2_{f'}(\phi) + 4 \hbar \wick{
        \c1 \Phi^2_f(\phi)
        \c1 \Phi^2_{f'}(\phi) } +
 + 2 \hbar^2 \wick{
        \c1 \Phi_f(\phi) \c2 \Phi_{f}(\phi)
        \c1 \Phi_{f'}(\phi) \c2 \Phi_{f'}(\phi)},
\end{equation*}
where each contraction is performed by means of $\Delta_+$. 
\end{example}

\subsection{Interacting Theories in pAQFT}\label{Sec: Interacting Theories in pAQFT}
In the previous section we have discussed the building blocks necessary to analyze non-interacting field theories. If we wish to account also for non-linear interactions, additional structures are necessary. We shall discuss them in the following and, as starting point, we focus our attention once more on a real, scalar field $\phi$. We denote by $S$ the underlying total action and, in order to isolate the interaction potential $V$, we decompose it as 
    \begin{equation}\label{Eq: Action Splitting}
        S[\phi] = S_0[\phi] + \lambda V[\phi],
    \end{equation}
where $S_0$ is the action of the underlying free system. In the case under scrutiny and working already with the associated local functionals we shall consider
\begin{equation}\label{Eq: Action Terms}
S_0[\Phi]_f(\phi)=-\int\limits_{\mathcal{M}}d\mu_x\, f(x)(\frac{1}{2}g^{\mu\nu}\partial_\mu\phi\partial_\nu\phi+\frac{1}{2}(m^2+\xi R)\phi^2)\;\textrm{and}\; V[\Phi]_f(\phi)= - \frac{1}{n!}\int\limits_{\mathcal{M}}d\mu_x\, \phi^n(x) f(x),
\end{equation}
where $n=3,4$. 

\begin{remark}
    In Equation \eqref{Eq: Action Splitting} the splitting between a free and interacting contribution has a degree of arbitrariness, since nothing prevents us from incorporating the terms in the action, quadratic in the field, partly or completely in $V[\Phi]$. Such freedom does not alter globally the results of our analysis, but this statement relies on a rather technical and important property known as principle of perturbative agreement. We refer to \cite{drago} for a discussion of this topic.
\end{remark}

In order to encode the information of the non-linear potential in Equation \eqref{Eq: Action Splitting} with the perturbative approach to quantum field theory the building blocks are the $S$-matrix and the  Bogoliubov map. We shall give a succinct overview of the approach developed in \cite{brunetti2,deutsch,rejzner}.

The starting point is the {\em time ordered product} which is here defined in terms of a time ordering map $\mathcal{T}$ acting on the space of multi-local, polynomials functionals, which are tensor products of elements lying in $\mathrm{Pol}_{loc}(\mathcal{M})$ as in Definition \ref{Def: local functionals}. More precisely $\mathcal{T}$ is constructed out of a family of multi-linear maps
\[ \mathcal{T}_n : \mathrm{Pol}_{loc}^{\otimes n}(\mathcal{M}) \rightarrow
   \mathrm{Pol}_{\mu c}(\mathcal{M}), \]
which satisfy the constraints $\mathcal{T}_0 = 1$ and $\mathcal{T}_1 = \mathrm{id}$. The link between $\mathcal{T}$ and $\mathcal{T}_n$ is codified by the identity
\begin{equation}
	\mathcal{T} \left(
	\prod_{j = 1}^n F_j \right) =\mathcal{T}_n \left( \bigotimes_{j = 1}^n F_j
	\right).
\end{equation} 
In addition, one requires the maps $\mathcal{T}_n$ to be such that $\mathcal{T}$ is symmetric and to satisfy a causal factorization property. This can be stated as follows: consider $\{F_i \}_{i=1,\ldots, n}, \{G_j \}_{j=1,\ldots, m} \subset\mathrm{Pol}_{loc}(\mathcal{M})$ two arbitrary families of local, polynomial functionals such that $F_i \gtrsim G_j$ for any $i, j$. The symbol $\gtrsim$ entails that $\mathrm{supp} (F_i) \cap J^- (\mathrm{supp} (G_j)) = \emptyset$, where the support of a functional is as per Equation \eqref{Eq: support of a functional}. It descends that
\begin{equation}\label{Eq:factoriz-time-ordering}
  \mathcal{T} \left( \bigotimes_i F_i  \bigotimes_j G_j \right) =\mathcal{T}
  \left( \bigotimes_i F_i \right) \star_K \mathcal{T} \left( \bigotimes_j G_j
  \right), 
\end{equation}
where $\star_K$ is defined in Equation \eqref{starprod}. In the following we shall make a concrete choice for the kernel $K$. Yet the hypotheses at the heart of Equation \eqref{Eq:factoriz-time-ordering} might not be satisfied. In this case, one needs to devise a suitable extension criterion for $\mathcal{T}$, which requires in turn a renormalization procedure. In the algebraic approach to quantum field theory, this is encoded in the Epstein-Glaser inductive procedure \cite{epstein, hollands1}. Since we will not need to delve into the details of this aspect of pAQFT, we shall not discuss it further.

On the contrary, we focus on the following key fact: if we work with the algebra $\mathrm{Pol}_{\Delta_+}(\mathcal{M})$ as per Definition \ref{quanmicroal}, an explicit realization of the time-ordering map is completely characterized by the identity
\begin{equation}\label{Eq: time-ordered product}
	 \mathcal{T}^{\hbar H_F}  (F_1 \otimes \ldots \otimes F_n):=
	  F_1 \star_{F} \ldots \star_{F} F_n =
	\mathsf{M} \circ e^{\sum_{\ell < j}^n D^{\ell j}_{\hbar H_F}}  (F_1
	\otimes \ldots \otimes F_n), 
\end{equation}
where $\mathsf{M}$ denotes the pullback on $\textrm{Pol}_{\mu c}(\mathcal{M})\otimes
	\textrm{Pol}_{\mu c}(\mathcal{M})$ via the diagonal map
	\begin{align*}
		\iota :\, &\mathcal{E} (\mathcal{M})\rightarrow \mathcal{E} (\mathcal{M}) \times \mathcal{E} (\mathcal{M})\\
		&\phi \longrightarrow\iota (\phi)=(\phi, \phi).
	\end{align*}
while $F_i\in\mathrm{Pol}_{\mu c}(\mathcal{M})$, for all $i=1,\dots,n$. Here $H_F\in\mathcal{D}'(\mathcal{M}\times\mathcal{M})$ denotes the Feynman parametrix as in Equation \eqref{Eq: Feynman parametrix}, while 
$$ D^{\ell j}_{\hbar H_F}:= \langle \hbar H_F, \frac{\delta}{\delta \phi_{\ell}}\otimes\frac{\delta}
   {\delta \phi_j}\rangle,$$
which is manifestly symmetric under exchange of $j$ and $\ell$.

\subsubsection{The S-matrix and the Bogoliubov map.}\label{Sec: Bogoliubov map}
\noindent Starting from Equation \eqref{Eq: time-ordered product}, we can define the key structures at the heart of our analysis. Although the whole procedure can be applied to a large class of interacting field theories, for definiteness, we shall always focus on a theory ruled by Equation \eqref{Eq: Action Splitting}, considering polynomial functionals.

\begin{definition}[S-matrix]
\label{smat}
    Let $\mathcal{M}$ be a globally hyperbolic spacetime and let $\lambda V \in \mathrm{Pol}_{loc}(\mathcal{M})$ be the interacting potential. If we denote by $\star_F$ the time-ordered product between local functionals, we call \emph{S-matrix} the functional
    \begin{flalign}
        \mathcal{S}(\lambda V) &:= \sum_{n \ge 0} \frac{1}{n!} \left( \frac{i \lambda}{\hbar} \right)^n \underbrace{V \star_F ... \star_F V}_{n},\label{Eq: S-Matrix}
    \end{flalign}
    where $\lambda \in \mathbb{R}$ is the coupling constant. Furthermore, we denote by $\mathrm{Pol}_{\mu c}[[\lambda]](\mathcal{M})$ the collection of all functionals defined as a formal power series in the parameter $\lambda$, such that $F\in \mathrm{Pol}_{\mu c}[[\lambda]](\mathcal{M})$ if $F=\sum\limits_{n=0}^\infty\lambda^n F_n$ with $F_n\in\mathrm{Pol}_{\mu c}(\mathcal{M})$.
\end{definition}

\noindent Observe that the S-matrix is also a Laurent series with respect to $\hbar$. Since, in the following, we are more interested in controlling a perturbative expansion with respect to the coupling constant $\lambda$, we suppress this dependency, leaving it implicit.

The S-matrix admits an inverse in the sense of formal power series. It is useful for later purposes to express it explicitly by means of the \emph{anti-Feynman parametrix}, \textit{i.e.}, 
\begin{equation}\label{Eq: Anti-Feynman Parametrix}
    H_{AF} := H - i \Delta_R = H_F^*\in\mathcal{D}^\prime(\mathcal{M}\times\mathcal{M}),
\end{equation}
where $\Delta_R \in \mathcal{D}^\prime(\mathcal{M} \times \mathcal{M})$ denotes the retarded fundamental kernel associated to $P_0$ as in Equation \eqref{Eq: KG}, while $H \in \mathcal{D}'(\mathcal{M} \times \mathcal{M})$ is any bi-distribution differing from the two-point correlation function of a Hadamard state by a smooth remainder. The identity $H_{AF}=H^*_F$ should be read as a statement that $H_{AF}$ is the formal adjoint of $H_F$. Thus, we can write the inverse of $\mathcal{S}(\lambda V)$ as
\begin{equation}
\label{invsmat}
    \mathcal{S}^{\star -1} (\lambda V) := \sum_{n \ge 0} \frac{1}{n!} \left( - \frac{i \lambda}{\hbar} \right)^n \underbrace{V \star_{AF} ... \star_{AF} V}_{n}, 
\end{equation}
where $\star_{AF}$ denotes the product defined via Equation \eqref{starprod} using $H_{AF}$ in place of $K$. In addition the superscript $\star$ indicates that $\mathcal{S}^{\star -1}$ is the inverse of the S-matrix in the sense of a formal power series with respect to the product induced by $\star_H$, namely, denoting by $\textbf{1}$ the identity functional
$$\mathcal{S}(\lambda V)\star_H\mathcal{S}^{\star -1} (\lambda V)=\mathcal{S}^{\star -1} (\lambda V)\star_H\mathcal{S}(\lambda V)=\textbf{1}.$$

\noindent The second key ingredient is the \emph{Bogoliubov map}, which allows to represent the interacting observables in terms of their free counterparts. We define it specifically in the scenario under investigation.

\begin{definition}[Bogoliubov map] Let $\mathcal{M}$ be a four-dimensional globally hyperbolic spacetime and let $V \in \mathrm{Pol}_{loc}(\mathcal{M})$ be a potential describing an interaction of a real scalar field thereon, whose dynamics is ruled by the action in Equation \eqref{Eq: Action Splitting}. Let $\mathcal{S}$ be the S-matrix as per Definition \ref{smat} and let its inverse be as per Equation \eqref{invsmat}. Given $F\in \mathrm{Pol}_{loc}(\mathcal{M})$, we call \emph{Bogoliubov map},
\begin{equation}\label{Eq: Bogoliubov map}
    R_{\lambda V}(F):= \mathcal{S}^{\star -1}(\lambda V) \star_H [\mathcal{S}(\lambda V) \star_F F], \, \, \forall F \in \mathrm{Pol}_{loc}(\mathcal{M}). 
\end{equation}
where $\star_{H}$ and $\star_F$ are defined as in Equation \eqref{starprod} with the kernel $K$ replaced by $H$ and by $H_F$ respectively.
\end{definition}

\noindent Observe that, given a generic free observable $F \in \mathcal{F}_{loc}(\mathcal{M})$, the interacting counterpart $R_{\lambda V}(F)\in \mathcal{F}_{\mu \, c}[[\lambda]](\mathcal{M})$ is as a formal power series in the coupling constant $\lambda$ as well as a Laurent series in $\hbar$.

\subsubsection{Examples}\label{Sec: Examples}
In this section, we discuss an instructive example which shows how to explicitly compute the interacting fields via the Bogoliubov map. A reader who is already acquainted with the framework can safely skip this subsection. Under the same hypotheses of the preceding analysis, we consider the observables
\begin{equation}\label{Eq: Examples of Functionals}
    \Phi_f[\phi] := \int_{\mathcal{M}}\,d\mu_x \phi(x) f(x), \, \, \, \,  \Phi^2_f[\phi] := \int_{\mathcal{M}}d\mu_x\, \phi^2(x) f(x), \, f \in \mathcal{D}(\mathcal{M})\;\textrm{and}\;\phi\in\mathcal{E}(\mathcal{M})
\end{equation}
as well as an interaction potential 
\begin{equation*}
    V_h [\phi] := - \frac{1}{n!} \int_{\mathcal{M}}d\mu_x\, \phi^n(x) h(x), \, n = 3,4, \, h \in \mathcal{D}(\mathcal{M}). 
\end{equation*}
To make contact with the analysis in Section \ref{Sec: Interacting Stress-Energy Tensor}, we are interested in computing the expectation value of the interacting counterparts of Equation \eqref{Eq: Examples of Functionals} at second order in $\lambda$, namely $R_{\lambda V}(\Phi_f) \vert_{\phi = 0}$ and $R_{\lambda V}(\Phi^2_f) \vert_{\phi = 0}$.  As pointed out in Remark \ref{Rem: Expectation value}, in this formalism, if we work with states rather then with parametrices the expectation value of any observable corresponds to the evaluation at the configuration $\phi = 0$. For this reason and, in preparation to Section \ref{Sec: Interacting Stress-Energy Tensor}, we shall consider the evaluation at $\phi=0$ even though all deformations are considered with reference to parametrices. We start from the S-matrix in Equation \eqref{Eq: S-Matrix}, expanding it as
\begin{flalign}
    \mathcal{S}(\lambda V) & = \mathbf{1} + \frac{i \lambda}{\hbar} V - \frac{1}{2} \frac{\lambda^2}{\hbar^2} V \star_F V + \mathcal{O}(\lambda^3), \\
     \mathcal{S}^{\star -1}(\lambda V) & = \mathbf{1} - \frac{i \lambda}{\hbar} V - \frac{1}{2} \frac{\lambda^2}{\hbar^2} V \star_{AF} V + \mathcal{O}(\lambda^3).
\end{flalign}
Focusing on the interacting field expanded up to order $\mathcal{O}(\lambda^3)$, we obtain 
\begin{flalign}
    R_{\lambda V}(\Phi_f) &=_{\mathcal{O}(\lambda^3)} \Phi_f + \frac{i \lambda}{\hbar} \left( V_h \star_F \Phi_f - V_h \star_H \Phi_f \right) \notag \\ & -\frac{\lambda^2}{2 \hbar^2} \left[ (V_h \star_{AF} V_h) \star_H \Phi_f + (V_h \star_F V_h) \star_F \Phi_f - 2 V_h \star_H (V_h \star_F \Phi_f)\right],\label{Eq: First Order Expansion}
\end{flalign}
and
\begin{flalign}
    R_{\lambda V}(\Phi^2_f) &=_{\mathcal{O}(\lambda^3)} \Phi^2_f + \frac{i \lambda}{\hbar} \left( V_h \star_F \Phi^2_f - V_h \star_H \Phi^2_f \right) \notag \\ & -\frac{\lambda^2}{2 \hbar^2} \left[ (V_h \star_{AF} V_h) \star_H \Phi^2_f + (V_h \star_F V_h) \star_F \Phi^2_f - 2 V_h \star_H (V_h \star_F \Phi^2_f)\right], \label{Eq: Second Order Expansion}
\end{flalign}
where $=_{\mathcal{O}(\lambda^3)}$ entails a truncation at second order in the coupling constant $\lambda$. Hence, only terms which are not proportional to $\Phi^k$, $k\geq 1$, do contribute and we refer to these elements as \emph{maximally-contracted addenda}. Although in the examples discussed, one can infer by direct inspection which are these terms, for more complicated functionals, it is convenient to adopt a graphical formalism as in Example \ref{Exam: Wick ordering}, without making all computations explicit. To outline its implementation, let us consider $R_{\lambda V} (\Phi_f)$ as per Equation \eqref{Eq: First Order Expansion} and observe that  
\begin{itemize}
    \item at first order in $\lambda$, all contributions vanish when $\phi = 0$, since they are linear in the field; 
    \item no maximally contracted terms do exist, even at second order. 
\end{itemize}
As a consequence $R_{\lambda V} (\Phi_f) \vert_{\phi = 0} = \mathcal{O}(\lambda^3)$. Actually, an inductive argument yields that $R_{\lambda V}(\Phi_f)\vert_{\phi = 0} = 0$ at any order in perturbation theory, since all the admissible contractions among the fields always leave an unpaired term. Focusing instead on $R_{\lambda V}(\Phi^2_f)$ as per Equation \eqref{Eq: Second Order Expansion}, it is possible to infer that 
\begin{itemize}
    \item none of the terms of the first order expansion survives. In fact, in $V_h \star \Phi_f^2$ as well as in $V_h \star_F \Phi_f^2$, it is possible to perform at most two contractions, hence leaving at least one power of $\Phi_f$ remaining,
    \item at second order in perturbation theory, from $(V_h \star_{AF} V_h) \star \Phi_f^2$, it is possible to obtain 
     \begin{equation*}
        \wick{
        (\c1 \Phi_h \c2 \Phi_{h} \c3 \Phi_{h} \c4 \Phi_{h} \, \, \;
        \c2 \Phi_{h} \c3 \Phi_{h} \c4 \Phi_{h} \c5 \Phi_{h}) \, \,
        \c1 \Phi_{f} \c5 \Phi_{f}}
    \end{equation*}
    in the case $n=4$ and 
    \begin{equation*}
         \wick{
        (\c1 \Phi_h \c2 \Phi_{h} \c3 \Phi_{h} \, \, \;
        \c2 \Phi_{h} \c3 \Phi_{h} \c5 \Phi_{h}) \, \,
        \c1 \Phi_{f} \c5 \Phi_{f}}
    \end{equation*}
    in the case $n=3$. Since the contractions between elements lying within the round brackets are implemented by $H_{AF}$, see Equation \eqref{Eq: Anti-Feynman Parametrix}, while all others by the Hadamard parametrix $H$, see Equation \eqref{Eq: Hadamard parametrix}, at the level of integral kernels we obtain
    \begin{equation}
        - \frac{\lambda^2 \hbar^{(n-1)}}{(n-1)!} \int_{\mathcal{M}^3}d \mu_x d \mu_y d \mu_z\, h(x) h(y) f(z) H^{(n-1)}_{AF} (x,y) H(x,z) H(y,z).
    \end{equation}
    \item Focusing instead on $(V_h \star_F V_h) \star_F \Phi^2_f$ in Equation \eqref{Eq: Second Order Expansion}, at a graphical level the same contractions contribute. However, they are related to different kernels and therefore the analytic expression becomes
    \begin{equation}
         - \frac{\lambda^2 \hbar^{(n-1)}}{(n-1)!} \int_{\mathcal{M}^3}d \mu_x d \mu_y d \mu_z\, h(x) h(y) f(z) H^{(n-1)}_{F} (x,y) H_F(x,z) H_F(y,z).
    \end{equation}
    \item The last remaining contribution in Equation \eqref{Eq: Second Order Expansion} is $V_h \star_H (V_h \star_F \Phi^2_f)$, which yields at a graphical level
    \begin{equation*}
        \wick{
        \c1 \Phi_h \c2 \Phi_{h} \c3 \Phi_{h} \c4 \Phi_{h} \, \, 
        (\c2 \Phi_{h} \c3 \Phi_{h} \c4 \Phi_{h} \c5 \Phi_{h}) \, \,
        \c1 \Phi_{f} \c5 \Phi_{f}}
    \end{equation*}
    in the case $n =4$ while, if $n=3$, 
    \begin{equation*}
        \wick{
        \c1 \Phi_h \c2 \Phi_{h} \c3 \Phi_{h} \, \, 
        (\c1 \Phi_{h} \c2 \Phi_{h} \c4 \Phi_{h}) \, \,
        \c3 \Phi_{f} \c4 \Phi_{f}}
    \end{equation*}
    At the level of integral kernels this translates as 
    \begin{equation*}
       - \frac{\lambda^2 \hbar^{(n-1)}}{(n-1)} \int_{\mathcal{M}^3} d\mu_x d\mu_y d\mu_z \, h(x) h(y) f(z) H^{(n-1)}(x,y) H(x,z) H_F(y,z). 
    \end{equation*}
\end{itemize}
Collecting all contributions, we end up with
\begin{flalign}
    R_{\lambda V} (\Phi^2_f) \vert_{\phi = 0} = & \frac{\lambda^2 \hbar^{(n-1)}}{(n-1)!} \int_{\mathcal{M}^3}d\mu_x d\mu_y d\mu_z\,  h(x) h(y) f(z) [2H^{(n-1)}(x,y) H(x,z) H_F(y,z) \notag \\  &- H^{(n-1)}_F(x,y) H_F(x,z) H_F(y,z) - H^{(n-1)}_{AF}(x,y) H(x,z) H(y,z) ]. \label{Eq: Perturbative Squared Field}
\end{flalign}

\begin{remark}\label{Rem: Renormalization ambiguities in the example}
    Observe that Equation \eqref{Eq: Perturbative Squared Field} is merely formal at this stage, since, if $n=3,4$, we have to deal with powers of the parametrices $H,H_F$ and $H_{AF}$. While Equation \eqref{Eq: WF Delta+}, combined with the identity $\mathrm{WF}(\Delta_+)|_{T^*(\mathcal{O}\times\mathcal{O})}=\mathrm{WF}(H)$ and with Theorem \ref{Thm: Hormander criterion} entails that $H^{(n-1)}\in\mathcal{D}^\prime(\mathcal{O}\times\mathcal{O})$, this is not the case for $H^{(n-1)}_F$ and for $H^{(n-1)}_{AF}$. Yet, as discussed in Remark \ref{Rem: Powers of HF}, $H^{(n-1)}_F\in\mathcal{D}^\prime(\mathcal{M}\times\mathcal{M}\setminus X)$, where $X$ is the collection of points $(x,y)\in\mathcal{M}\times\mathcal{M}$ connected by a lightlike
geodesic. Therefore an extension procedure is necessary and this leads to the well-known renormalization ambiguities. Since they do not play a key r\^{o}le in the derivation of the results of this paper, we do not discuss them in detail. Observe that, being $H_{AF}$ the formal adjoint of $H_F$, the same conclusion can be drawn for its powers.  
\end{remark}

\section{On the stress-Energy Tensor for a Free Theory}\label{Sec: Free Stress-Energy Tensor}
The time is ripe to focus our attention on the stress-energy tensor which encodes information about the energy and the momentum of the underlying fields. In this section we consider only a free field theory, whose dynamics is ruled by the operator \eqref{Eq: KG} and we discuss with the language of functionals the structure of the stress-energy tensor. In particular we investigate the conformal anomaly and we shall follow mainly the approach of \cite{moretti}, which will be extended in Section \ref{Sec: Interacting Stress-Energy Tensor} to an interacting scenario. At a classical level, the stress-energy tensor can be defined as
\begin{equation}\label{Eq: Definition of Stress-Energy Tensor}
        T_{\mu \, \nu} = \frac{-2}{\sqrt{|g|}} \frac{\delta S}{\delta g^{\mu \, \nu}}.
\end{equation}
If we consider the action $S_0$ in Equation \eqref{Eq: Action Terms}, one obtains 
\begin{flalign}
\label{seten}
    T_{\mu \, \nu}(x) &= \partial_{\mu} \phi(x) \partial_{\nu} \phi(x) - \frac{1}{2} g_{\mu \, \nu}(x) [\partial^{\rho} \phi(x) \partial_{\rho} \phi(x) + m^2 \phi^2(x)] \\ \notag &+ \xi G_{\mu \, \nu}(x) \phi^2(x) + \xi [g_{\mu \, \nu}(x) \Box - \nabla_{\mu} \nabla_{\nu}] \phi^2(x), 
\end{flalign}
where $G_{\mu \, \nu} = R_{\mu \, \nu} - \frac{1}{2} R g_{\mu \, \nu}$ is the Einstein tensor. We observe that, if we set $\xi=\frac{1}{6}$ and if $g$ is chosen as the metric of Minkowski spacetime, then Equation \eqref{seten} reduces to the improved stress-energy tensor introduced in \cite{coleman}.

\begin{remark} 
    The stress-energy tensor as per Equation \eqref{seten} can be interpreted as the integral kernel of a suitable functional following Remark \ref{Rem: Derivatives applied to a functional}. More precisely, let us introduce the following notation 
    \begin{equation*}
        (K_1 \Phi^k K_2 \Phi^{k^\prime})_f [\phi] := \int_{\mathcal{M}^2} K_1 \phi^k(x) K_2^\prime \phi^{k^\prime}(y) f(x) \delta_2(x,y) \, d \mu_x d \mu_y, \, \, \forall \phi \in \mathcal{E}(\mathcal{M}), 
    \end{equation*}
    where, $k,k^\prime\in\mathbb{N}$, $K_1$ and $K_2$ are arbitrary differential operators, while the superscript $\prime$ denotes that differentiation acts with respect to the $y-$variable. If we work at the level of integral kernels, we can write
    \begin{flalign}
        \textbf{T}_{\mu \, \nu}[\phi](z)  &=  (\partial_{\mu} \Phi \partial_{\nu} \Phi) [\phi](z) - \frac{1}{2} g_{\mu \, \nu}(z) (\partial^{\rho} \Phi \partial_{\rho} \Phi)[\phi](z) - \frac{m^2}{2} g_{\mu \, \nu}(z) \Phi^2 [\phi](z) \notag\\ 
        \label{Eq: Functional Stress-Energy Tensor} &+ \xi G_{\mu \, \nu}(z) \Phi^2[\phi](z) + \xi g_{\mu \, \nu}(z) (\Box \Phi^2)[\phi](z) - \xi (\nabla_{\mu} \nabla_{\nu} \Phi^2)[\phi](z). 
    \end{flalign}
It turns out that, once tested, $\textbf{T}_{\mu \, \nu}$ lies in $\mathrm{Pol}_{loc}(\mathcal{M})$:
    \begin{equation*}
        \textbf{T}_{\mu \, \nu, f}[\phi] = \int_{\mathcal{M}} T_{\mu \, \nu}(x) f^{\mu \, \nu}(x) \, d\mu_x, \, \, \forall \phi \in \mathcal{E}(\mathcal{M}),
    \end{equation*}
    where $f^{\mu \, \nu} \in \Gamma_0(T\mathcal{M}^{\otimes_s 2})$. Henceforth we shall only consider integral kernels, leaving the test function implicit and with a slight abuse of notation we write $\textbf{T}_{\mu \, \nu}\in \mathrm{Pol}_{loc}(\mathcal{M})$. 
\end{remark}

\noindent Equation \eqref{seten} and its functional counterpart, Equation \eqref{Eq: Functional Stress-Energy Tensor}, do not carry any information about an underlying quantization procedure and therefore they encode all classical properties of the stress-energy tensor, conservation laws in particular. The following proposition translates this fact:

\begin{proposition}[On-shell Conservation] 
Let $(\mathcal{M}, g)$ be a globally hyperbolic spacetime and let $\textbf{T}_{\mu \, \nu} \in \mathrm{Pol}_{loc}(\mathcal{M})$ be the stress-energy tensor as per Equation \eqref{Eq: Functional Stress-Energy Tensor} associated to a real scalar field whose dynamics is ruled by the Klein-Gordon operator $P_0$ as per Equation \eqref{Eq: KG}. Let
$$\mathcal{E}_S(\mathcal{M}):=\{\phi\in\mathcal{E}(\mathcal{M})\;|\;P_0\phi=0\}.$$
It holds that
\begin{equation*}
    \nabla^{\mu} \textbf{T}_{\mu \, \nu}[\phi_0](z) = 0, \, \, \forall \phi_0 \in \mathcal{E}_S(\mathcal{M}).
\end{equation*}
\end{proposition}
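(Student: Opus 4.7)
The plan is to establish the pointwise identity
\begin{equation*}
\nabla^{\mu}\textbf{T}_{\mu\nu}[\phi](z)=-(P_{0}\phi)(z)\,\partial_{\nu}\phi(z),
\end{equation*}
from which the proposition follows at once by restricting $\phi$ to $\mathcal{E}_{S}(\mathcal{M})$. Since the functional $\textbf{T}_{\mu\nu}$ in Equation \eqref{Eq: Functional Stress-Energy Tensor} is local and polynomial in the underlying configuration, the verification reduces to a pointwise calculation carried out with a fixed smooth $\phi$, using the Leibniz rule and standard covariant calculus; no further input from the functional formalism, beyond the interpretation in Remark \ref{Rem: Derivatives applied to a functional}, is required.

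I would organize the computation by applying $\nabla^{\mu}$ to each of the six summands of $\textbf{T}_{\mu\nu}$. The two kinetic-type contributions combine via the scalar symmetry $\nabla_{\mu}\partial_{\nu}\phi=\nabla_{\nu}\partial_{\mu}\phi$: the cross-term $\partial^{\mu}\phi\,\nabla_{\mu}\partial_{\nu}\phi$ equals $\tfrac{1}{2}\nabla_{\nu}(\partial^{\rho}\phi\,\partial_{\rho}\phi)$ and cancels exactly the derivative of the $-\tfrac{1}{2}g_{\mu\nu}(\partial\phi)^{2}$ piece, leaving only $(\Box\phi)\,\partial_{\nu}\phi$. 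The mass term contributes $-m^{2}\phi\,\partial_{\nu}\phi$. For the Einstein-tensor term, the contracted Bianchi identity $\nabla^{\mu}G_{\mu\nu}=0$ annihilates the derivative of the geometric factor, so that this piece reduces to $2\xi G_{\mu\nu}\phi\,\partial^{\mu}\phi=2\xi R_{\mu\nu}\phi\,\partial^{\mu}\phi-\xi R\,\phi\,\partial_{\nu}\phi$. The two remaining $\xi$-summands involving $\phi^{2}$ collapse into a commutator which, using $[\nabla_{\mu},\Box]f=-R_{\mu\sigma}\nabla^{\sigma}f$ on any scalar $f$ (a direct consequence of $[\nabla_{a},\nabla_{b}]\omega_{c}=R_{abc}{}^{d}\omega_{d}$), yields
\begin{equation*}
\xi[\nabla_{\nu},\Box]\phi^{2}=-\xi R_{\nu\sigma}\nabla^{\sigma}\phi^{2}=-2\xi R_{\nu\sigma}\phi\,\partial^{\sigma}\phi.
\end{equation*}

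Assembling all pieces, the two $R_{\mu\nu}\phi\,\partial^{\mu}\phi$ contributions cancel by symmetry of the Ricci tensor, the $\nabla_{\nu}(\partial\phi)^{2}$ terms have already dropped out, and the survivors are
\begin{equation*}
\nabla^{\mu}\textbf{T}_{\mu\nu}[\phi]=(\Box\phi-m^{2}\phi-\xi R\phi)\,\partial_{\nu}\phi=-(P_{0}\phi)\,\partial_{\nu}\phi,
\end{equation*}
which manifestly vanishes on $\mathcal{E}_{S}(\mathcal{M})$. The only genuinely delicate step is fixing the correct sign and tensorial form of $[\nabla_{\mu},\Box]$ acting on a scalar; that commutator is precisely what produces the $-2\xi R_{\nu\sigma}\phi\,\partial^{\sigma}\phi$ needed to neutralize the Ricci part of the Einstein-tensor contribution. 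The remainder of the argument is routine Leibniz bookkeeping.
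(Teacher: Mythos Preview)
Your proof is correct; the identity $\nabla^{\mu}\textbf{T}_{\mu\nu}[\phi]=-(P_{0}\phi)\,\partial_{\nu}\phi$ is exactly what one obtains by the term-by-term bookkeeping you describe, and the sign in the commutator $[\nabla_{\nu},\Box]f=-R_{\nu\sigma}\nabla^{\sigma}f$ is handled correctly. The paper itself omits the proof, noting only that it follows from ``elementary operations on distributions combined with the dynamics of the underlying real scalar field,'' so your argument is a faithful and complete expansion of that remark rather than an alternative route.
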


\noindent The proof of this statement is a direct application of the elementary operations on distributions combined with the dynamics of the underlying real scalar field. Hence we omit the details. Another key quantity that we wish to investigate is the trace of the stress-energy tensor which is the source of the {\em conformal anomaly}. If the spacetime dimension is $\dim\mathcal{M}=4$, then the classical trace reads
    \begin{equation}\label{Eq: Classical Trace}
        T[\phi_0](z) := g^{\mu \, \nu}(z) \textbf{T}_{\mu \, \nu}[\phi_0](z) = - m^2 \Phi^2[\phi_0](z) + \frac{1}{2} (6 \xi-1) (\Box \Phi^2) [\phi_0](z), \, \, \forall \phi_0 \in \mathcal{E}_S(\mathcal{M}), 
    \end{equation}
which vanishes if $m=0$ and $\xi=\frac{1}{6}$, namely for a massless, conformally coupled real scalar field.    

\begin{remark}\label{Rem: Non-Uniqueness}
In view of the discussion of the preceding section, the quantum counterpart of the stress-energy tensor is obtained representing the functional in Equation \eqref{Eq: Functional Stress-Energy Tensor} on $\mathrm{Pol}_{\Delta_+}(\mathcal{M})$, the quantum algebra of microcausal functionals. This is a two-step procedure. First of all one needs to construct an abstract Wick-ordered stress-energy tensor along the lines of Definition \ref{Def: Wick ordering}. Subsequently, the outcome must be represented on $\mathrm{Pol}_{\Delta_+}(\mathcal{M})$ following Definition \ref{Def: Concrete Wick ordering}. The first step is made necessary by the non-linear nature of the functional in Equation \eqref{Eq: Functional Stress-Energy Tensor}.

Yet, in the transition to the quantum realm, one needs to make sure that the structural features of the stress-energy tensor are preserved, in particular the conservation law codified by its property of being divergence free. As it is well-known in the literature, the construction of Wick polynomials using the Hadamard parametrix as in Equation \eqref{Eq: Wick ordering} is hard coding locality and covariance, but it can disrupt any property related to dynamics since $H$ is a bi-solution of the equation of motion only up to a smooth remainder. In particular this entails that the quantized stress-energy tensor associated to a free field theory might fail to be divergence free. A way out from this unacceptable option has been studied exploiting the existing freedom in the construction of Wick-ordered observables, see \cite{hollands}. 

An alternative, albeit ultimately equivalent, procedure has been outlined in \cite{moretti} and we shall abide by it in this work. More precisely, already, at a classical level, there exists a freedom in defining the stress-energy tensor consisting of adding a term proportional to the equation of motion,  automatically conserved and traceless on shell. In other words we introduce 
\begin{flalign}
\label{improvset}
    T_{\mu \, \nu}^{(\eta)} (z) &= \partial_{\mu} \phi(z) \partial_{\nu} \phi(z) - \frac{1}{2} g_{\mu \, \nu}(z) [\partial^{\rho} \phi(z) \partial_{\rho} \phi(z) + m^2 \phi^2(z)] \\ \notag &+ \xi G_{\mu \, \nu}(z) \phi^2(z) + \xi [g_{\mu \, \nu}(z) \Box - \nabla_{\mu} \nabla_{\nu}] \phi^2(z) + \eta g_{\mu \, \nu}(z) \phi(z) (P_0 \phi)(z), 
\end{flalign}
where $\eta \in \mathbb{R}$ is a free parameter while $P_0$ is the Klein-Gordon operator as in Equation \eqref{Eq: KG}. 
\end{remark}

\noindent Switching our attention back to the functional formalism, we can interpret Equation \eqref{improvset} as the integral kernel associated to the functional $\textbf{T}_{\mu \, \nu}^{(\eta)}\in\mathrm{Pol}_{loc}(\mathcal{M})$ such that
\begin{flalign}       
\textbf{T}_{\mu \, \nu}^{(\eta)}[\phi](z)  &=  (\partial_{\mu} \Phi \partial_{\nu} \Phi) [\phi](z) - \frac{1}{2} g_{\mu \, \nu}(z) (\partial^{\rho} \Phi \partial_{\rho} \Phi) [\phi](z) - \frac{m^2}{2} g_{\mu \, \nu}(z) \Phi^2 [\phi](z) \notag\\ 
        \label{Eq: Functional Extra Stress-Energy Tensor} &+ \xi G_{\mu \, \nu}(z) \Phi^2[\phi](z) + \xi g_{\mu \, \nu}(z) (\Box \Phi^2)[\phi](z) - \xi (\nabla_{\mu} \nabla_{\nu} \Phi^2) [\phi](z)+\eta g_{\mu\nu}(z)(\Phi\,P_0\Phi)(\phi)(z), 
\end{flalign}
where $\phi\in\mathcal{E}(\mathcal{M})$. The following theorem is a slavish translation in the language of functionals of the result proven in \cite{moretti}. Here, for definiteness, we focus our attention to four-dimensional backgrounds, in order to make contact with the scenarios of interest in theoretical physics.

\begin{theorem}
\label{valter}
    Let $(\mathcal{M}, g)$ be a $4-$dimensional globally hyperbolic spacetime and let $\Delta_+$ be a Hadamard two-point correlation function for a real scalar fields whose dynamics is ruled by the Klein-Gordon operator $P_0$ as in Equation \eqref{Eq: KG}. Let $\wickk{\textbf{T}_{\mu \, \nu}^{(\eta)}}(z)=\alpha_{-H}(\textbf{T}_{\mu \, \nu}^{(\eta)})(z)$ be the integral kernel of the Wick-ordered stress-energy tensor as per Equation \eqref{Eq: Wick ordering}. It holds that
    \begin{itemize}
        \item[1.] $\nabla^\mu\wickk{\textbf{T}_{\mu \, \nu}^{(\eta)}}(z)=0$ if and only if $\eta = \frac{1}{3}$; 
        \item[2.] the functional $\wickk{g_{\mu\nu}(\Phi\,P_0\Phi)}(z)=g_{\mu\nu}(z)\alpha_{-H}((\Phi\,P_0\Phi))(z)$ is proportional to the identity functional, namely, for every $f\in\mathcal{D}(\mathcal{M})$
        \begin{equation*}
            \wickk{(\Phi P_0 \Phi)}(z) = \frac{3}{4 \pi^2} v_1(z,z) \textbf{1}, 
        \end{equation*}
        where $v_1$ is the first coefficient in the Hadamard expansion of Equation \eqref{Eq: Hadamard expansion}.
        \item[3.] if we set $\eta = \frac{1}{3}$ and $\xi = \frac{1}{6}$, then the trace of the stress-energy tensor reads 
        \begin{equation}
        \label{traceT}
            \wickk{\textbf{T}}(z)= g^{\mu\nu}(z)\wickk{\textbf{T}_{\mu \, \nu}^{(\frac{1}{3})}}(z) = -m^2 \wickk{\Phi^2}(z) + \frac{1}{4 \pi^2} v_1(z,z) \textbf{1}.
        \end{equation}
        The coefficient proportional to the integral of $v_1(z,z)$ is called {\bf conformal (or trace) anomaly}.
    \end{itemize}
\end{theorem}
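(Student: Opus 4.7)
The natural strategy is to establish Part~2 first, as the identification $\wickk{\Phi P_0\Phi}(z)\propto\mathbf{1}$ feeds directly into Parts~1 and~3. Since $\Phi P_0\Phi$ evaluated at $z$ is a quadratic polynomial functional of $\phi$, the exponential series defining $\alpha_{-H}$ truncates after the second term, yielding
\begin{equation*}
\alpha_{-H}(\Phi P_0\Phi)(z)=\phi(z)(P_0\phi)(z)-\tfrac{1}{2}\langle H,(\Phi P_0\Phi)^{(2)}(z)\rangle.
\end{equation*}
A direct computation of the second functional derivative of $\phi(z)(P_0\phi)(z)$, combined with the symmetry of $H$, collapses the pairing to the single coincidence limit $[P_0^{(y)}H(z,y)]_{y=z}$. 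Inserting the Hadamard parametrix \eqref{Eq: Hadamard parametrix} with $d=4$ and applying $P_0^{(y)}$, the transport (Hadamard recursion) equations associated with $(P_0\otimes\mathbb{I})\Delta_+\in C^\infty$ cancel all singular contributions and identify $[P_0 H]_{\mathrm{coinc}}(z)=-\tfrac{3}{4\pi^2}v_1(z,z)$, where the factor arises from the explicit expansion \eqref{Eq: Hadamard expansion a}. Recalling that on $\mathrm{Pol}_{\Delta_+}(\mathcal{M})$ the combination $\Phi(z)(P_0\Phi)(z)$ is identified with zero through the on-shell reduction via the underlying equations of motion, one obtains $\wickk{\Phi P_0\Phi}(z)=\tfrac{3}{4\pi^2}v_1(z,z)\mathbf{1}$, as claimed.

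Part~1 is established by a parallel but more involved computation. Applying $\alpha_{-H}$ to each quadratic building block of $\textbf{T}^{(\eta)}_{\mu\nu}$ in \eqref{Eq: Functional Extra Stress-Energy Tensor} produces its classical form plus a state-independent c-number given by the coincidence limit of an appropriate derivative of $H$: $[\partial_\mu^{(x)}\partial_\nu^{(y)}H]_{\mathrm{coinc}}$ from $\partial_\mu\Phi\partial_\nu\Phi$, $[H]_{\mathrm{coinc}}$ from the $\Phi^2$-type terms, $[\Box H]_{\mathrm{coinc}}$ and $[\nabla_\mu\nabla_\nu H]_{\mathrm{coinc}}$ from the $\xi$-contributions, and $[P_0 H]_{\mathrm{coinc}}$ from the $\eta$-correction. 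Taking the covariant divergence $\nabla^\mu$, the on-shell vanishing of the classical divergence interacts consistently with the $\eta g_{\mu\nu}\Phi P_0\Phi$ contribution, while the coincidence c-numbers leave a residue of gradients of Hadamard coefficients. Using Synge's world-function calculus together with the transport equations for $u_0$ and $v_0$, these gradients can be rewritten in manifestly covariant form, involving $\nabla^\mu v_1(z,z)$ and curvature terms tied by the contracted Bianchi identity. The resulting expression is linear in $\eta$, and a direct comparison of coefficients singles out $\eta=\tfrac{1}{3}$ as the unique value forcing the divergence to vanish.

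Part~3 is then an algebraic consequence of Parts~1 and~2. Tracing \eqref{Eq: Functional Extra Stress-Energy Tensor} in $d=4$ via $g^{\mu\nu}g_{\mu\nu}=4$ and $g^{\mu\nu}G_{\mu\nu}=-R$, and combining the identity $\Box\Phi^2=2\Phi\Box\Phi+2\,g^{\rho\sigma}\partial_\rho\Phi\partial_\sigma\Phi$ with $\Box\Phi=-P_0\Phi+(m^2+\xi R)\Phi$, one finds after collecting terms
\begin{equation*}
g^{\mu\nu}\textbf{T}^{(\eta)}_{\mu\nu}\big|_{\xi=1/6}=-m^2\Phi^2+(4\eta-1)\,\Phi P_0\Phi.
\end{equation*}
Setting $\eta=\tfrac{1}{3}$, Wick-ordering the identity and substituting the result of Part~2 gives $\wickk{\textbf{T}}(z)=-m^2\wickk{\Phi^2}(z)+\tfrac{1}{3}\cdot\tfrac{3}{4\pi^2}v_1(z,z)\mathbf{1}=-m^2\wickk{\Phi^2}(z)+\tfrac{1}{4\pi^2}v_1(z,z)\mathbf{1}$, which is exactly \eqref{traceT}. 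The main obstacle in the whole argument is the covariant bookkeeping of coincidence limits in Part~1: extracting the individual $\nabla^\mu$-derivatives of coincidence values of $H$ in a form that combines cleanly into an $\eta$-linear expression requires careful use of the Hadamard transport equations and of the curvature identities invoked by the covariant divergence, and it is precisely this combinatorics that pins down the distinguished value $\eta=\tfrac{1}{3}$.
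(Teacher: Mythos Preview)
The paper does not supply its own proof of this theorem; it presents the statement as a ``slavish translation in the language of functionals of the result proven in \cite{moretti}'' and defers entirely to that reference. Your sketch follows the same route as Moretti's original argument --- compute the $\alpha_{-H}$-contractions of each quadratic building block, identify the coincidence limits of derivatives of $H$ via the Hadamard recursion relations, and isolate the $\eta$-dependence --- and your Part~3 derivation of the classical trace identity $g^{\mu\nu}\textbf{T}^{(\eta)}_{\mu\nu}|_{\xi=1/6}=-m^2\Phi^2+(4\eta-1)\Phi P_0\Phi$ is correct and is exactly the identity (Equation~(72) of \cite{moretti}) that the paper itself invokes later in the proof of Theorem~\ref{Thm: Interacting Trace Anomaly}.

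One point deserves more care. In Part~2 you assert that ``on $\mathrm{Pol}_{\Delta_+}(\mathcal{M})$ the combination $\Phi(z)(P_0\Phi)(z)$ is identified with zero through the on-shell reduction via the underlying equations of motion.'' As the paper sets things up, $\mathrm{Pol}_{\Delta_+}(\mathcal{M})$ is an \emph{off-shell} algebra of functionals on all of $\mathcal{E}(\mathcal{M})$; no quotient by the equations of motion is taken, and the functional $\phi\mapsto\phi(z)(P_0\phi)(z)$ is genuinely nonzero there. The assertion ``$\wickk{\Phi P_0\Phi}(z)\propto\mathbf{1}$'' therefore has to be read either in the on-shell quotient (where $P_0\Phi$ generates the dynamical ideal, as in Moretti's original setting) or as a statement about the c-number part obtained after evaluation at $\phi=0$ in the sense of Remark~\ref{Rem: Expectation value}. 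This is arguably an imprecision in how the theorem is phrased rather than a defect of your strategy, but you should state explicitly which interpretation you are adopting instead of invoking an ``on-shell reduction'' that the ambient functional framework does not itself supply.
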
 


The trace anomaly is a physical result, in the sense that it is state independent and therefore it is a general feature of the quantized stress energy tensor. Using the Hadamard recursion relations, it is possible to evaluate $v_1$ in terms of geometric quantities and, setting $\xi=\frac{1}{6}$, -- see for example \cite{decanini}
\begin{equation}\label{Eq: v1}
    v_1(z,z) = \frac{m^4}{8}+\frac{1}{720} \left[C_{\mu\nu\rho\sigma}(z)C^{\mu\nu\rho\sigma}(z) + R_{\mu\nu}(z)R^{\mu\nu}(z)-\frac{R^2}{3}(z)+\Box R(z)\right], 
\end{equation}
where $C_{\mu\nu\rho\sigma}$ and $R_{\mu\nu\rho\sigma}$ are respectively the Weyl and the Riemann tensor built out the metric $g$.

\section{Stress-Energy tensor for a self-interacting theory}\label{Sec: Interacting Stress-Energy Tensor}


In this section we consider a real scalar field on a globally hyperbolic spacetime $(\mathcal{M},g)$ whose dynamics is ruled by the classical action in Equation \eqref{Eq: Action Splitting} where $n$ is either $3$ or $4$, while $\dim\mathcal{M}=4$. The associated stress-energy tensor can be computed using Equation \eqref{Eq: Definition of Stress-Energy Tensor} as
\begin{flalign}
    \mathcal{T}_{\mu \, \nu}(z) &= \partial_{\mu} \phi(z) \partial_{\nu} \phi(z) - \frac{1}{2} g_{\mu \, \nu} (z) [\partial^{\rho} \phi(z) \partial_{\rho} \phi(z) - m^2 \phi^2(z)] \notag\\  & - \frac{\lambda}{n!} g_{\mu \, \nu}(z) \phi^n(z) + \xi G_{\mu \, \nu}(z) \phi^2(z) + \xi [g_{\mu \, \nu}(z) \Box - \nabla_{\mu} \nabla_{\nu}]\phi^2(z). \label{Eq: Interacting Stress-Energy Tensor}
\end{flalign}

We shall investigate the quantum counterpart of Equation \eqref{Eq: Interacting Stress-Energy Tensor} following the same procedure used in Section \ref{Sec: Free Stress-Energy Tensor} for the free field theory counterpart. More precisely, we read $\mathcal{T}_{\mu \, \nu}(z)$ as the integral kernel of a functional $\boldsymbol{\mathcal{T}}_{\mu \, \nu}\in\mathrm{Pol}_{loc}(\mathcal{M})$ which reads for every $\phi\in\mathcal{E}(\mathcal{M})$ at the level of integral kernel
\begin{flalign}
    \boldsymbol{\mathcal{T}}_{\mu \, \nu}[\phi](z) &= (\partial_{\mu} \Phi \partial_{\nu} \Phi)[\phi](z) - \frac{1}{2} g_{\mu \, \nu}(z) [(\partial^{\rho} \Phi \partial_{\rho} \Phi)[\phi](z) - m^2 \Phi^2[\phi](z)] \notag\\  & - \frac{\lambda}{n!} g_{\mu \, \nu}(z)\Phi^n[\phi](z) + \xi G_{\mu \, \nu}(z) \Phi^2[\phi](z) + \xi [[g_{\mu \, \nu}(z) \Box - \nabla_{\mu} \nabla_{\nu}]\Phi^2](\phi)(z). \label{Eq: Functional Interacting Stress-Energy Tensor}
\end{flalign}

\noindent As in the free case analyzed in the previous section, Equation \eqref{Eq: Interacting Stress-Energy Tensor} and its functional counterpart, Equation \eqref{Eq: Functional Interacting Stress-Energy Tensor}, do not carry any information of an underlying quantization procedure and therefore they encode all classical properties of the stress-energy tensor, conservation laws in particular. The following proposition translates this fact:

\begin{proposition}[On-shell Conservation - Interacting Case] 
Let $(\mathcal{M}, g)$ be a globally hyperbolic spacetime and let $\boldsymbol{\mathcal{T}}_{\mu \, \nu} \in \textrm{Pol}_{loc}(\mathcal{M})$ be the stress-energy tensor as per Equation \eqref{Eq: Functional Stress-Energy Tensor} associated to a real scalar field whose dynamics is ruled by Equation \eqref{Eq: non-linear dynamics}. Let
$$\mathcal{E}_I(\mathcal{M}):=\{\phi\in\mathcal{E}(\mathcal{M})\;|\;P_0\phi=\frac{\lambda}{(n-1)!}\phi^{n-1}\}.$$
It holds that
\begin{equation*}
    \nabla^\mu\boldsymbol{\mathcal{T}}_{\mu \, \nu}[\phi_0](z) = 0, \, \, \forall \phi_0 \in \mathcal{E}_I(\mathcal{M}).
\end{equation*}
\end{proposition}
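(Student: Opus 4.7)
The plan is a direct, term-by-term divergence computation at the level of integral kernels, paralleling the argument for the free case mentioned earlier in the paper. The only genuinely new feature relative to the free setting is the interaction contribution $-\frac{\lambda}{n!}g_{\mu\nu}\Phi^n[\phi](z)$, whose divergence produces a polynomial factor in $\phi$ that must conspire with the output of the Klein--Gordon block to reproduce exactly the non-linear equation of motion \eqref{Eq: non-linear dynamics}.

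First I would treat the canonical piece $\partial_\mu\Phi\partial_\nu\Phi-\tfrac{1}{2}g_{\mu\nu}\partial^\rho\Phi\partial_\rho\Phi$. Applying $\nabla^\mu$ together with the Leibniz rule and using the symmetry of the scalar Hessian, $\nabla_\mu\nabla_\nu\phi_0=\nabla_\nu\nabla_\mu\phi_0$, the two Hessian cross-terms cancel, leaving only $(\Box\phi_0)\partial_\nu\phi_0$. The divergence of the mass contribution is an elementary Leibniz computation producing a term proportional to $m^2\phi_0\partial_\nu\phi_0$, while for the interaction term one finds directly $\nabla^\mu\bigl(-\tfrac{\lambda}{n!}g_{\mu\nu}\phi_0^n\bigr)=-\tfrac{\lambda}{(n-1)!}\phi_0^{n-1}\partial_\nu\phi_0$.

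Next I would handle the improvement block $\xi G_{\mu\nu}\Phi^2+\xi[g_{\mu\nu}\Box-\nabla_\mu\nabla_\nu]\Phi^2$. The contracted Bianchi identity $\nabla^\mu G_{\mu\nu}=0$ yields $\nabla^\mu(\xi G_{\mu\nu}\phi_0^2)=2\xi G_{\nu\sigma}\phi_0\partial^\sigma\phi_0$. For the remaining pair, the two $\xi\partial_\nu\Box\phi_0^2$ contributions of opposite sign conspire, via the scalar commutator $[\Box,\nabla_\nu]h=R_\nu{}^{\sigma}\nabla_\sigma h$ applied with $h=\phi_0^2$, to cancel against each other leaving only the Ricci remainder $-2\xi R_{\nu\sigma}\phi_0\partial^\sigma\phi_0$. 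Adding this to the Bianchi piece and using $G_{\mu\nu}-R_{\mu\nu}=-\tfrac{1}{2}Rg_{\mu\nu}$, the entire improvement divergence collapses to $-\xi R\,\phi_0\partial_\nu\phi_0$.

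Assembling all contributions, $\nabla^\mu\boldsymbol{\mathcal{T}}_{\mu\nu}[\phi_0](z)$ reduces to a single overall scalar factor multiplying $\partial_\nu\phi_0$, of the form $\bigl(P_0\phi_0-\tfrac{\lambda}{(n-1)!}\phi_0^{n-1}\bigr)\partial_\nu\phi_0$ up to the global sign fixed by the Lorentzian conventions adopted. This vanishes identically on $\mathcal{E}_I(\mathcal{M})$ by Equation \eqref{Eq: non-linear dynamics}. There is no real obstacle in this verification: the only non-trivial inputs beyond the Leibniz rule are the symmetry of the scalar Hessian, the contracted Bianchi identity, and the commutator of $\Box$ with a gradient on a scalar, all of which are already implicit in the free-case argument; the interacting case merely shifts the final bracket by the source term $-\tfrac{\lambda}{(n-1)!}\phi_0^{n-1}$, which is precisely what is needed to match \eqref{Eq: non-linear dynamics}.
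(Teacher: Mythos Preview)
Your proposal is correct and is precisely the direct, Leibniz-rule computation the paper has in mind; indeed the paper does not spell out a proof for this proposition at all, treating it---as in the free case---as an elementary verification using the dynamics, so your write-up simply supplies the omitted details (symmetry of the scalar Hessian, contracted Bianchi identity, and the commutator $[\Box,\nabla_\nu]$ on scalars) and correctly identifies that the only novelty over the free case is the extra source term $-\tfrac{\lambda}{(n-1)!}\phi_0^{n-1}$ in the bracket multiplying $\partial_\nu\phi_0$.
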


If we represent the functional in Equation \eqref{Eq: Interacting Stress-Energy Tensor} on $\mathrm{Pol}_{\Delta_+}(\mathcal{M})$, the quantum algebra of microcausal functionals, we face the same hurdles as in the free case. More precisely the Wick-ordered interacting stress-energy tensor is no longer divergence free. In making this statement precise, there is an additional complication which needs to be accounted for, namely the underlying dynamics is non-linear. Therefore, at a quantum level, one needs to replace the classical field with the quantum, interacting counterpart. As shown in Section \ref{Sec: Bogoliubov map} this is constructed via the Bogoliubov map, which yields a formal power series in the coupling constant $\lambda$. Hence, all statements, even the conservation of the stress-energy tensor, have to be understood in perturbation theory. 

In the following we shall prove that the framework introduced in \cite{moretti} can be adapted also to the interacting scenario. Therefore we shall individuate a modified stress-energy tensor showing that its quantum counterpart is divergence-free up to second order in perturbation theory. As a by-product also the quantum trace has to be modified accordingly. \\

Following the same line of reasoning outlined in Section \ref{Sec: Free Stress-Energy Tensor}, we exploit that the classical stress-energy tensor is not uniquely defined and one has the freedom of adding a contribution which, on shell, is both divergence free and traceless:
\begin{equation*}
    \eta g_{\mu \, \nu}(x) \left( \phi(x) P_0 \phi(x) - \frac{\lambda}{(n-1)!} \phi^{n}(x)\right).
\end{equation*}
At the level of functionals, this entails that we must consider $\boldsymbol{\mathcal{T}}^{\eta}_{\mu \, \nu} \in \textrm{Pol}_{loc}(\mathcal{M})$ such that
\begin{equation}\label{Eq: Extended Intracting Quantum Stress-Energy Tensor}
\boldsymbol{\mathcal{T}}^{\eta}_{\mu \, \nu}(z)=\boldsymbol{\mathcal{T}}_{\mu \, \nu}(z)+\eta g_{\mu\nu}(z)\left((\Phi P_0\Phi)(z)-\frac{\lambda}{(n-1)!}\Phi^{n}(z)\right).
\end{equation}

In order to establish which is the value of $\eta$ which yields a conserved quantum stress energy tensor, we need to represent Equation \eqref{Eq: Extended Intracting Quantum Stress-Energy Tensor} on $\mathrm{Pol}_{\Delta_+}(\mathcal{M})$. This requires in turn the application of the Bogoliubov map in Equation \ref{Eq: Bogoliubov map} to construct the interacting quantum observable. To this end, a direct inspection of Equation \eqref{Eq: Extended Intracting Quantum Stress-Energy Tensor} compared to the explicit form of the Bogoliubov map unveils that, in the construction of the quantum, interacting stress-energy tensor, the terms appearing all have the same algebraic form. In the following lemma we make explicit how we can evaluate them concretely. 

\begin{lemma}\label{Lem: Il conto si fa così}
Let $(\mathcal{M}, g)$ be a $4-$dimensional globally hyperbolic spacetime. Let 
\begin{equation*}
\begin{cases}
    D= a(x) \nabla_{\mu_1} ... \nabla_{\mu_i}, \, \, a \in C^{\infty}(\mathcal{M}), \\
    D^\prime = b(x) \nabla_{\nu_1} ... \nabla_{\nu_j}, \, \, b \in C^{\infty}(\mathcal{M})
\end{cases}
\end{equation*}
be finite-order differential operators constructed from the covariant derivative built out of $g$. Let $F\in\textrm{P}_{loc}(\mathcal{M})$ be the functional  
\begin{equation*}
    F_f[\phi] = \int_{\mathcal{M}}d \mu_x\, f(x) D \phi(x) D^\prime\phi(x),
\end{equation*}
where $f\in\Gamma_0(T\mathcal{M}^{\otimes (i+j)})$. Then, for any $K\in\mathcal{D}^\prime(\mathcal{M}\times\mathcal{M})$, $f,h\in\mathcal{D}(\mathcal{M})$, it holds that, working at the level of integral kernel
\begin{flalign}
    (V_h \star_K F_f)[\phi] &= V_h[\phi] F_f[\phi] - \frac{\hbar^2}{(n-2)!} \int_{\mathcal{M}^2} d\mu_x d\mu_z\, h(x) f(z) D K(x,z) D^\prime K(x,z) \phi^{(n-2)}(x) \notag\\& - \frac{\hbar}{(n-1)!} \int_{\mathcal{M}^2}\,d\mu_x d \mu_z h(x) f(z) [D K(x,z) D^\prime \phi(z) + D \phi(z) D^\prime K(x,z)] \phi^{(n-1)}(x),  \label{Eq: Useful product}
\end{flalign}
where $V_h$ is the interacting potential as per Equation \eqref{Eq: Action Terms}. 
\end{lemma}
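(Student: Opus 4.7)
The plan is to apply the definition of the deformed product $\star_K$ directly to the pair $(V_h, F_f)$, and to exploit the fact that, since $F_f$ is quadratic in $\phi$, the resulting $\hbar$-series terminates after three terms. More precisely, from Equation~\eqref{starprod} one has
\begin{equation*}
(V_h \star_K F_f)[\phi] = \sum_{k \geq 0} \frac{\hbar^k}{k!}\,\langle K^{\otimes k},\, V_h^{(k)}[\phi] \otimes F_f^{(k)}[\phi]\rangle,
\end{equation*}
so the observation $F_f^{(k)} = 0$ for every $k \geq 3$ reduces the calculation to the three contributions indexed by $k = 0, 1, 2$.

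First I would record the relevant functional derivatives. A direct application of Definition~\ref{funder}, combined with the Leibniz rule, yields, at the level of integral kernels,
\begin{align*}
V_h^{(1)}[\phi](x) &= -\frac{1}{(n-1)!}\,h(x)\,\phi^{n-1}(x),\\
V_h^{(2)}[\phi](x_1, x_2) &= -\frac{1}{(n-2)!}\,h(x_1)\,\phi^{n-2}(x_1)\,\delta(x_1, x_2),\\
F_f^{(1)}(\eta; \phi) &= \int d\mu_z\, f(z)\,\bigl[D\phi(z)\, D'\eta(z) + D\eta(z)\, D'\phi(z)\bigr],\\
F_f^{(2)}(\eta_1 \otimes \eta_2; \phi) &= \int d\mu_z\, f(z)\,\bigl[D\eta_1(z)\, D'\eta_2(z) + D\eta_2(z)\, D'\eta_1(z)\bigr],
\end{align*}
while all higher-order derivatives of $F_f$ vanish identically.

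Next I would evaluate the three surviving orders in turn. The $k=0$ contribution is simply the pointwise product $V_h[\phi]\,F_f[\phi]$. At order $\hbar$, the pairing against $K$ amounts to substituting $\eta(y) = K(x, y)$ into $F_f^{(1)}$ and then integrating against $V_h^{(1)}$; since $D$ and $D'$ act on the integration variable of $F_f$, they produce the operators applied to the second slot of $K$, which the statement denotes by $DK(x,z)$ and $D'K(x,z)$. Inserting the explicit form of $V_h^{(1)}$ then reproduces the $\mathcal{O}(\hbar)$ line of \eqref{Eq: Useful product}. At order $\hbar^2$, the Dirac delta inside $V_h^{(2)}$ collapses the $x_2$-integration, so both arguments of $F_f^{(2)}$ are tested against the same $K(x,\cdot)$; the two symmetric summands in $F_f^{(2)}$ then combine to cancel the $1/2!$ coming from the exponential series, yielding exactly the stated $\mathcal{O}(\hbar^2)$ contribution.

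No step poses a genuine conceptual obstacle. The distributional pairings are well-defined because $F_f$ is a local polynomial functional, so its derivatives are supported on $\mathrm{Diag}_2(\mathcal{M})$ and collapse the integration in $z$ to one in $x$, on which the compact support of $h \in \mathcal{D}(\mathcal{M})$ guarantees convergence. The only point requiring attention is the notational convention implicit in the lemma: in $DK(x,z)$ and $D'K(x,z)$ the differential operators act on the second argument of $K$, which is the one paired against the variable of $F_f$. Once this is made explicit, summing the three orders gives Equation~\eqref{Eq: Useful product}.
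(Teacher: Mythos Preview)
Your proposal is correct and follows exactly the approach indicated in the paper's own proof, namely a direct application of Equation~\eqref{starprod} together with the observation that $F_f$ is quadratic in $\phi$, so that the exponential series truncates; you have simply spelled out the functional derivatives and the three surviving orders in detail, which the paper leaves implicit.
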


\begin{proof}
    The statement comes from a direct application of Equation \eqref{starprod} together with the observation that only the first two terms in the exponential series contribute to Equation \eqref{Eq: Useful product}.
\end{proof}


\noindent Using this lemma we can focus at last on the expectation value of the quantum stress energy tensor. Applying Equation \eqref{Eq: Bogoliubov map}, barring minor adaptations, as in Section \ref{Sec: Examples} and then evaluating at the configuration $\phi=0$ as per Remark \ref{Rem: Expectation value}, we obtain applying recursively Lemma \ref{Lem: Il conto si fa così}
\begin{equation}\label{Eq: Expanded Stress-Energy Tensor}
    R_{\lambda V} (\boldsymbol{\mathcal{T}}^{\eta}_{\mu \, \nu}) \vert_{\phi = 0}(z) = \boldsymbol{\mathcal{T}}^{\eta}_{\mu \, \nu} \vert_{\phi = 0}(z) + R_{\lambda V}^{(2)} (\boldsymbol{\mathcal{T}}^{\eta}_{\mu \, \nu}) \vert_{\phi = 0}(z)+ \mathcal{O}(\lambda^3), 
\end{equation}
where the first order in $\lambda$ vanishes, while $R_{\lambda V}^{(2)} (\textbf{T}_{\mu \, \nu}) \vert_{\phi = 0}(z)$ accounts for the contributions of order $\lambda^2$ and it reads at the level of integral kernel
\begin{flalign}
R_{\lambda V}^{(2)} &(\boldsymbol{\mathcal{T}}^{\eta}_{\mu \, \nu}) \vert_{\phi = 0} (z) = \frac{\lambda^2 \hbar^{(n-1)}}{2 (n-1)!} \int_{\mathcal{M}^2} d \mu_x \, d \mu_y \, \, h(x) h(y)\notag \\ & \left[ 2H^{(n-1)}(x,y) A_{\mu\nu,1}(x,y,z) - H_{AF}^{(n-1)}(x,y) A_{\mu\nu,2}(x,y,z) - H_F^{(n-1)}(x,y)A_{\mu\nu,3}(x,y,z)\right]\label{Eq: Quantum Stress-Energy Tensor Splitting}
\end{flalign}
where 
\begin{flalign*}
A_{\mu\nu,1}(x,y,z)=&\partial_{\mu} H(x,z) \partial_{\nu} H_F(y,z) + \partial_{\nu} H(x,z) \partial_{\mu} H_F(y,z) -g_{\mu \, \nu}(z) \partial_{\rho} H(x,z) \partial^{\rho} H_F(y,z) \\ & - (g_{\mu \, \nu}(z) m^2 - 2 \xi G_{\mu \, \nu}(z)) H(x,z) H_F(y,z) + 4 \xi g_{\mu \, \nu}(z) \partial_{\rho} H(x,z) \partial^{\rho} H_F(y,z) \\& \left. \left. +  2 \xi g_{\mu \, \nu}(z) (H(x,z) \Box H_F(y,z) + \Box H(x,z) H_F(y,z) )- 2 \xi (\partial_{\mu} H(x,z) \partial_{\nu}H_F(y,z) \right. \right. \\ & \left. \left.  + \partial_{\nu} H(x,z) \partial_{\mu} H_F(y,z)) - 2\xi(H(x,z) \nabla_{\mu} \partial_{\nu} H_F(y,z) + \nabla_{\mu} \partial_{\nu} H(x,z) H_F(y,z)) \right. \right. \\ &  + \eta g_{\mu \, \nu}(z) (H(x,z)P_0 H_F(y,z) + P_0H(x,z) H_F(y,z)),
\end{flalign*}
while 
\begin{flalign*}
A_{\mu\nu,2}(x,y,z)=&\partial_{\mu} H(x,z) \partial_{\nu} H(y,z) + \partial_{\nu} H(x,z) \partial_{\mu} H(y,z) -g_{\mu \, \nu}(z) \partial_{\rho} H(x,z) \partial^{\rho} H(y,z) \\ 
& \left. \left. - (g_{\mu \, \nu}(z) m^2 - 2 \xi G_{\mu \, \nu}(z)) H(x,z) H(y,z) + 4 \xi g_{\mu \, \nu}(z) \partial_{\rho} H(x,z) \partial^{\rho} H(y,z) \right. \right. \\
& \left. \left.  + 2 \xi g_{\mu \, \nu}(z) (H(x,z) \Box H(y,z) + \Box H(x,z) H(y,z) ) - 2 \xi (\partial_{\mu} H(x,z) \partial_{\nu}H(y,z) \right. \right. \\ 
& \left. \left.  + \partial_{\nu} H(x,z) \partial_{\mu} H(y,z)) - 2\xi(H(x,z) \nabla_{\mu} \partial_{\nu} H(y,z) + \nabla_{\mu} \partial_{\nu} H(x,z) H(y,z)) \right. \right. \\ 
& \left. \left. + \eta g_{\mu \, \nu}(z) (H(x,z)P_0 H(y,z) + P_0H(x,z) H(y,z)) \right. \right.
\end{flalign*}
and
\begin{flalign*}
A_{\mu\nu,3}(x,y,z)=& \partial_{\mu} H_F(x,z) \partial_{\nu} H_F(y,z) + \partial_{\nu} H_F(x,z) \partial_{\mu} H_F(y,z) - g_{\mu \, \nu}(z) \partial_{\rho} H_F(x,z) \partial^{\rho} H_F(y,z)  \\ 
& \left. \left.  - (g_{\mu \, \nu}(z) m^2 - 2 \xi G_{\mu \, \nu}(z)) H_F(x,z) H_F(y,z) + 4 \xi g_{\mu \, \nu}(z) \partial_{\rho} H_F(x,z) \partial^{\rho} H_F(y,z) \right. \right. \\
& \left. \left. + 2 \xi g_{\mu \, \nu}(z) (H_F(x,z) \Box H_F(y,z) + \Box H_F(x,z) H_F(y,z) ) - 2 \xi (\partial_{\mu} H_F(x,z) \partial_{\nu}H_F(y,z) \right. \right. \\
& \left. \left.  + \partial_{\nu} H_F(x,z) \partial_{\mu} H_F(y,z)) - 2\xi(H_F(x,z) \nabla_{\mu} \partial_{\nu} H_F(y,z) + \nabla_{\mu} \partial_{\nu} H_F(x,z) H_F(y,z)) \right. \right. \\ &  + \eta g_{\mu \, \nu}(z) (H_F(x,z)P_0 H_F(y,z) + P_0H_F(x,z) H_F(y,z)) 
\end{flalign*}

\noindent Observe that in Equation \eqref{Eq: Quantum Stress-Energy Tensor Splitting}, one has to cope with terms proportional to $H^{(n-1)}_F$ and to $H^{(n-1)}_{AF}$ with $n =3,4$. As pointed out in Equation \eqref{Eq: Perturbative Squared Field} and in the subsequent discussion, these are ill-defined as distributions in $\mathcal{M}\times\mathcal{M}$. This can be bypassed, but in a non unique way. Such a renormalization ambiguity cannot be avoided. Here we assume implicitly that we are working with an arbitrary but fixed extension, amounting therefore to the choice of a renormalization scheme.

\begin{remark}\label{Rem: Assumption on H}
    In the following we shall work on a globally hyperbolic spacetime $(\mathcal{M}, g)$ and we replace $H$ with the two-point correlation function of a Gaussian Hadamard state. Yet, with a slight abuse of notation we shall still employ the symbol $H$ as well as $H_F$ for the corresponding Feynman propagator. This fact entails that in Equation \eqref{Eq: Quantum Stress-Energy Tensor Splitting} one can implement additional simplifications, which shall be crucial for the validity of the following results.   
\end{remark}


\begin{theorem}\label{Thm: Conservation Interacting Stress-Energy Tensor}
    Let $(\mathcal{M}, g)$ be the a $4$-dimensional globally hyperbolic spacetime and let $V$ be the interaction potential as in Equation \eqref{Eq: Action Terms}. Let $R_{\lambda V} (\boldsymbol{\mathcal{T}}^{\eta}_{\mu \, \nu})(z)$ be the integral kernel of the quantum, interacting stress-energy tensor as in Equation \eqref{Eq: Extended Intracting Quantum Stress-Energy Tensor}. Then it holds that
    \begin{equation}\label{Eq: Divergence Stress-Energy Tensor order 2}
         \nabla^\mu  R_{\lambda V} (\boldsymbol{\mathcal{T}}^{\eta}_{\mu \, \nu}) \vert_{\phi = 0}(z)=\frac{1-3\eta}{4\pi^2}\partial_\nu v_1(z,z)+\mathcal{O}(\lambda^3),
    \end{equation}
    if and only if $\eta = \frac{1}{n}$. 
\end{theorem}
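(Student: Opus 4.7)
The natural first step is to expand $R_{\lambda V}(\boldsymbol{\mathcal{T}}^{\eta}_{\mu\nu})|_{\phi=0}$ order-by-order in $\lambda$ via Equation \eqref{Eq: Expanded Stress-Energy Tensor}. The order $\lambda^0$ contribution coincides with the free-field Wick-ordered stress-energy tensor treated in Theorem \ref{valter}; a direct extension of Moretti's calculation to arbitrary $\eta$, using statement 2 of that theorem (which identifies $\wickk{\Phi P_0\Phi}$ with the constant $\frac{3}{4\pi^2}v_1(z,z)\mathbf{1}$) together with the explicit Hadamard-expansion computation of $\nabla^\mu\wickk{\textbf{T}_{\mu\nu}}$, yields precisely the stated $\frac{1-3\eta}{4\pi^2}\partial_\nu v_1(z,z)$. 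The order $\lambda^1$ contribution vanishes identically: after Wick contractions, every surviving term contains an odd number of unpaired fields evaluated at $\phi = 0$. The content of the theorem thus reduces to showing that $\nabla^\mu R^{(2)}_{\lambda V}(\boldsymbol{\mathcal{T}}^{\eta}_{\mu\nu})|_{\phi=0}(z) = 0$ if and only if $\eta = 1/n$.

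To verify this, I bring $\nabla^\mu_z$ inside the double integral of Equation \eqref{Eq: Quantum Stress-Energy Tensor Splitting} and apply it to each of the kernels $A_{\mu\nu,k}(x,y,z)$, $k=1,2,3$. Each $A_{\mu\nu,k}$ has the algebraic form of the free stress-energy density bilinear evaluated on an ordered pair of parametrices $(K_1(\cdot,z), K_2(\cdot,z))$ drawn from $\{H, H_F, H_{AF}\}$. The same classical divergence manipulation used in Theorem \ref{valter} applies verbatim: the kinetic, mass and curvature contributions cancel via the Bianchi identity $\nabla^\mu G_{\mu\nu} = 0$ and the commutator $[\nabla_\nu,\Box] = -R_\nu{}^\rho\nabla_\rho$, reducing $\nabla^\mu_z A_{\mu\nu,k}$ to a bilinear expression in $P_{0,z}K_i$ and $\partial_\nu K_j$, plus the $\eta$-improvement $\eta\,\partial_\nu[K_1(P_{0,z}K_2) + (P_{0,z}K_1)K_2]$. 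Under Remark \ref{Rem: Assumption on H} one has $P_0 H(\cdot,z) = 0$ exactly, while $P_0 H_F(\cdot,z) = i\,\delta_{\mathrm{Diag}_2}$ and $P_0 H_{AF}(\cdot,z) = -i\,\delta_{\mathrm{Diag}_2}$; these Dirac delta sources collapse one of the two $\mathcal{M}$-integrations, turning the three-point integrands of Equation \eqref{Eq: Quantum Stress-Energy Tensor Splitting} into two-point tadpole contributions.

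The $\pm i$ imaginary pieces produced by $H_F$ and $H_{AF}$ interlock between the three summands of Equation \eqref{Eq: Quantum Stress-Energy Tensor Splitting}, whose relative weights $+2,-1,-1$ are precisely those for which all imaginary contributions cancel pairwise, leaving a single real integral multiplied by an overall coefficient $(1-n\eta)$. The factor $n$ arises because the $\eta$-improvement places $\partial_\nu$ outside the full quadratic product: after the delta collapse, the residual $\eta$ piece develops $n$ equivalent contractions by Leibniz on the product $K^{n-1}\cdot K$, whereas the non-$\eta$ piece produces only one. Setting $(1-n\eta) = 0$ yields the forward direction $\eta = 1/n$; conversely, the surviving integral is generically non-trivial (it reduces, on Minkowski spacetime, to a standard position-space one-loop diagram), so the coefficient $(1-n\eta)$ must itself vanish whenever the identity in Equation \eqref{Eq: Divergence Stress-Energy Tensor order 2} holds, establishing the `iff'. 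The main technical obstacle is the careful signed bookkeeping among the three parametrix families and the verification of the pairwise cancellation of the imaginary pieces; a subsidiary subtlety, already anticipated in Remark \ref{Rem: Powers of HF}, is the renormalization ambiguity in $H_F^{n-1}$ and $H_{AF}^{n-1}$, which is harmless since the distributional identities $P_0 H_F = i\,\delta_{\mathrm{Diag}_2}$ and $P_0 H_{AF} = -i\,\delta_{\mathrm{Diag}_2}$ driving the cancellation are preserved under every admissible extension.
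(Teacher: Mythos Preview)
Your overall architecture matches the paper: handle $\lambda^0$ via Moretti's free-field computation (giving the $(1-3\eta)/4\pi^2\,\partial_\nu v_1$ term), note that $\lambda^1$ vanishes, and at $\lambda^2$ reduce $\nabla^\mu A_{\mu\nu,k}$ via the Ricci commutator $[\nabla_\nu,\Box]=-R_{\nu}{}^\rho\nabla_\rho$ to the obstruction $-P_0K_1\,\partial_\nu K_2 - \partial_\nu K_1\,P_0 K_2$, then collapse integrations using $P_0 H=0$ and $P_0 H_F\propto\delta$. The final coefficient $(1-n\eta)$ and the conclusion $\eta=1/n$ are also what the paper obtains.

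However, the specific cancellation mechanism you describe is incorrect. Inspect the explicit formulae following Equation \eqref{Eq: Quantum Stress-Energy Tensor Splitting}: $H_{AF}$ appears \emph{only} as the $(x,y)$-prefactor $H_{AF}^{(n-1)}(x,y)$ multiplying $A_{\mu\nu,2}$, and the $z$-dependent kernel $A_{\mu\nu,2}$ itself is built from the pair $(H,H)$, not from $H_{AF}$. Thus $P_0 H_{AF}=-i\delta$ is never invoked, and there is no ``interlocking of $\pm i$ imaginary pieces via the $+2,-1,-1$ weights''. What actually happens is simpler: since both entries of $A_{\mu\nu,2}$ are $H$ with $P_0 H=0$, the entire non-$\eta$ part of that summand has vanishing divergence; the surviving contributions are $-H^{(n-1)}\partial_\nu H$ (from $A_{\mu\nu,1}$, one delta collapse since only $K_2=H_F$) and $+H_F^{(n-1)}\partial_\nu H_F$ (from $A_{\mu\nu,3}$, two delta collapses), each separately balanced against the $\eta$-improvement, whose Leibniz expansion $\partial_\nu H^n=nH^{n-1}\partial_\nu H$ supplies the factor $n$. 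Your outline would lead to the right answer upon execution, but the narrative about $H_{AF}$ reflects a misreading of the structure of the $A_{\mu\nu,k}$ and should be corrected.
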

\begin{proof}
    Focusing first of all on $R_{\lambda V}^{(2)} (\boldsymbol{\mathcal{T}}^{\eta}_{\mu \, \nu}) \vert_{\phi = 0}(z)$, in Equation \eqref{Eq: Quantum Stress-Energy Tensor Splitting}, we have identified three terms proportional respectively to $H^{(n-1)}(x,y)$, $H^{(n-1)}_{AF}(x,y)$ and $H^{(n-1)}_F(x,y)$. From an algebraic viewpoint their contribution is the same namely of the form $H^{(n-1)}(x,y)A_{\mu\nu}(x,y,z)$ where we suppress the subscript which distinguished between the possible scenarios. Hence, by applying Leibnitz rule, 
    $$\nabla^\mu H^{(n-1)}(x,y)A_{\mu\nu}(x,y,z)=(n-1) H^{(n-2)}\nabla^\mu H(x,y) A_{\mu\nu}(x,y,z)+H^{(n-1)}(x,y) \nabla^\mu A_{\mu\nu}(x,y,z).$$
In turn, in all cases under scrutiny, the kernel $A_{\mu\nu}(x,y,z)$ has the same structure and we denote it for simplicity as 
    \begin{flalign*}
A_{\mu\nu}(x,y,z)=&\partial_{\mu} H_1(x,z) \partial_{\nu} H_2(y,z) + \partial_{\nu} H_1(x,z) \partial_{\mu} H_2(y,z) -g_{\mu \, \nu}(z) \partial_{\rho} H_1(x,z) \partial^{\rho} H_2(y,z) \\ & - (g_{\mu \, \nu}(z) m^2 - 2 \xi G_{\mu \, \nu}(z)) H_1(x,z) H_2(y,z) + 4 \xi g_{\mu \, \nu}(z) \partial_{\rho} H_1(x,z) \partial^{\rho} H_2(y,z) \\& \left. \left. +  2 \xi g_{\mu \, \nu}(z) (H_1(x,z) \Box H_2(y,z) + \Box H_1(x,z) H_2(y,z) )- 2 \xi (\partial_{\mu} H_1(x,z) \partial_{\nu}H_2(y,z) \right. \right. \\ & \left. \left.  + \partial_{\nu} H_1(x,z) \partial_{\mu} H_2(y,z)) - 2\xi(H_1(x,z) \nabla_{\mu} \partial_{\nu} H_2(y,z) + \nabla_{\mu} \partial_{\nu} H_1(x,z) H_2(y,z)) \right. \right. \\ &  + \eta g_{\mu \, \nu}(z) (H_1(x,z)P_0 H_2(y,z) + P_0H_1(x,z) H_2(y,z)).
\end{flalign*}
This entails that $\nabla^{\mu} A_{\mu \, \nu}(x,y,z)$ reads
    \begin{flalign}
        \nabla^{\mu} &\left[\partial_{\mu} H_1(x,z) \partial_{\nu} H_2(y,z) + \partial_{\nu} H_1(x,z) \partial_{\mu} H_2(y,z) \right. \notag\\ 
        & \left. -g_{\mu \, \nu}(z) \partial_{\rho} H_1(x,z) \partial^{\rho} H_2(y,z) - (g_{\mu \, \nu}(z) m^2 - 2 \xi G_{\mu \, \nu}(z)) H_1(x,z) H_2(y,z) \right.  \notag\\
        &  \left. + 4 \xi g_{\mu \, \nu}(z) \partial_{\rho} H_1(x,z) \partial^{\rho} H_2(y,z) + 2 \xi g_{\mu \, \nu}(z) (H_1(x,z) \Box H_2(y,z) \right. \notag\\ 
        & \left. \Box H_1(x,z) H_2(y,z) ) - 2 \xi (\partial_{\mu} H_1(x,z) \partial_{\nu}H_2(y,z) + \partial_{\nu} H_1(x,z) \partial_{\mu} H_2(y,z))  \right. \notag\\ 
        & \left. - 2\xi(H_1(x,z) \nabla_{\mu} \partial_{\nu} H_2(y,z) + \nabla_{\mu} \partial_{\nu} H_1(x,z) H_2(y,z)) \right] = \notag \\ 
        &= \Box H_1(x,z) \partial_{\nu} H_2(y,z) + \partial_{\nu} H_1 (x,z) \Box H_2(y,z) \notag \\ 
        &-(m^2 + \xi R) (\partial_{\nu} H_1(x,z) H_2(y,z) + H_1(x,z) \partial_{\nu} H_2(y,z)) \notag\\ 
        & 2\xi (H_1(x,z) \partial_{\nu} \Box H_2(y,z) - H_1(x,z) \Box \partial_{\nu} H_2(y,z) + R_{\mu \, \nu} H_1(x,z) \partial^{\mu} H_2(y,z)) \notag
        \\ & + 2 \xi (\partial_{\nu} \Box H_1(x,z) H_2(y,z) - \Box \partial_{\nu} H_1(x,z) H_2(y,z) + R_{\mu \, \nu} \partial^{\mu} H_1(x,z) H_2(y,z)), \label{Eq: Usueful Formula}
    \end{flalign}
where $H_1,H_2$ are suitable parametrices depending on the various cases. Using the identity valid on scalar functions
\begin{equation}
    \nabla_{\nu} \Box  - \Box \nabla_{\nu}  = - R_{\mu \, \nu} \nabla^{\mu},
\end{equation}
the last two terms in Equation \eqref{Eq: Usueful Formula} vanish and we obtain 
\begin{equation*}
    \nabla^{\mu} [...] = - P_0 H_1(x,z) \partial_{\nu} H_2(y,z) - \partial_{\nu} H_1(x,z) P_0 H_2(y,z). 
\end{equation*}
Summing up, we have obtained, at the level of integral kernel
\begin{flalign}
\label{Eq: nabla R}
    \notag \nabla^{\mu} &R_{\lambda V} (\boldsymbol{\mathcal{T}}^{\eta}_{\mu \, \nu}) \vert_{\phi = 0} (z) = \frac{\lambda^2 \hbar^{(n-1)}}{2(n-1)!}\int_{\mathcal{M}^2} d\mu_x d\mu_y h(x) h(y) \\ \notag
    &\left \{ 2 H^{(n-1)}(x,y) [-P_0 H(x,z) \partial_{\nu} H_F(y,z) - \partial_{\nu} H(x,z) P_0 H_F(y,z)]  \right. \\ \notag
    & \left. -  H_{AF}^{(n-1)}(x,y) [-P_0 H(x,z) \partial_{\nu} H(y,z) - \partial_{\nu} H(x,z) P_0 H(y,z)] \right. \\ \notag
    & \left. -  H_{F}^{(n-1)}(x,y) [-P_0 H_F(x,z) \partial_{\nu} H_F(y,z) - \partial_{\nu} H_F(x,z) P_0 H_F(y,z)] \right\}  \\
    & + \eta \frac{\lambda^2 \hbar^{(n-1)}}{(n-1)!} \int_{\mathcal{M}} d\mu_y \,  h(y) g_{\mu \, \nu}(z) \nabla^{\mu} [H^{n}(z,y) - H_F^n(z,y)]+\mathcal{O}(\lambda^3).
\end{flalign}\\
Focusing on the last two terms in the previous equation, it holds that
\begin{flalign*}
    \int_{\mathcal{M}} &d\mu_y \, h(y)\,  g_{\mu \, \nu}(z) \nabla^{\mu} [H^{n}(z,y) - H_F^n(z,y)] = \\ & = n \int_{\mathcal{M}} d\mu_y\,  h(y)  [H^{(n-1)}(z,y) \partial_{\nu} H(z,y) - H_F^{(n-1)}(z,y) \partial_{\nu} H_F(z,y)]. 
\end{flalign*}

\noindent Bearing in mind Remark \ref{Rem: Assumption on H}, we can further simplify Equation \eqref{Eq: nabla R} exploiting the following relations, which hold true at the level of integral kernels, \textit{i.e.},  
\begin{equation*}
    P_0 H(x,z) = 0, \, \; \; \, P_0 H_F(y,z) = \delta(y,z). 
\end{equation*}

\noindent Hence, it is possible to infer that

\begin{flalign*}
    \nabla^{\mu} &R_{\lambda V}^{(2)} (\boldsymbol{\mathcal{T}}^{\eta}_{\mu \, \nu}) \vert_{\phi = 0}(z) = \frac{\lambda^2 \hbar^{(n-1)}}{2(n-1)!}\int_{\mathcal{M}} d\mu_y \, h(y) \, \\ &\left \{ -2 H^{(n-1)}(z,y) \partial_{\nu} H(z,y)] +2 H_{F}^{(n-1)}(z,y) \partial_{\nu} H_F(z,y) \right\} \\ & + \eta \frac{\lambda^2 \hbar^{(n-1)} n}{(n-1)!} \int_{\mathcal{M}} d\mu_y \,  h(y)   [H^{(n-1)}(z,y) \partial_{\nu} H(z,y) - H_F^{(n-1)}(z,y) \partial_{\nu} H_F(z,y)]+\mathcal{O}(\lambda^3).
\end{flalign*}
As a consequence, if we set $\eta = \frac{1}{n}$ either for $n=3$ or $n=4$, the last expression vanishes. To conclude the proof we need to control also the leading order contribution, namely $\nabla^\mu (\boldsymbol{\mathcal{T}}^{\frac{1}{n}}_{\mu \, \nu}) \vert_{\phi = 0}(z)$. We can rely on \cite[Thm 2.1]{moretti} and on its proof, where it is shown that, on a $4$-dimensional spacetime
\begin{equation}\label{Eq: epiphany}
\nabla^\mu  R_{\lambda V} (\boldsymbol{\mathcal{T}}^{\eta}_{\mu \, \nu}) \vert_{\phi = 0}(z)=\frac{1-3\eta}{4\pi^2}\partial_\nu v_1(z,z),
\end{equation}
where $v_1(x,y)$ is the coefficient appearing in the expansion in Equation \eqref{Eq: Hadamard expansion}. 
\end{proof}

\begin{corollary}\label{Cor: Conservation on Minkowski}
    Under the same hypothesis of Theorem \ref{Thm: Conservation Interacting Stress-Energy Tensor} and assuming in addition that $(\mathcal{M}, g)$ coincides with the four-dimensional Minkowski spacetime, then 
    $$\nabla^\mu   R_{\lambda V} (\boldsymbol{\mathcal{T}}^{\eta}_{\mu \, \nu}) \vert_{\phi = 0}(z)=\mathcal{O}(\lambda^3).$$
\end{corollary}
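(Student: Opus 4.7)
The plan is to obtain this corollary as an immediate specialization of Theorem \ref{Thm: Conservation Interacting Stress-Energy Tensor}. Once $\eta$ is fixed to the value $1/n$ dictated by that theorem, the identity in Equation \eqref{Eq: Divergence Stress-Energy Tensor order 2} reduces the problem to showing that $\partial_\nu v_1(z,z)=0$ on four-dimensional Minkowski spacetime. Accordingly the first step is to recall the explicit expression of $v_1(z,z)$ displayed in Equation \eqref{Eq: v1}, which expresses this coefficient in terms of the mass parameter together with curvature invariants constructed from $g$.

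Since Minkowski spacetime is flat, all geometric tensors -- Riemann, Ricci, scalar curvature and the Weyl tensor -- vanish identically. Substituting into Equation \eqref{Eq: v1} collapses $v_1(z,z)$ to the constant $m^4/8$, whose gradient is identically zero. Plugging $\partial_\nu v_1(z,z)=0$ back into Equation \eqref{Eq: Divergence Stress-Energy Tensor order 2} yields the claimed $\mathcal{O}(\lambda^3)$ estimate, irrespective of whether one chooses $n=3$ (in which case the prefactor $\frac{1-3\eta}{4\pi^2}$ already vanishes) or $n=4$ (in which case the prefactor survives but is multiplied by a constant).

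I anticipate no real obstacle in implementing this plan: the argument is essentially tautological once Theorem \ref{Thm: Conservation Interacting Stress-Energy Tensor} is in hand, because the entire geometric content of the anomalous contribution $\frac{1-3\eta}{4\pi^2}\partial_\nu v_1(z,z)$ is washed out by flatness. The only genuine content of the statement is thus that Minkowski spacetime is maximally symmetric, so that the coincident-point value of $v_1$ is a constant, as was already anticipated in the discussion around Equation \eqref{Eq: v1}. In particular, no appeal to the regularization freedoms in the construction of the Wick-ordered stress-energy tensor is required at this order, which explains why the authors single out Minkowski (and more generally maximally symmetric backgrounds) as the natural arena for the unambiguous version of the result.
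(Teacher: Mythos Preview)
Your proposal is correct and follows essentially the same route as the paper: the paper's proof simply invokes Equation \eqref{Eq: Divergence Stress-Energy Tensor order 2} and observes, via Equation \eqref{Eq: v1}, that on Minkowski spacetime $v_1(z,z)=\frac{m^4}{8}$ is constant, so its derivative vanishes. Your additional remark distinguishing the cases $n=3$ and $n=4$ is a harmless elaboration of the same argument.
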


\begin{proof}
    The statement descends from Equation \eqref{Eq: Divergence Stress-Energy Tensor order 2} observing that, on account of Equation \eqref{Eq: v1}, $v_1(z,z)=\frac{m^4}{8}$, hence it is constant.
\end{proof}

Having established an expression for the divergence of the quantum stress-energy tensor up to second order in perturbation theory, we can investigate the associated trace. In the following we focus once more on a four-dimensional background and we assume that the field is conformally coupled to the scalar curvature and that the interaction potential is quartic in the field.

\begin{theorem}\label{Thm: Interacting Trace Anomaly}
    Let $R_{\lambda V} (\boldsymbol{\mathcal{T}}^{\eta}_{\mu \, \nu})(x) \vert_{\phi = 0}$ be the integral kernel of the interacting, quantum stress-energy tensor expanded up to second order in perturbation theory, as per Equation \eqref{Eq: Expanded Stress-Energy Tensor} on a globally hyperbolic spacetime $(\mathcal{M},g)$. Then, if $\eta = \frac{1}{4}$ and $\xi = \frac{1}{6}$, it holds that 
    \begin{equation}\label{eq: Trace Anomaly 2nd order}
        g^{\mu \, \nu} R_{\lambda V} (\boldsymbol{\mathcal{T}}^{\eta}_{\mu \, \nu})(z) \vert_{\phi = 0} =  - m^2 (R_{\lambda V}^{(0)} (\Phi^2)+R^{(2)}_{\lambda V}(\Phi^2))\vert_{\phi = 0}(z)+\mathcal{O}(\lambda^3).
    \end{equation}
\end{theorem}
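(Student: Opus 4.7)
The argument is a direct order-by-order computation of $g^{\mu\nu} R_{\lambda V}(\boldsymbol{\mathcal{T}}^{1/4}_{\mu\nu})\vert_{\phi=0}$ up to $\mathcal{O}(\lambda^3)$, exploiting the joint specialization $\xi = 1/6$, $\eta = 1/4$, $n = 4$ to collapse an a priori long algebraic expression. I would first apply $g^{\mu\nu}$ to the classical functional $\boldsymbol{\mathcal{T}}^{\eta}_{\mu\nu}$ and, using $g^{\mu\nu}G_{\mu\nu} = -R$, $g^{\mu\nu}\nabla_\mu\nabla_\nu = \Box$, the Leibniz identity $\Box\Phi^2 = 2\Phi\Box\Phi + 2\partial_\rho\Phi\partial^\rho\Phi$ and $\Box = -P_0 + m^2 + \xi R$, reduce the trace to
\[
g^{\mu\nu}\boldsymbol{\mathcal{T}}^{\eta}_{\mu\nu} = (6\xi-1)\partial_\rho\Phi\partial^\rho\Phi + (6\xi-2)m^2\Phi^2 + (6\xi^2-\xi)R\Phi^2 + (4\eta-6\xi)\Phi P_0\Phi - \frac{4\lambda(1+n\eta)}{n!}\Phi^n.
\]
At $\xi=1/6$, $\eta=1/4$, $n=4$ this collapses to $-m^2\Phi^2 - \tfrac{\lambda}{3}\Phi^4$, which vanishes at $\phi=0$ and thus matches $-m^2 R^{(0)}_{\lambda V}(\Phi^2)\vert_{\phi=0}=0$.

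The substance of the argument lies at $\mathcal{O}(\lambda^2)$. I would contract Equation \eqref{Eq: Quantum Stress-Energy Tensor Splitting} with $g^{\mu\nu}$ and compute each $g^{\mu\nu} A_{\mu\nu, i}(x,y,z)$ for $i\in\{1,2,3\}$. Using the same identities, the kinetic-type combinations $\partial_\rho K\partial^\rho K'$ collect a coefficient $12\xi-2 = 0$ at $\xi = 1/6$, and the $6\xi(K\Box K' + \Box K\cdot K')$ piece, after substituting $\Box K = -P_0 K + m^2 K + \tfrac{R}{6}K$, contributes bulk mass and curvature terms which exactly cancel $-(4m^2 + 2\xi R)KK'$, plus a residual $K P_0 K' + P_0 K\cdot K'$ combination. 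The latter merges with the explicit $4\eta(K P_0 K' + P_0 K\cdot K')$ into a single coefficient $(4\eta-1)$, which vanishes at $\eta = 1/4$. What remains is the pure mass residual $g^{\mu\nu}A_{\mu\nu,i}(x,y,z) = -2m^2\,K^{(1)}_i(x,z)K^{(2)}_i(y,z)$, with $(K^{(1)}_i,K^{(2)}_i)$ equal to $(H,H_F)$, $(H,H)$ and $(H_F,H_F)$ for $i=1,2,3$. Inserting these back into \eqref{Eq: Quantum Stress-Energy Tensor Splitting} produces term-by-term exactly $-m^2$ times the right-hand side of Equation \eqref{Eq: Perturbative Squared Field}, i.e.\ $-m^2 R^{(2)}_{\lambda V}(\Phi^2)\vert_{\phi=0}(z)$.

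The main obstacle will be the careful bookkeeping of the many algebraic terms in each $A_{\mu\nu, i}$ and the coordinated collapse of both the $(6\xi-1)$ and $(4\eta-1)$ coefficients: organizing the computation so that both drop out cleanly requires systematically eliminating $\Box$ in favour of $P_0$ on every parametrix, and carefully tracking the signs produced by the commutator terms. A second, more delicate issue is checking that the $\mathcal{O}(\lambda^2)$ piece originating from $R^{(1)}_{\lambda V}$ applied to the $\lambda$-linear term $-\tfrac{\lambda(1+n\eta)}{n!}g_{\mu\nu}\Phi^n$ of $\boldsymbol{\mathcal{T}}^{1/4}_{\mu\nu}$, which generates kernels proportional to $H^n - H_F^n$, either vanishes identically or is reabsorbed by the renormalization ambiguity in the extension of $H_F^n$ to the diagonal discussed in Remark \ref{Rem: Renormalization ambiguities in the example}. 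Combining the zeroth- and second-order contributions then yields Equation \eqref{eq: Trace Anomaly 2nd order} up to $\mathcal{O}(\lambda^3)$.
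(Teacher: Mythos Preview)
Your approach is essentially the same as the paper's: trace each $A_{\mu\nu,i}$ directly, collapse the derivative and $\Box$ contributions via the vanishing coefficients $(6\xi-1)$ and $(4\eta-1)$, and identify the surviving $-2m^2\,H_1 H_2$ pieces with $-m^2 R^{(2)}_{\lambda V}(\Phi^2)\vert_{\phi=0}$ using Equation~\eqref{Eq: Perturbative Squared Field}. Your extra attention to the $R^{(1)}$-on-$\lambda\Phi^n$ contribution is warranted---the paper's proof does not spell it out for $n=4$, and indeed for the cubic case the analogous $H^3 - H_F^3$ term reappears explicitly in the subsequent remark.
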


\begin{proof}
    Observe that, starting from Equation \eqref{Eq: Expanded Stress-Energy Tensor}, two contributions appear in the trace. The first can be read as as a modification of Equation \eqref{traceT} to the case $\eta = \frac{1}{4}$. Hence, using Equation (72) in \cite{moretti}, it holds that 
    \begin{equation*}
         \boldsymbol{\mathcal{T}}^{\frac{1}{4}}_{\mu \, \nu}(z) \vert_{\phi = 0} = - m^2 \Phi^2(z) \vert_{\phi = 0},
    \end{equation*}\
    where $\Phi^2(z) \vert_{\phi = 0} = R_{\lambda V}^{(0)} (\Phi^2)$. Observe that, contrary to the case with $\eta=\frac{1}{3}$ in the free field theory, namely Equation \eqref{traceT}, there is no contribution coming from $v_1(z,z)$ since, in Equation (72) in \cite{moretti}, this is proportional to a factor $(4\eta-1)$. The second contribution is given by the trace of $R_{\lambda V}^{(2)}(\boldsymbol{\mathcal{T}}^{\eta}_{\mu \, \nu})(z)$. In the same spirit of the proof of Theorem \ref{Thm: Conservation Interacting Stress-Energy Tensor}, we exploit Equation \eqref{Eq: Quantum Stress-Energy Tensor Splitting}. Hence we need to consider three terms which have the same algebraic structure and, using the same notation as in Equation \eqref{Eq: Usueful Formula}, we can write at the level of integral kernels 
    \begin{flalign*}
        g^{\mu \, \nu} A_{\mu\nu}(x,y,z) =& \frac{\lambda^2 \hbar^3}{12}\left[ 2 \partial^{\nu} H_1(x,z) \partial_{\nu}H_2(y,z) + \right.  - 4 \partial_{\nu} H_1(x,z) \partial^{\nu} H_2(y,z) - 4 m^2H_1(x,z) H_2(y,z) \\ 
        & - 2 \xi R(z) H_1(x,z) H_2(y,z) +  16 \xi \partial_{\nu} H_1(x,z) \partial^{\nu} H_2(y,z) + 8 \xi (H_1(x,z) \Box H_2(y,z) \\
        & \left.+ \Box H_1(x,z) H_2(y,z)) - 4 \xi \partial^{\nu} H_1(x,z) \partial_{\nu} H_2(y,z) - 2\xi (H_1(x,z) \Box H_2(y,z) \right. \\
        &  \left. + \Box H_1(x,z) H_2(y,z)) + (H_1(x,z) [-\Box + m^2 + \xi R] H_2(y,z)) - P_0 H_1(x,z) H_2(y,z)\right. \\
        & = \left[ 2 (6 \xi -1) \partial_{\nu} H_1(x,z) \partial^{\nu} H_2(y,z) - 2 m^2 H_1(x,z) H_2(y,z) \right. \\
        & + \left. (6\xi-1) H_1(x,z) \Box H_2(y,z) + (6\xi -1) \Box H_1(x,z) H_2(y,z) \right],
    \end{flalign*}
    where all derivatives are understood to be with respect to the $z$ variable. In the above equation, we exploited the fact that \begin{equation*}
    \nabla^{\mu} \partial_{\mu} H_i(x,z) = \Box H_i(x,z), \, i = 1,2. 
\end{equation*} and we made use of the relations
\begin{equation*}
    g^{\mu \, \nu} g_{\mu \, \nu} = 4, \, \; \; \; \, g^{\mu \, \nu} G_{\mu \, \nu} = -R,
\end{equation*} together with Equation \eqref{Eq: KG}. Recall that, as in Theorem \ref{Thm: Conservation Interacting Stress-Energy Tensor}, see also Remark \ref{Rem: Assumption on H}, we are assuming that we are working with the two-point correlation function of a Hadamard state and with its associated Feynman propagator. Hence, if we set $\xi = \frac{1}{6}$ and if we gather all results, we end up with 
   \begin{flalign*}
        g^{\mu \, \nu} & R_{\lambda V}^{(2)} (\boldsymbol{\mathcal{T}}^{\eta}_{\mu \, \nu})(z) \vert_{\phi = 0} = - 2 m^2 \frac{\lambda^2 \hbar^3}{12} \int_{\mathbb{M}^2} d\mu_x \, d \mu_y \, h(x) h(y) \\ & \left\{ 2 H^3(x,y) H(x,z) H_F(y,z) - H^3_{AF}(x,y) H(x,z) H(y,z) - H^3_F (x,y) H_F(x,z) H_F(y,z) \right\} = \\ & = - m^2 R^{(2)}_{\lambda V} (\Phi^2)(z)\vert_{\phi = 0}, 
    \end{flalign*}
which is the sought result.
\end{proof}

Theorem \ref{Thm: Interacting Trace Anomaly} combined with Corollary \ref{Cor: Conservation on Minkowski} shows that, for a self-interacting real scalar field with a quartic potential and at second order in perturbation theory, it is possible to construct on Minkowski spacetime a conserved quantum stress-energy tensor, but the price to pay is a modification of the underlying quantum trace. This is no longer stemming from an anomalous term as in the free theory but it is built out of a contribution proportional to the squared, interacting field, which vanishes in the massless case. This is a by-product of the modification of the coefficient $\eta$ from $\frac{1}{3}$ in the free field theory, see \cite{moretti}, to $\eta=\frac{1}{4}$ as in Theorem \ref{Thm: Conservation Interacting Stress-Energy Tensor}, which makes the potential quadratic in the underlying fields.

 \begin{remark}
    If we consider on Minkowski spacetime a cubic self-interaction, it is possible to repeat mutatis mutandis the same analysis as in Theorem \ref{Thm: Interacting Trace Anomaly}. We limit ourselves at reporting the counterpart of Equation \eqref{eq: Trace Anomaly 2nd order}, namely, setting $\eta=\frac{1}{3}$ and $\xi=\frac{1}{6}$, we obtain at the level of integral kernels
    \begin{gather*}
 g^{\mu \, \nu} R_{\lambda V} (\boldsymbol{\mathcal{T}}^{\frac{1}{3}}_{\mu \, \nu})(z) \vert_{\phi = 0} =\\
    \frac{1}{4\pi^2}v_1(z,z)+\frac{\lambda^2 \hbar^2}{6} \int_{\mathcal{M}^2} d\mu_x \, [H^3(x,z) - H^3_F(x,z)] h(x) h(z)  - \frac{5}{3} m^2 \left.R^{(2)}_{\lambda V}(\Phi^2)(z)\right|_{\phi=0},
    \end{gather*}
    where $v_1(z,z)=\frac{m^4}{8}$, see Equation \eqref{Eq: v1}. The first term coincides with the anomalous contribution present in the free field theory in \cite{moretti} since, in this scenario,  $\eta=\frac{1}{3}$, precisely as in Theorem \ref{valter}, while the second term, proportional to $\lambda^2$, can be ascribed to the effect of the interaction potential. 
\end{remark}

\begin{remark}
    We observe that our result is in full agreement with \cite{Hathrell:1981zb}. Herein, using path-integral techniques, it is argued that, if one considers the conformally invariant case, that is $m=0$ and $\xi=\frac{1}{6}$, the first $\lambda$-dependent contribution to the trace anomaly should appear at the next order in perturbation theory. We plan to confirm this statement in a future work. It is worth mentioning that setting $\eta=\frac{1}{n}$ entails that we are considering a quantum stress-energy tensor which is quadratic in the fields. This supports the statement that, setting $m=0$, we obtain a state independent trace anomaly. However, if, at higher orders in perturbation theory, $\eta$ becomes $\lambda$-dependent, this feature would no longer be present.
\end{remark}

\begin{remark}
 One might wonder if the results obtained in Corollary \ref{Cor: Conservation on Minkowski} can be extended to a generic, four-dimensional, globally hyperbolic spacetime. Equation \eqref{Eq: epiphany} entails that one needs to address the issue of $\partial_\mu v_1(z,z)$. A direct comparison with Equation \eqref{Eq: v1} entails that, unless one considers rather special backgrounds, such as all maximally symmetric spacetimes, this is not necessarily vanishing.  Yet, as highlighted in \cite{hollands2}, one can exploit that, already at the free field level, there exists a regularization ambiguity in defining the Wick ordered stress-energy tensor. This can be used to cancel the non vanishing contribution in Equation \eqref{Eq: epiphany} adding a geometric, hence local and covariant tensor of the form
    $$Q_{\mu\nu}=\frac{1}{\pi^2}g_{\mu\nu}v_1(z,z),$$
    where we have replaced the value $\eta=\frac{1}{4}$ considering a quartic interaction. In other words this seems to point into the direction that, contrary to what was expected originally in the literature, it is possible to extend the approach of \cite{moretti} also to work for interacting field models, at least at the level of perturbation theory, though this requires to account for a larger set of renormalization freedoms as pointed in \cite{hollands2}. It is worth mentioning that, in this scenario, Theorem \ref{Thm: Interacting Trace Anomaly}, Equation \eqref{eq: Trace Anomaly 2nd order} in particular, would need to be modified only by adding a state independent term:
    $$Q=g^{\mu\nu}Q_{\mu\nu}=\frac{4}{\pi^2}v_1(z,z).$$
    contribution due to $Q$. This can be ascribed to the fact that the value of $\eta$ does not depend on the choice of the underlying background and, setting $\eta=\frac{1}{4}$ cancels any additional anomalous contribution.
\end{remark}

\section*{Acknowledgements}
B.C. is supported by a PhD fellowship of the University of Pavia whose support is gratefully acknowledged. C.D. acknowledges the support of the Gruppo Nazionale di Fisica Matematica. This work is based partly on the thesis of M.G., titled {\em On the Stress-Energy Tensor of a Self-Interacting Quantum Scalar Field in the pAQFT approach} submitted to the University of Pavia in partial fulfilment of the requirements for completing the master degree program in physics. In addition B.C. and C.D. are grateful to the COST Action CA23115 - Relativistic Quantum Information. We are grateful to Markus Fr\"ob for the useful discussion and for pointing out in particular reference \cite{Hathrell:1981zb}.

\vskip.2cm

\noindent\textbf{Data availability statement}. Data sharing is not applicable to this article as no new data were created or analysed in this study.

\vskip .2cm

\noindent\textbf{Conflict of interest statement.} The authors certify that they have no affiliations with or involvement in any
organization or entity with any financial interest or non-financial interest in the subject matter discussed in
this manuscript.

\addcontentsline{toc}{section}{Acknowledgements}

\begin{appendices}

\section{Microlocal analysis: Wavefront set and scaling degree}
\label{First appendix}
This appendix is devoted to the illustration of some basic notions and results in \emph{microlocal analysis}, which plays a crucial r\^ole in the rest of the paper. The following discussion is mainly based on \cite[Ch. 8]{horm1}, although specialised to the case of open subsets of $\mathbb{R}^n$. This choice is motivated by the fact that each real, smooth, $n-$dimensional manifold $M$ can be locally modelled as an open subset of $\mathbb{R}^n$. Thus, the analysis carried on in this appendix supplies the reader with all the key tools necessary for an extension to a more general scenario. We shall omit the proofs of the various results since they fall outside the scopes of this work and they can be easily found in the literature -- see \cite[Ch.1]{bar} and \cite[Chap. 6]{horm1}. \\
Microlocal analysis is concerned with the study of the singular structure of distributions, which naturally emerge in the context of quantum field theories. To wit, in the present work, we are interested in the characterisation of the singularities of the Hadamard parametrices which can be dealt with perturbative AQFT. To this avail, we shall first set the notation used throughout this work. \\
Let $U \subseteq \mathbb{R}^n$ be an open set and fix a point $x \in U$. We denote by $T_x U$ and $T^*_xU$ the tangent and cotangent space at $x$, respectively. Moreover, let $\mathring{T}_x U := T_x U \setminus \{0\}$ be \emph{punctured} tangent space at $x$ and, similarly, denote by $\mathring{T}^*_x U : = T_x^* U \setminus \{0\}$ the \emph{punctured} cotangent spaces at $x$. We define the tangent and cotangent bundles over $U$ as 
\begin{flalign*}
    TU &:= \bigsqcup_{x \in U} T_x U, \\
    T^*U &:= \bigsqcup_{x \in U} T^*_x U. 
\end{flalign*}
Smooth sections of $TU$ are called \emph{vector fields} and they shall be denoted by $\Gamma(TU)$, whilst smooth sections of $T^*U$ are named \emph{covector fields} and denoted by $\Gamma(T^*U)$. 
Furthermore, let $\mathcal{D}(U)$ be the space of test functions, \textit{i.e.}, of smooth, compactly supported functions on $U \subseteq \mathbb{R}^n$, endowed with the canonical LF-topology. 
The topological dual space of $\mathcal{D}(U)$,  denoted by $\mathcal{D}'(U)$, is the space of distributions on $U$. Similarly, we call $\mathcal{E}'(U)$ the space of distributions with compact support on $U$, that is the dual space with respect to the strong topology to the space of smooth test functions $\mathcal{E}(U)$. \\
With these premises, we can give a preliminary geometric notion necessary for the ensuing analysis, namely the definition of \emph{conic neighbourhood}. 
\begin{definition}[Conic neighbourhood]
\label{connei}
    Let $U \subseteq \mathbb{R}^n$ be open and let $x \in U$. A subset $\Gamma \subseteq \mathring{T}^*_x U$ is said to be \emph{conic} if $\forall \xi \in \Gamma$ and $t > 0$, $t\xi \in \Gamma$. Furthermore, such a subset $\Gamma$ is called \emph{conic neighbourhood} of $\xi \in \mathring{T}^*_x U$ if $\xi \in \Gamma$ and if it is a conic subset of $\mathring{T}^*_x U$.
\end{definition}
\noindent The first step towards the characterisation of the singular structure of a distribution is the study of its \emph{singular support}, \textit{i.e.}, the set of all points of $\mathbb{R}^n$ at which a given distribution fails to be locally approximated by a smooth function. To introduce this concept, we first need to define the notion of \emph{regular direction}. 
\begin{definition}[Regular Direction]
    Let $U \subseteq \mathbb{R}^n$ and let $u \in \mathcal{D}'(U)$, where $\mathcal{D}'(U)$ denotes the space of distributions over $U$. A pair $(x, \xi) \in \mathring{T}^* U$ is said to be a \emph{regular direction} of $u$ if there exist $\psi \in C^{\infty}_0(U), \psi \ne 0$, a conic neighbourhood $\Gamma$ of $\xi$ as per Definition \ref{connei}, and constants $C_n \in \mathbb{R}$, $\forall n \in \mathbb{N}$ such that 
    \begin{equation*}
        |\widehat{\psi u} (\xi)| < \frac{C_n}{1 + |\xi|^n}, \, \, \,  \forall \xi \in \Gamma, \forall n \in \mathbb{N},
    \end{equation*}
    where $\widehat{\cdot}$ denotes the Fourier transform, whilst $|\cdot|$ denotes the Euclidean norm on $\mathbb{R}^n$. We say that $\widehat{\psi u} (\xi)$ is rapidly decreasing along the direction $\xi \in \Gamma$. Moreover, the set 
    \begin{equation*}
        \Sigma(u) := \mathring{\mathbb{R}}^n \setminus \left\{ \xi \in \mathring{\mathbb{R}}^n \vert \, u \, \text{is regular in the direction of} \, \xi \right\}, 
    \end{equation*}
    is said to be the \emph{frequency set} of $u$. 
\end{definition}
\noindent At this stage, we can give one of the key definitions of this appendix. 
\begin{definition}[Singular support]
Let $U$ be an open subset of $\mathbb{R}^n$ and let $v \in \mathcal{D}'(U)$ be a distribution. The \emph{singular support} of $v$ is the set 
\begin{equation*}
    singsupp(v) := U \setminus \{ x \in U \, | \, \exists \omega \in \mathcal{N}_x, \exists \phi \in C^{\infty}(\omega) : \langle v, \phi \rangle = \langle u_{\phi}, f \rangle, \forall f \in \mathcal{D}(\omega) \}, 
\end{equation*}
where $\mathcal{N}_x$ denotes the set of all open neighbourhoods of $x \in U$, while $u_{\phi}$ is the distribution generated by $\phi$ via integration against a test function. 
\end{definition}
\begin{definition}[Singular Cone]
    Let $U \subseteq \mathbb{R}^n$ be open and let $u \in \mathcal{D}'(U)$. For $x \in U$, we define 
    \begin{equation*}
        \mathscr{S}_x(U) := \{g \in C^{\infty}_0(\mathbb{R}^n) \, | \, supp(g) \subset U, \, g(x) \ne 0\}
    \end{equation*}
    the collection of \emph{probes at $x$}. We call \emph{singular cone} of $u$ at $x$ the set
    \begin{equation*}
        \Sigma_x(u) := \bigcap_{g \in \mathscr{S}_x(U)} \Sigma (gu) \subseteq \mathring{\mathbb{R}}^n. 
    \end{equation*}
\end{definition}
\noindent The above definition introduces an auxiliary concept, necessary to present the notion of wavefront set of a distribution. 
\begin{definition}[Wavefront set]
    Let $U \subseteq \mathbb{R}^n$ be open and let $u \in \mathcal{D}'(U)$. The set 
    \begin{equation*}
        WF(u) := \bigsqcup_{x \in U} \Sigma_x(u) = \left\{ (x,\xi) \in \mathring{T}^* U \, | \, \xi \in \Sigma_x(u) \right\}, 
    \end{equation*}
    is called \emph{wavefront set} of $u$. 
\end{definition}
\begin{example}
Consider a real scalar field on a globally hyperbolic spacetime $(\mathcal{M},g)$. If we denote by $\Delta \in \mathcal{D}'(\mathcal{M} \times \mathcal{M})$ the fundamental kernel associated to the causal propagator of the theory via Schwartz kernel theorem -- see \cite[Sec. 5.2]{horm1} --, it holds that 
\begin{equation*}
    WF(\Delta) := \left \{(x,y, \xi_x, - \xi_y) \in \mathring{T}^*(\mathcal{M} \times \mathcal{M}) \, \vert \, (x,\xi_x) \sim (y,\xi_y) \right\}, 
\end{equation*}
where the symbol $\sim$ means that there exists a null geodesic $\gamma$ connecting $x$ and $y$ such that $\xi_x$ is parallel and tangent to $\gamma$ at $x$ and $\xi_y$ is the parallel transport from $x$ to $y$ along $\gamma$ of $\xi_x$. 
\end{example}
\noindent In the following, we shall state some useful properties of the wavefront set of a distribution, starting from \emph{H\"ormander's criterion}, which establishes a sufficient condition for the well-posedness of the product of two distributions. 
\begin{theorem}[H\"ormander's criterion]\label{Thm: Hormander criterion}
Let $U \subseteq \mathbb{R}^n$ be open and let $u,v \in \mathcal{D}'(U)$ be such that for all $(x, \xi) \in WF(u)$, $(x, -\xi) \notin WF(v)$. Then, there exists the pointwise product $uv \in \mathcal{D}'(U)$, defined as 
\begin{equation*}
    uv := \iota^*(u \otimes v),
\end{equation*}
where $\iota^*$ denotes the pullback along the diagonal of $U \times U$. \\
Furthermore, the wavefront set of the product distribution can be gauged with the condition
\begin{equation*}
    WF(uv) \subseteq \{(x, \xi) \in \mathring{T}^*U \, | \, \xi = \eta + \zeta \ne 0 \}, 
\end{equation*}
where $(x, \eta) \in WF(u) \cup (x,0)$, $(x, \zeta) \in WF(v) \cup (x,0)$. 
\end{theorem}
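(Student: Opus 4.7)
The plan is to establish both existence of $uv$ and the wavefront bound through a single Fourier-analytic argument applied to localized pieces of $u$ and $v$. Conceptually this is an instance of the general pullback theorem for distributions applied to the diagonal embedding $\iota : U \to U \times U$: the conormal bundle of $\iota$ consists of covectors of the form $((x,x),(\xi,-\xi))$ with $\xi \neq 0$, and the Hörmander hypothesis is precisely the requirement that $WF(u\otimes v)$ avoids this conormal bundle. Since the pullback theorem itself is proved by the same Fourier-analytic machinery, I would carry out the argument directly rather than invoking a black box.

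\textbf{Localization.} Fix an arbitrary $x_0 \in U$ and choose $\phi \in \mathcal{D}(U)$ with $\phi(x_0)=1$ and support in a small neighborhood of $x_0$. Then $\phi u, \phi v \in \mathcal{E}'(U)$, and by Paley--Wiener--Schwartz their Fourier transforms $\widehat{\phi u}, \widehat{\phi v}$ are smooth functions of polynomial growth. I would define
$$\widehat{\phi^2 uv}(\xi) := (2\pi)^{-n}\int_{\mathbb{R}^n} \widehat{\phi u}(\eta)\,\widehat{\phi v}(\xi-\eta)\,d\eta,$$
show this integral converges and defines a tempered distribution, then invert the Fourier transform to obtain $\phi^2 uv$ near $x_0$. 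A partition of unity patches these local products into a global $uv \in \mathcal{D}'(U)$, independent of $\phi$ and agreeing with $\iota^*(u\otimes v)$ by a direct check on tensor-product test functions.

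\textbf{Convergence and conic decay.} By the definition of the singular cone $\Sigma_{x_0}$, one can pick $\phi$ so that $\widehat{\phi u}$ decays rapidly outside any prescribed closed conic neighborhood $\Gamma_u$ of $\Sigma_{x_0}(u)$, and similarly a cone $\Gamma_v \supset \Sigma_{x_0}(v)$ for $v$. The hypothesis forces $(-\Gamma_u)\cap \Gamma_v = \emptyset$ if the cones are narrow enough, hence there is $c>0$ with $|\eta+\zeta| \geq c(|\eta|+|\zeta|)$ for every $\eta \in \Gamma_u$, $\zeta \in \Gamma_v$. To estimate the convolution at a direction $\xi_0$ outside the closed cone $\Gamma_u \cup \Gamma_v \cup (\Gamma_u + \Gamma_v)$, split the $\eta$-integration into $\eta \in \Gamma_u$ and $\eta \notin \Gamma_u$. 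On the complement, $\widehat{\phi u}(\eta)$ provides rapid decay that absorbs the polynomial growth of $\widehat{\phi v}(\xi-\eta)$. On $\Gamma_u$ itself, the cone-separation bound forces $\xi-\eta$ to stay away from $\Gamma_v$ for $\xi$ in a conic neighborhood of $\xi_0$, so $\widehat{\phi v}(\xi-\eta)$ is rapidly decreasing in $|\xi-\eta|$, and $|\xi-\eta|\geq c'|\eta|$ in that region transfers this decay into decay in $|\eta|$.

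\textbf{Main obstacle.} The delicate step is precisely this quantitative cone-separation estimate, which converts rapid decay of $\widehat{\phi v}$ in $|\xi-\eta|$ into rapid decay in $|\eta|$ uniformly on a conic neighborhood of $\xi_0$. It requires careful bookkeeping of conic widths and is where the Hörmander hypothesis is quantitatively used. Once in hand, the same estimates simultaneously yield absolute convergence of the defining integral (hence existence of $uv$) and rapid decay of $\widehat{\phi^2 uv}$ in every direction $\xi_0$ lying outside $\Sigma_{x_0}(u) \cup \Sigma_{x_0}(v) \cup (\Sigma_{x_0}(u)+\Sigma_{x_0}(v))$. Since $\Gamma_u, \Gamma_v$ may be shrunk arbitrarily close to $\Sigma_{x_0}(u), \Sigma_{x_0}(v)$, this translates into the stated inclusion for $WF(uv)$ by the very definition of the wavefront set.
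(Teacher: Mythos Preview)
Your proposal is correct and follows the standard Fourier-analytic argument. The paper itself does not prove this theorem but simply refers the reader to \cite[Thm.~8.2.10]{horm1}; your sketch is essentially the proof one finds there, so there is nothing further to compare.
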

\begin{proof}
    \noindent The proof of this theorem can be found in \cite[Thm. 8.2.10]{horm1}. 
\end{proof}
Other relevant properties of the wavefront set are listed in the following theorem.
\begin{theorem}
    Let $U \subseteq \mathbb{R}^n$ be an open subset. Then, 
    \begin{itemize}
        \item[1.] If $P$ is a differential operator and $u \in \mathcal{D}'(U)$, 
        \begin{equation}
            WF(Pu) \subseteq WF(u) \subseteq WF(Pu) \cup Char(P),
        \end{equation}
        where $Char(P)$ is the characteristic set of $P$, \textit{i.e.},  
        \begin{equation*}
            Char(P) := \{(x, \xi) \in \mathring{T}^*U \, | \, \sigma_P(x, \xi) = 0 \}, 
        \end{equation*}
        $\sigma_P(x, \xi)$ being the principal symbol of $P$.
    \item[2.] Consider an open subset $V \subseteq \mathbb{R}^k$, where $k$ is not necessary equal to $n$, and let $u \in \mathcal{D}'(U)$, $v \in \mathcal{D}'(V)$, then the wavefront set of the tensor product $u \otimes v$ is such that 
    \begin{flalign}
        WF(u \otimes v) \subset &(WF(u) \times WF(v)) \cup ((supp(u) \times \{0\}) \times WF(v)) \\ \notag & \cup (WF(u) \times (supp(v) \times \{0\})). 
    \end{flalign}
    \item[3.] Let $V \subseteq \mathbb{R}^k$ be open, let $K \in \mathcal{D}'(U \times V)$ and $u \in \mathcal{E}'(U)$. We set 
    \begin{equation*}
        WF'_2(K) := \{(x_2, \xi_2) \in T^*V \, \vert \, \exists x_1 \in U, (x_1, x_2, 0, -\xi_2) \in WF(K)\}. 
    \end{equation*}
    If $WF'_2(K) \cap WF(u) = \emptyset$, then $K \circledast u \in \mathcal{D}'(U)$, where $(K \circledast u) (f) := K(f \otimes u)$, for all $f \in \mathcal{D}(U)$. In addition, 
    \begin{equation}
        WF(K \circledast u) \subseteq WF_1(K) \cup WF'(K) \circ WF(u), 
    \end{equation}
    where we have set 
    \begin{flalign*}
        WF_1(K) &:= \{(x_1, \xi_1) \in \mathring{T}^* U \, \vert \, \exists x_2 \in V, (x_1, x_2, \xi_1, 0) \in WF(K)\}, \\
        WF'(K) &:= \{ (x_1, x_2, \xi_1, \xi_2) \in \mathring{T}^*(U \times U) \, \vert \, (x_1, x_2, \xi_1, -\xi_2) \in WF(K)\}, \\
        WF'(K) \circ WF(u) &:= \{(x_1, \xi_1) \in \mathring{T}^* U \, \vert \, \exists (x_2, \xi_2) \in WF(u), (x_1, x_2, \xi_1, - \xi_2) \in WF(K) \}.\\
    \end{flalign*}
    \end{itemize}
\end{theorem}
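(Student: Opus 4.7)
The plan is to treat each of the three statements separately, relying throughout on the Fourier-theoretic characterization already recalled in the excerpt: $(x_0,\xi_0) \notin WF(u)$ iff there exist $\psi \in C_c^\infty(U)$ with $\psi(x_0) \neq 0$ and a conic neighborhood $\Gamma \ni \xi_0$ on which $\widehat{\psi u}$ decays faster than any polynomial. Each item reduces to an explicit manipulation of such localized Fourier transforms, combined in item 3 with H\"ormander's product criterion (Theorem \ref{Thm: Hormander criterion}).

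For item 1, I would write $P = \sum_{|\alpha|\le m} a_\alpha(x)\,D^\alpha$ with $a_\alpha \in C^\infty(U)$. The forward inclusion $WF(Pu)\subseteq WF(u)$ follows by localizing $Pu$ near $x_0$ against a cutoff $\psi$ and applying Leibniz: $\psi Pu$ expands as a finite sum of terms $b_\beta(x)\,D^\beta(\tilde\psi u)$ with $b_\beta$ smooth and $|\beta|\le m$. Multiplication by smooth compactly supported functions preserves rapid decay of the Fourier transform on any cone, while $D^\beta$ corresponds to multiplication by $\xi^\beta$ on the Fourier side, which also preserves rapid decay. The converse inclusion modulo $\mathrm{Char}(P)$ is the classical microlocal elliptic regularity theorem: if $\sigma_P(x_0,\xi_0)\neq 0$ and $(x_0,\xi_0)\notin WF(Pu)$, one constructs microlocally near $(x_0,\xi_0)$ a pseudodifferential parametrix $Q$ satisfying $QP = I + R$ with $R$ having smooth kernel on a conic neighborhood; then $u = Q(Pu) - Ru$ exhibits $(x_0,\xi_0)\notin WF(u)$.

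For item 2, the key identity is $\widehat{(\psi_1\otimes\psi_2)(u\otimes v)}(\xi,\eta) = \widehat{\psi_1 u}(\xi)\,\widehat{\psi_2 v}(\eta)$ for cutoffs $\psi_1 \in C_c^\infty(U)$, $\psi_2 \in C_c^\infty(V)$. To show that a point $((x,y),(\xi,\eta))$ outside the stated union does not lie in $WF(u\otimes v)$, I would distinguish cases on whether $\xi$ and $\eta$ vanish. If both are nonzero, one selects cutoffs and conic neighborhoods realizing the regularity of $u$ at $(x,\xi)$ and of $v$ at $(y,\eta)$ simultaneously, and the product decays rapidly in a product cone. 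If $\xi=0$, then by hypothesis $(y,\eta)\notin WF(v)$; here $\widehat{\psi_1 u}(\xi)$ is only polynomially bounded, but the super-polynomial decay of $\widehat{\psi_2 v}(\eta)$ in a conic neighborhood of $(0,\eta_0)$ dominates it uniformly. The case $\eta=0$ is symmetric.

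For item 3, I would interpret $K\circledast u$ as the pushforward under the projection $\pi:U\times V\to U$ of the distribution $K\cdot(1_U \otimes u)$. By item 2, $WF(1_U\otimes u) = \{(x,y,0,\eta) : (y,\eta)\in WF(u)\}$, so Theorem \ref{Thm: Hormander criterion} guarantees well-posedness of the product exactly under the hypothesis $WF'_2(K)\cap WF(u)=\emptyset$, and bounds the wavefront set of the product by fiberwise sums of covectors from $WF(K)$ and $WF(1_U\otimes u)$. Projecting along $\pi$ then retains only those points whose $V$-fiber covector vanishes; splitting by whether that vanishing comes from the $K$-factor already or from cancellation against a $WF(u)$-covector yields the two terms $WF_1(K)$ and $WF'(K)\circ WF(u)$ respectively. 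The main obstacle is the elliptic regularity half of item 1: in full generality this requires the pseudodifferential symbolic calculus to construct the parametrix $Q$, whereas items 2 and 3 are essentially bookkeeping on Fourier side once the product criterion and the tensor-product formula are in place.
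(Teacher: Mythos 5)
The paper does not actually prove this theorem: its ``proof'' is a bare pointer to \cite[Thms.~8.2.9, 8.2.12, 8.2.13, 8.2.14]{horm1}, so your sketch can only be compared with the standard arguments in H\"ormander, and in outline it reproduces them faithfully: pseudolocality of $P$ via Leibniz and polynomial multipliers on the Fourier side, a microlocal elliptic parametrix for the reverse inclusion in item 1, the factorization $\widehat{(\psi_1\otimes\psi_2)(u\otimes v)}(\xi,\eta)=\widehat{\psi_1 u}(\xi)\,\widehat{\psi_2 v}(\eta)$ for item 2, and for item 3 the reading of $K\circledast u$ as $\pi_*\bigl(K\cdot(\mathbf{1}\otimes u)\bigr)$, where the hypothesis $WF'_2(K)\cap WF(u)=\emptyset$ exactly neutralizes the obstruction in Theorem \ref{Thm: Hormander criterion}, and restricting the product bound to covectors with vanishing $V$-component splits, just as you say, into the two terms $WF_1(K)$ and $WF'(K)\circ WF(u)$. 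Two remarks on item 3: you implicitly (and correctly) read the paper's $u\in\mathcal{E}'(U)$ as $u\in\mathcal{E}'(V)$, which is what $K(f\otimes u)$ requires and what makes $\mathbf{1}_U\otimes u$ properly supported over $U$; and your product-plus-pushforward route is a legitimate variant of H\"ormander's direct estimate of $K(f\otimes u)$, but it silently invokes the wavefront bound for pushforward along a projection, $WF(\pi_* w)\subseteq\{(x,\xi)\mid\exists\, y:\ (x,y,\xi,0)\in WF(w)\}$, which itself needs proof or citation.

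The one real slip is in item 2, in the case $\xi\neq 0$, $\eta\neq 0$. Lying outside the stated union there only excludes that \emph{both} $(x,\xi)\in WF(u)$ and $(y,\eta)\in WF(v)$ hold; it does not give you regularity of both factors, so ``selecting cutoffs realizing the regularity of $u$ at $(x,\xi)$ and of $v$ at $(y,\eta)$ simultaneously'' is not available in general. The repair is precisely the domination argument you already deploy when $\xi=0$: assume without loss of generality $(x,\xi)\notin WF(u)$; on a sufficiently thin cone around $(\xi_0,\eta_0)$ with $\xi_0\neq 0$ one has $|\xi|\geq c\,|(\xi,\eta)|$, so the rapid decay of $\widehat{\psi_1 u}$ in $\xi$ beats the Paley--Wiener polynomial bound on $\widehat{\psi_2 v}$, uniformly on that cone. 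With that one-line change (and the trivial subcases $x\notin\mathrm{supp}(u)$, resp.\ $y\notin\mathrm{supp}(v)$, where the cutoff annihilates the corresponding factor), your case analysis closes, and the only genuinely nontrivial ingredient left unexpanded is the one you flag yourself: the pseudodifferential symbolic calculus behind the parametrix $Q$ in the elliptic-regularity half of item 1.
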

\begin{proof}
    The proof of this theorem can be found in \cite[Thms. 8.2.9, 8.2.12, 8.2.13, 8.2.14]{horm1}. 
\end{proof}
Despite it has proven essential in the identification of singularities, the notion of wavefront set does not suffice for our aims. Indeed, the wavefront set does not distinguish between the singular structure of different distributions, such as the Dirac delta distribution and the one generated by the continuous function $f(x) :=|x|$. Therefore, we need to introduce a more refined and computationally efficient tool, \emph{i.e.}, the \emph{(Steinman) scaling degree}, which will provide a powerful criterion to determine the existence of admissible extensions of singular distributions. The following exposition is mainly modeled on \cite{brunetti}, which translates Steinman's results in a language more akin to the microlocal formalism. \\
As a first step, let us discuss how to extend a distribution which is ill-defined at a point. In the following, fixed a point $y \in \mathbb{R}^n$ and a conic open subset $U \subseteq \mathbb{R}^n$, we denote by $U_y := U + y$. Furthermore, for all $f \in \mathcal{D}(U)$ and for any $\lambda > 0$, we set $f^{\lambda}_{y} := \lambda^{-n} f(\lambda^{-1} (x-y)) \in \mathcal{D}(U_y)$. 
\begin{definition}[Scaling degree at a point]
\label{scaldeg}
In the aforementioned scenario, given $u \in \mathcal{D}'(U)$, we denote as $u^{\lambda}_y \in \mathcal{D}'(U_y)$ the distribution defined by $u^{\lambda}_y(f) := u(f^{\lambda}_y)$, for all $f \in \mathcal{D}(U)$. The \emph{scaling degree} of $u$ at $y$ is given by
\begin{equation}
    sd_y(u) := \inf \left\{ \omega \in \mathbb{R} \, \vert \, \lim_{\lambda \rightarrow 0} \lambda^{\omega} u^{\lambda}_y = 0 \right\}. 
\end{equation}
\end{definition}
\noindent Before discussing how this notion can be generalised to an embedded submanifold, let us present a notable example.
\begin{example}
    Consider the Dirac delta distribution centered at $y \in \mathbb{R}^n$, \textit{i.e.}, $\delta_y \in \mathcal{D}'(\mathbb{R}^n)$. A straightforward application of Definition \ref{scaldeg} shows that $\delta^{\lambda}_y = \lambda^{-n} \delta_y$ and, hence, $sd(\delta_{y}) = n$. 
\end{example}
\noindent We can now extend the previous definition to the case in which a distribution is ill-defined at an embedded submanifold. To this avail, we shall need a bunch of preliminary notions. 
\begin{definition}[Admissible]
    Let $M$ be a smooth, $n-$dimensional manifold and let $N \subset M$ be an embedded submanifold via the embedding map $\iota: N \hookrightarrow M$. Let $V$ be a star-shaped neighborhood of the zero section $Z(T_NM)$ of $T_N M := \bigcup_{x \in N} T_x M$. We call \emph{admissible} a map
    \begin{equation*}
        \alpha: V \rightarrow N \times M,
    \end{equation*}
    which is a diffeomorphism onto its range and satisfies the following properties: 
    \begin{itemize}
        \item[1.] $\alpha (x,0) = (x,x), x \in N$,
        \item[2.] $\alpha(TN \cap V) \subset N \times M$,
        \item[3.] $\alpha(x,v) \in \{x\} \times M, x \in N$ and $v \in T_xM$, 
        \item[4.] $d_{\nu} \alpha (x, \cdot) \vert_{v = 0} = id_{T_xM}$. 
    \end{itemize}
    Henceforth, we shall denote the collection of such maps by $\mathcal{Z}$. 
\end{definition}
\noindent In the aforementioned scenario, we shall adopt the following notation. Given $u \in \mathcal{D}'(M)$, we denote by $u^{\alpha} := (\textbf{1} \otimes u) \in \mathcal{D}'(M)$ the tensor product distribution whose action is defined on the space $\mathcal{D}_1(M) \otimes \mathcal{D}_1(M)$, where $\mathcal{D}_1(M) :=\{\psi \in \mathcal{D}(M) \, \vert \, \partial^{\alpha} f(0) = 0, \forall |\alpha| \ge 1\}$. Furthermore, for any $\lambda \in [0,1]$, $u^{\alpha}_{\lambda}(x,v) := u^{\alpha} (x, \lambda v)$ is a distribution on $\mathcal{D}_1(V)$. \\
We can now state an auxiliary definition which allows us to introduce the notion of \emph{microlocal scaling degree} at a submanifold. 
\begin{definition}[Conormal bundle]
    Let $M$ be a $n-$dimensional smooth manifold and let $N \subset M$ be an embedded submanifold with associated embedding $\iota: N \hookrightarrow M$. We call \emph{conormal bundle} to $N$ the set
    \begin{equation}
        \mathcal{N}^* N := \left \{ (x, \xi) \in T^*M \, \vert \, x \in N, \langle \xi, v \rangle = 0, \forall v \in T_x^{(M)} N \right\}, 
    \end{equation}
    where $v \in T_x^{(M)}N$ if and only if $\exists \overline{v} \in T_xN$ such that $\overline{v} = d \iota(v)$.
\end{definition}
\begin{definition}[Microlocal scaling degree]\label{Def: Microlocal scaling degree}
    Let $\alpha \in \mathcal{Z}$. A distribution $u \in \mathcal{D}'(M)$ has a \emph{microlocal scaling degree} $\mu \, sd_N^{\Gamma_0} (u, \alpha)$ at a submanifold $N$ with respect to the closed conical set $\Gamma_0 \subseteq \mathring{\mathcal{N}}^*N$ if 
    \begin{itemize}
        \item[1.] there exists $\Gamma \subset \mathring{T}^*(T_N M)$ such that $WF(u^{\alpha}_{\lambda}) \subset \Gamma$ and $\Gamma \vert_{TN} \subset \alpha^* (Z(T^*N) \times \Gamma_0)$, 
        \item[2.] it holds true that 
        \begin{equation}
            \mu \, sd^{\Gamma_0}_N (u, \alpha) = \inf \left \{ \omega \in \mathbb{R} \, \vert \, \lim_{\lambda \rightarrow 0^+} \lambda^{\omega} u^{\alpha}_{\lambda} = 0 \right \}, 
        \end{equation}
        where the limit is to be understood in the sense of the H\"ormander topology. 
    \end{itemize}
\end{definition}
\begin{remark}
    In the above definition, the choice of the closed conical neighbourhood is of key importance. 
\end{remark}
\begin{definition}
    Let $M$ be a smooth, $n-$dimensional manifold and let $N$ be an embedded submanifold. Suppose that $u \in \mathcal{D}'(M \setminus N)$. Then, for every admissible $\alpha \in \mathcal{Z}$, we define 
    \begin{equation}
        u^{\chi}_{\lambda} := (\textbf{1} \otimes \chi) \circ \alpha \cdot u^{\alpha}_{\lambda} \in \mathcal{D}'(V),
    \end{equation}
    where $\chi \in C^{\infty}(M), \, supp(\chi) \cap N = \emptyset$ and $V$ is a star-shaped neighbourhood of $Z(T_N M)$, while $\textbf{1}$ is the distribution generated by the constant function $1$. To $u$ we can associate its \emph{scaling degree at $N$} defined as follows:
    \begin{equation}
    \label{sdsub}
        sd_N(u) := \inf \left\{ \omega \in \mathbb{R} \, \vert \, \lim_{\lambda \rightarrow 0^+} \lambda^{\omega} u^{\chi}_{\lambda} = 0 \right\}. 
    \end{equation}
    Equation \eqref{sdsub} is independent of the choice of the function $\chi$. 
\end{definition}
We can now state a theorem -- see \cite[Thm. 6.9]{brunetti} -- providing a sufficient condition for the existence and uniqueness of extensions of a given distribution ill-defined at a submanifold. 
\begin{theorem}\label{Thm: Degree of divergence}
    Consider a smooth, $n-$dimensional manifold $M$ and an embedded submanifold $N$. let $u_0 \in \mathcal{D}'(M \setminus N)$. Then, 
    \begin{itemize}
        \item[1.] if $sd_n(u_0) < codim(N)$, then $\exists ! u \in \mathcal{D}'(M)$ such that 
        \begin{equation*}
            u \vert_{M \setminus N} = u_0, \, \, \, sd_N(u_0) = sd_N(u).
        \end{equation*}
        \item[2.] if $\infty > sd_N(u_0) \ge codim(N)$, then $\exists u \in \mathcal{D}'(M)$ such that 
        \begin{equation*}
            u \vert_{M \setminus N} = u_0, \, \, \, sd_N(u_0) = sd_N(u).
        \end{equation*}
        Moreover, the difference between any pair of admissible extensions $u, u'$ reads
        \begin{equation*}
            u - u' = \sum_{j=0}^{\rho_N(u)} c_j \delta^{(j)}_N, \, \, c_j \in C^{\infty}(N),
        \end{equation*}
        where $\rho_N(u) := \lfloor codim(N) - sd_N(u) \rfloor $ is the \emph{degree of divergence of $u$ at $N$}.  
    \end{itemize}
\end{theorem}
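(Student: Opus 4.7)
The plan is to localize via the tubular neighbourhood theorem, construct the extension as the limit of a cutoff-and-rescale family, and classify the ambiguity through the structure theorem for distributions supported on a submanifold. Using a partition of unity, I may reduce to the model case $M = V \times \mathbb{R}^k$, $N = V \times \{0\}$ with $k = \mathrm{codim}\,N$ and coordinates $(x,y)$ with $y$ transverse to $N$. I choose a smooth cutoff $\chi : \mathbb{R}^k \to [0,1]$ which vanishes near the origin and equals $1$ outside the unit ball, and set $\chi_\lambda(y) := \chi(y/\lambda)$. For any $f \in \mathcal{D}(M)$, the pairing $T_\lambda(f) := \langle u_0,\chi_\lambda f\rangle$ is well defined, since $\chi_\lambda f$ vanishes on a neighbourhood of $N$.

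For case (1), with $\mathrm{sd}_N(u_0) < k$, pick $\omega$ strictly between $\mathrm{sd}_N(u_0)$ and $k$ and estimate the telescoping difference $T_{2^{-j}}(f) - T_{2^{-j-1}}(f) = \langle u_0,(\chi_{2^{-j}}-\chi_{2^{-j-1}})f\rangle$, whose integrand concentrates in the transverse annulus $|y| \sim 2^{-j}$. Rewriting the test function in rescaled coordinates and invoking Definition \ref{scaldeg} gives $|T_{2^{-j}}(f) - T_{2^{-j-1}}(f)| \lesssim 2^{-j(k-\omega)}$; since $k-\omega > 0$, these bounds are summable and $\{T_{2^{-j}}(f)\}_j$ is Cauchy. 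Passing to the limit uniformly on bounded subsets of $\mathcal{D}(M)$ yields a distribution $u \in \mathcal{D}'(M)$ that extends $u_0$. Uniqueness is immediate from the structure theorem: if $u, u'$ are two admissible extensions, then $u-u'$ is supported on $N$ and equals $\sum_{|\alpha| \le m} c_\alpha \otimes \partial_y^\alpha \delta_{\mathbb{R}^k}$; each summand has scaling degree $k + |\alpha| \ge k$ at $N$, forcing all $c_\alpha$ to vanish since $\mathrm{sd}_N(u-u') \le \mathrm{sd}_N(u_0) < k$.

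For case (2), with $k \le \mathrm{sd}_N(u_0) < \infty$, the naive limit of $T_\lambda(f)$ diverges because the analogous estimate now has negative exponent. I compensate by subtracting the transverse Taylor polynomial of $f$ at $N$: set $\rho := \lfloor \mathrm{sd}_N(u_0) - k \rfloor$ (the intended reading of $\rho_N(u)$), fix $w \in \mathcal{D}(\mathbb{R}^k)$ with $w \equiv 1$ near $0$, and decompose
\[
f = T_\rho f + R_\rho f, \qquad T_\rho f(x,y) := w(y)\sum_{|\alpha| \le \rho} \frac{y^\alpha}{\alpha!}\,\partial_y^\alpha f(x,0),
\]
so that $R_\rho f$ vanishes at $N$ to order $\rho + 1$. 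The extension is defined by $\langle u, f\rangle := \lim_{\lambda \to 0} \langle u_0,\chi_\lambda R_\rho f\rangle$; the extra vanishing contributes a factor $\lambda^{\rho + 1}$ in the rescaled estimate, pushing the telescoping exponent to $k - \omega + \rho + 1$, which is strictly positive for $\omega$ chosen slightly above $\mathrm{sd}_N(u_0)$ since $\rho+1 > \mathrm{sd}_N(u_0) - k$. The Cauchy argument of case (1) then goes through verbatim.

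To classify the ambiguity, if $u, u'$ are two extensions with $\mathrm{sd}_N(u) = \mathrm{sd}_N(u') = \mathrm{sd}_N(u_0)$, then $d := u - u'$ is supported on $N$ and the structure theorem gives $d = \sum_\alpha c_\alpha \otimes \partial_y^\alpha \delta_{\mathbb{R}^k}$. The summand indexed by $\alpha$ has scaling degree $k + |\alpha|$, and the constraint $k + |\alpha| \le \mathrm{sd}_N(u_0)$ forces $|\alpha| \le \rho$, while a standard regularity argument along $V$ promotes each $c_\alpha$ to a smooth function on $N$, yielding the stated form of the ambiguity. The main anticipated obstacle is the verification that the constructed extension preserves the scaling degree, i.e.\ $\mathrm{sd}_N(u) = \mathrm{sd}_N(u_0)$, and that $u$ is invariant modulo the ambiguity under different choices of the cutoffs $\chi$ and $w$. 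This requires careful uniformity of the scaling estimate along the entire rescaling family $\chi_\lambda R_\rho f$ and explicit tracking of how the Taylor projection transforms under changes of $w$; this bookkeeping is the most technical part of the argument, while everything else is a relatively routine application of the scaling-degree definition plus the structure theorem.
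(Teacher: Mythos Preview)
The paper does not actually prove this theorem: it is stated in the appendix as background material and attributed to \cite[Thm.~6.9]{brunetti} without any argument. Your proposal is the standard Brunetti--Fredenhagen proof (tubular localization, the cutoff sequence $\chi_\lambda$ with a telescoping Cauchy estimate, transverse Taylor subtraction in the divergent case, and the structure theorem for distributions supported on $N$ to parametrize the ambiguity), so it matches the intended reference rather than anything in the present paper. One minor point worth flagging: the formula $\rho_N(u) := \lfloor \mathrm{codim}(N) - sd_N(u) \rfloor$ in the statement has the wrong sign in case~2 (it would be negative); you have silently corrected this to $\lfloor sd_N(u_0) - k \rfloor$, which is the right reading and consistent with the cited source.
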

\end{appendices}

\end{document}